\let\originalleft\left
\let\originalright\right
\renewcommand{\left}{\mathopen{}\mathclose\bgroup\originalleft}
\renewcommand{\right}{\aftergroup\egroup\originalright}
\theoremstyle{plain}
\newtheorem{thm}{Theorem}[section]
\newtheorem{lem}[thm]{Lemma}
\newtheorem{cor}[thm]{Corollary}
\newtheorem{prp}[thm]{Proposition}
\newtheorem{clm}[thm]{Claim}
\newtheorem*{rst*}{Reminder}
\crefname{lem}{Lemma}{Lemmas}
\crefname{thm}{Theorem}{Theorems}
\crefname{cor}{Corollary}{Corollaries}
\theoremstyle{definition}
\newtheorem{dfn}[thm]{Definition}
\newtheorem{prb}[thm]{Problem}
\crefname{prb}{Problem}{Problems}
\newtheorem{rmk}[thm]{Remark}
\theoremstyle{remark}
\newtheorem*{rmk*}{Remark}
\newcommand*{\C}{\ensuremath{\mathbb{C}}}
\newcommand*{\N}{\ensuremath{\mathbb{N}}}
\newcommand*{\R}{\ensuremath{\mathbb{R}}}
\newcommand*{\bra}[1]{\ensuremath{\langle #1|}}
\newcommand*{\ket}[1]{\ensuremath{|#1\rangle}}
\newcommand*{\ip}[2]{\langle #1 | #2 \rangle}
\newcommand*{\ketbra}[2]{|#1\rangle\!\langle #2|}
\newcommand*{\kb}[1]{\ketbra{#1}{#1}}
\newcommand*{\acz}{$\mathrm{AC}^0$\xspace}
\newcommand*{\qacz}{$\mathrm{QAC}^0$\xspace}
\newcommand*{\qacf}{$\mathrm{QAC}_{\mathrm f}$\xspace}
\newcommand*{\qaczf}{$\mathrm{QAC}^0_{\mathrm f}$\xspace}
\newcommand*{\iv}{\phi} 
\newcommand*{\aov}{\varphi} 
\newcommand*{\dov}{\psi} 
\newcommand*{\cv}{\theta} 
\newcommand*{\orctrl}{\ctrl}
\newcommand*{\adj}[1]{#1^\dagger}
\DeclareMathOperator*{\argmin}{argmin}
\newcommand*{\bern}{\mathrm{Bernoulli}}
\newcommand*{\bits}{\{0,1\}}
\newcommand*{\eps}{\varepsilon}
\newcommand*{\Ex}{\mathbb{E}}
\newcommand*{\gauss}{\mathcal G}
\newcommand*{\Hi}{\mathcal H}
\newcommand*{\Ind}[1]{\mathbbm1_{#1}}
\newcommand*{\norm}[1]{\|#1\|}
\newcommand*{\Norm}[1]{\left\|#1\right\|}
\newcommand*{\poly}{\mathrm{poly}}
\newcommand*{\rt}[1]{R_{#1}}
\newcommand*{\rtt}{$\rt\otimes$\xspace}
\newcommand*{\sscat}{\text{\scriptsize\Cat}}
\newcommand*{\zs}{0\dotsc0}
\DeclarePairedDelimiter{\ceil}{\lceil}{\rceil}
\title{Bounds on the \qacz Complexity of Approximating Parity}
\author{Gregory Rosenthal\thanks{Email: \href{mailto:rosenthal@cs.toronto.edu}{\color{black}\texttt{rosenthal@cs.toronto.edu}}. Supported by NSERC (PGS D).} \\ University of Toronto}
\date{\vspace{-5ex}}
\begin{document}
\begin{NoHyper}
	\maketitle
\end{NoHyper}
\begin{abstract}
	QAC circuits are quantum circuits with one-qubit gates and Toffoli gates of arbitrary arity. \qacz circuits are QAC circuits of constant depth, and are quantum analogues of \acz circuits. We prove the following:
	\begin{itemize}[leftmargin=*]
		\item For all $d \ge 7$ and $\eps>0$ there is a depth-$d$ QAC circuit of size $\exp(\poly(n^{1/d}) \log(n/\eps))$ that approximates the $n$-qubit parity function to within error $\eps$ on worst-case quantum inputs. Previously it was unknown whether QAC circuits of sublogarithmic depth could approximate parity regardless of size.
		\item We introduce a class of ``mostly classical" QAC circuits, including a major component of our circuit from the above upper bound, and prove a tight lower bound on the size of low-depth, mostly classical QAC circuits that approximate this component.		
		\item Arbitrary depth-$d$ QAC circuits require at least $\Omega(n/d)$ multi-qubit gates to achieve a $1/2 + \exp(-o(n/d))$ approximation of parity. When $d = \Theta(\log n)$ this nearly matches an easy $O(n)$ size upper bound for computing parity exactly.
		\item QAC circuits with at most two layers of multi-qubit gates cannot achieve a $1/2 + \exp(-o(n))$ approximation of parity, even non-cleanly. Previously it was known only that such circuits could not cleanly compute parity exactly for sufficiently large $n$.
	\end{itemize}
	The proofs use a new normal form for quantum circuits which may be of independent interest, and are based on reductions to the problem of constructing certain generalizations of the cat state which we name ``nekomata" after an analogous cat y\=okai.
\end{abstract}
\newpage
\tableofcontents
\section{Introduction}\label{intro}
\subsection{Background}\label{back}

A central problem in computational complexity theory is to prove lower bounds on the nonuniform circuit size required to compute explicit boolean functions. Since this appears to be out of reach given current techniques, research in circuit complexity has instead focused on proving lower bounds in restricted circuit classes. There are now many known lower bounds in \emph{classical} circuit complexity, as well as in quantum \emph{query} complexity, but comparatively few lower bounds are known in quantum circuit complexity, which is the subject of the current paper.

The study of quantum circuit complexity was initiated in large part by Green, Homer, Moore and Pollett~\cite{Gre+02}, who defined quantum analogues of a number of classical circuit classes. One of the seemingly most restrictive quantum circuit classes that they defined is the class of \qacz circuits, consisting of constant-depth QAC circuits, where QAC circuits are quantum circuits with arbitrary one-qubit gates and generalized Toffoli gates of arbitrary arity. (More precisely, $(n+1)$-ary generalized Toffoli gates are defined by $\ket{x,b} \mapsto \ket{x, b \oplus \bigwedge_{j=1}^n x_j}$ for $x = (x_1, \dotsc, x_n) \in \bits^n, b \in \bits$.) This is analogous to the classical circuit class of \acz circuits, consisting of constant-depth AC circuits, where AC circuits are boolean circuits with NOT gates and unbounded-fanin AND and OR gates. Low-depth circuits are a model of fast parallel computation, and this is especially important for quantum circuits, because quantum computations need to be fast relative to the decoherence time of the qubits in order to avoid error.

One difference between AC and QAC circuits is that AC circuits are allowed fanout ``for free", i.e.\ the input bits to the circuit and the outputs of gates may all be used as inputs to arbitrarily many gates. The quantum analogue of this would be to compute the unitary ``fanout" transformation $U_F$, defined by $U_F \ket{b, x_1, \dotsc, x_{n-1}} = \ket{b, x_1 \oplus b, \dotsc, x_{n-1} \oplus b}$ for $b, x_1, \dotsc, x_{n-1} \in \bits$, or at least to compute this in the case that we call ``restricted fanout" in which $x_1 = \dotsb = x_{n-1} = 0$. \qacz circuits with fanout gates are called \qaczf circuits, and can simulate arbitrary \acz circuits by using ancillae and restricted fanout to make as many copies as needed of the input bits and of the outputs of gates. In fact, \qaczf circuits are \emph{strictly} more powerful than \acz circuits, because \qaczf circuits (even without generalized Toffoli gates) of polynomial size can also compute threshold functions~\cite{HS05,TT16} whereas \acz circuits require exponential size to do so~\cite{Has86}. In contrast, little is known about the power of \qacz circuits and how it compares with that of \acz circuits.

Green et al.~\cite{Gre+02} observed that fanout can be computed by QAC circuits of logarithmic depth and linear size. This raises the question of whether QAC circuits of \emph{sub}logarithmic depth can compute fanout, or at least restricted fanout, even if allowed \emph{arbitrary} size. The same question can be asked about parity, which is a famous example of a function that requires exponential size to compute in \acz~\cite{Has86}, and which is defined for quantum circuits as the unitary transformation $U_\oplus$ such that $U_\oplus \ket{b,x} = \ket{b \oplus \bigoplus_{j=1}^{n-1} x_j, x}$ for $b \in \bits, x = (x_1, \dotsc, x_{n-1}) \in \bits^{n-1}$. In fact, all of these questions are equivalent: Green et al.~\cite{Gre+02} proved that parity and fanout are equivalent up to conjugation by Hadamard gates, and that they reduce to restricted fanout with negligible blowups in size and depth (\cref{rfc}).

\begin{figure*}
	\centerline{
		\begin{quantikz}[row sep = 3.7mm]
			& \gate H & \targ{} \gategroup[4, steps=3, style=densely dashed]{\sc parity} & \targ{} & \targ{} & \gate H & \qw \\
			& \gate H & \ctrl{-1} & \qw & \qw & \gate H & \qw \\
			& \gate H & \qw & \ctrl{-2} & \qw & \gate H & \qw \\
			& \gate H & \qw & \qw & \ctrl{-3} & \gate H & \qw
		\end{quantikz} =
		\begin{quantikz}[row sep = 3.7mm, column sep = 2mm]
			& \gate H & \targ{} & \gate H & \gate H & \targ{} & \gate H & \gate H & \targ{} & \gate H & \qw \\
			& \gate H & \ctrl{-1} & \gate H & \qw & \qw & \qw & \qw & \qw & \qw & \qw \\
			& \qw & \qw & \qw & \gate H & \ctrl{-2} & \gate H & \qw & \qw & \qw & \qw \\
			& \qw & \qw & \qw & \qw & \qw & \qw & \gate H & \ctrl{-3} & \gate H & \qw
		\end{quantikz} =
		\begin{quantikz}[row sep = 3.7mm]
			& \ctrl{1} \gategroup[4, steps=3, style=densely dashed]{\sc fanout} & \ctrl{2} & \ctrl{3} & \ghost H \qw \\
			& \targ{} & \qw & \qw & \ghost H \qw \\
			& \qw & \targ{} & \qw & \ghost H \qw \\
			& \qw & \qw & \targ{} & \ghost H \qw
	\end{quantikz}}
\end{figure*}

\begin{SCfigure}[5]
	\centering
	\begin{quantikz}
		\lstick[wires=3]{\ket x} & \qw & \targ{} &[-5mm] \qw &[-5mm] \qw & \qw & \rstick[wires=3]{$\bigotimes_j \ket{x_j \oplus b}$} \qw \\[-4mm]
		& \qw & \qw & \targ{} & \qw & \qw & \qw \\[-4mm]
		& \qw & \qw & \qw & \targ{} & \qw & \qw \\
		\lstick{\ket b} & \gate[wires=3]{C^{\ }} & \ctrl{-3} & \qw & \qw & \gate[wires=3]{C^\dagger} & \rstick{\ket b} \qw \\
		\lstick[wires=3]{\ket{\zs}} & \qw & \qw & \ctrl{-3} & \qw & & \rstick[wires=3]{\ket{\zs}} \qw \\[-4mm]
		& & \qw & \qw & \ctrl{-3} & & \qw
	\end{quantikz}
	\caption{If $C$ computes restricted fanout, then the bottom circuit computes fanout. (The output label assumes that $\ket{x,b}$ is a standard basis state.)}
	\label{rfc}
\end{SCfigure}
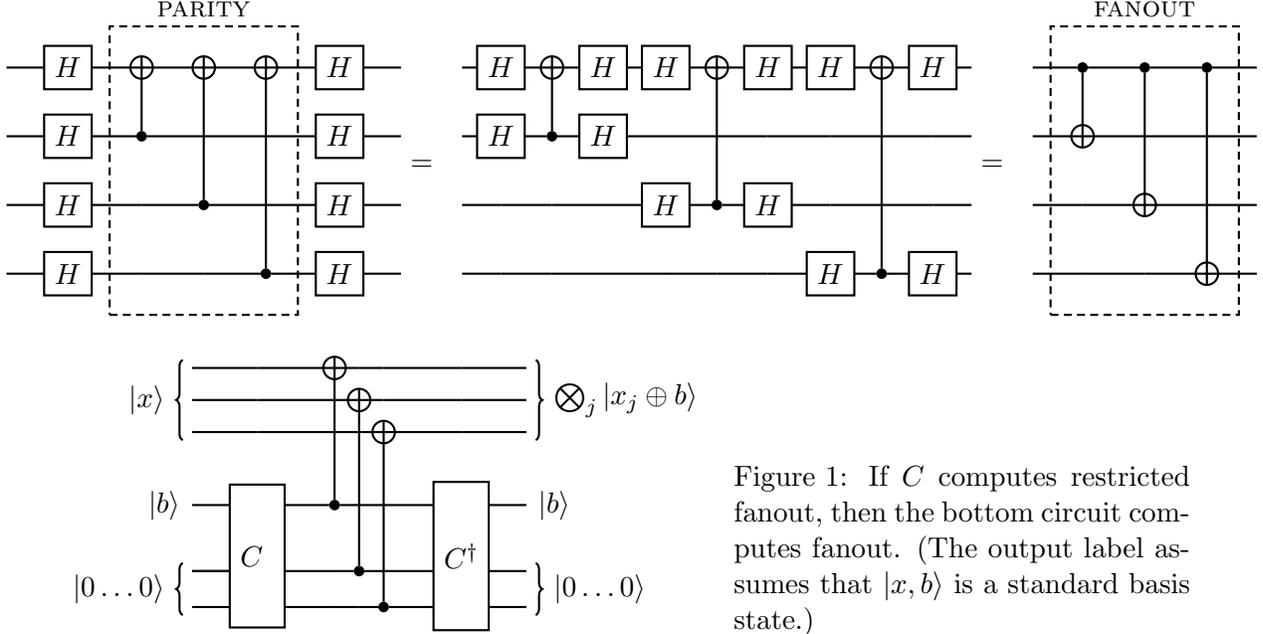

Recent work~\cite{Gok+20,Guo+20} suggests that \qaczf may be a physically realistic model of constant depth computation in certain quantum computing architectures (such as ion traps). As for QAC lower bounds, Fang, Fenner, Green, Homer and Zhang~\cite{Fan+06} proved that QAC circuits with $a$ ancillae require depth at least $\Omega(\log(n/(a+1)))$ to compute the $n$-qubit parity and fanout functions, which is a nontrivial lower bound when $a$ is $o(n)$. Bera~\cite{Ber11} used a different approach to prove something slightly weaker than the $a=0$ case of this result. Finally, Pad\'e, Fenner, Grier and Thierauf~\cite{Pad+20} proved that QAC circuits with two layers of generalized Toffoli gates cannot cleanly\footnote{A clean computation is one in which the ancillae end in the all-zeros state.} compute 4-qubit parity or fanout, regardless of the number of ancillae. A survey of Bera, Green and Homer~\cite{BGH07} discusses some of the aforementioned QAC lower bounds and \qacf upper bounds in greater detail.

\subsection{Results and Selected Proof Overviews}\label{res-intro}
\subsubsection{Definitions of Complexity Measures}

Call $|\ip\dov\aov|^2$ the \emph{fidelity} of states $\ket\aov$ and $\ket\dov$. We define the \emph{size} of a QAC circuit to be the number of multi-qubit gates in it, and the \emph{depth} of a QAC circuit to be the number of layers of multi-qubit gates in it. One motivation for not counting single-qubit gates, besides mathematical convenience, is that size and depth can be interpreted as measures of the reliability and computation time of a quantum circuit respectively, and in practice multi-qubit gates tend to be less reliable and take more time to apply as compared to single-qubit gates.

\subsubsection{Reductions to and from Constructing Nekomata} \label{pep-intro}

Recall that Green et al.~\cite{Gre+02} proved that parity, fanout, and restricted fanout are all equivalent up to low-complexity QAC reductions. In \cref{pep} we make the more general observation that clean approximate and non-clean approximate versions of these problems are all equivalent in this sense. For brevity's sake, here in \cref{res-intro} we will only state immediate corollaries of these reductions insofar as they relate to our other results.

We also introduce another problem equivalent to parity, which all of our results about parity and fanout are proved via reductions to. The state $\frac{1}{\sqrt 2} \sum_{b=0}^1 \ket{b^n}$ is commonly called the \emph{cat state} on $n$ qubits, and we denote it by $\ket{\Cat_n}$. More generally, call a state $\ket\nu$ an \emph{$n$-nekomata} if $\ket\nu = \frac{1}{\sqrt 2} \sum_{b=0}^1 \ket{b^n, \dov_b}$ for some states $\ket{\dov_0}, \ket{\dov_1}$ on any number of qubits (the word ``nekomata" is also the name of two-tailed cats from Chinese and Japanese folklore), or equivalently if a standard-basis measurement of some $n$ qubits of $\ket\nu$ outputs all-zeros and all-ones each with probability 1/2.

Call a QAC circuit $C$ acting on any number of qubits a solution to the ``$p$-approximate $n$-nekomata problem" if there exists an $n$-nekomata $\ket\nu$ such that $C\ket{\zs}$ and $\ket\nu$ have fidelity at least $p$. (There is no need to allow ``ancillae" in this problem, because if $\ket\nu$ is an $n$-nekomata then so is $\ket{\nu,\dov}$ for any state $\ket\dov$.) Note that the identity circuit on $n$ or more qubits trivially solves the 1/2-approximate $n$-nekomata problem. In informal discussions we will often say that a circuit ``constructs an approximate $n$-nekomata" if it solves the $p$-approximate $n$-nekomata problem for some fixed $p \in (1/2,1)$, say $p = 3/4$.

Constructing nekomata reduces to computing restricted fanout because $\ket{\Cat_n} = U_F(H \otimes I)\ket{0^n}$. Our reduction from parity to constructing nekomata is a variant of Green et al.'s~\cite{Gre+02} reduction from parity to restricted fanout.

\subsubsection{Upper Bounds}

\begin{thm}\label{nek-ub-d2}
	For all $\eps>0$ there exists a depth-2 QAC circuit $C$ such that for some $n$-nekomata $\ket\nu$, the fidelity of $C\ket{\zs}$ and $\ket\nu$ is at least $1-\eps$. Furthermore, the size of $C$ and the number of qubits acted on by $C$ are both $\exp(O(n\log(n/\eps)))$.
\end{thm}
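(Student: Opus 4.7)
The plan is to explicitly construct a depth-$2$ QAC circuit on $\exp(O(n\log(n/\eps)))$ qubits, with $\exp(O(n\log(n/\eps)))$ multi-qubit gates, whose output has fidelity at least $1-\eps$ with some $n$-nekomata. Since the ancilla parts $\ket{\dov_0}$ and $\ket{\dov_1}$ of a nekomata are unconstrained, I would aim for a target state of the form $\frac{1}{\sqrt 2}(\ket{0^n,u_0}+\ket{1^n,u_1})$ where $\ket{u_0},\ket{u_1}$ are specific product-like states on a large ancilla register, chosen to make the construction tractable.

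I would decompose the circuit into its four natural ``time slices'': first, an initial single-qubit layer preparing a ``control'' qubit in $\ket +$ together with many ancillae in tailored product superpositions; second, a layer of pairwise-disjoint generalized Toffolis anchored by one very-large-arity Toffoli that couples the control qubit to a huge set of ancillae; third, a middle single-qubit layer applying compensating rotations; and fourth, a second Toffoli layer of many disjoint gates that transfers the right amplitudes onto the $n$ output qubits. The fidelity with a nekomata would be established by direct amplitude bookkeeping on the $\{\ket{0^n},\ket{1^n}\}$-supported part of the output on the first $n$ qubits, and the ancilla count would be chosen as $(n/\eps)^{\Theta(n)}$ to drive the error below $\eps$.

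The main obstacle is the extreme stringency of depth $2$: each qubit participates in at most two multi-qubit gates total, and gates in a single layer must have pairwise disjoint supports, so any ``control'' qubit appears in at most one Toffoli per layer. Direct fanout of a single qubit to $n$ targets via $n$ CNOTs would need depth $n$. To emulate fanout-like behavior in depth $2$, the construction must use Toffolis of huge arity and rely on constructive interference between amplitudes contributed by ancillae initialized in $\ket +$-like superpositions. Pinning down the precise initial rotations, Toffoli supports, and intermediate single-qubit rotations that yield fidelity $\ge 1-\eps$ (rather than merely a much weaker bound) is the technical heart of the argument, and I expect the amplitude analysis to require careful tracking of how the ``bad'' branches of the superposition get relocated into the ancilla register by the second Toffoli layer.
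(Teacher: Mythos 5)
Your high-level frame --- depth $2$, $(n/\eps)^{\Theta(n)}$ qubits, output landing on \emph{some} nekomata rather than $\ket{\Cat_n}$, and the observation that the hard constraint is the two-gates-per-qubit budget --- is on the right track. But the mechanism you propose does not work and is genuinely different from the paper's. You center the construction on one control qubit in $\ket+$ and a single ``anchor'' Toffoli, with interference doing the fanout. A single $\rtt$ or Toffoli gate on $c$ and $m$ ancillae, applied to $\ket{+,0^m}$ (or any mono-product initialization), produces a state of the form $\tfrac{1}{\sqrt2}\ket{0}\ket{0^m} + \ket{1}\bigl(\tfrac{1}{\sqrt2}\ket{0^m} - \text{(small perturbation)}\bigr)$; the $c=1$ branch still has the ancillae concentrated near $\ket{0^m}$, so the correlation needed for a nekomata does not appear. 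There is no choice of one-qubit rotations or single-Toffoli coupling that fixes this, because the issue is structural, not an interference cancellation you can tune away. Your sketch also never explains what ``constructive interference'' concretely gives you, and you acknowledge the hard part is unsolved.

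The paper's construction (depth-$2$ case of \cref{nek-ub}) has \emph{no} distinguished control qubit and is analyzed probabilistically, not by amplitude interference. It arranges $n(M+1)$ qubits in an $n\times(M+1)$ grid, designates one column as the targets, and in layer~1 applies to each of the $M$ ancilla columns a single \rtt gate $R_{(\sqrt{\delta}\ket0 + \sqrt{1-\delta}\ket1)^{\otimes n}}$. Via \cref{d1-distr}, a standard-basis measurement of each such column gives $0^n$ with probability $(1-2\delta^n)^2$, gives $1^n$ with probability $4\delta^n(1-\delta)^n$, and gives anything else with probability $O(n\delta^{n+1})$. Choosing $\delta$ so that $(1-2\delta^n)^{2M}=\tfrac12$ makes the all-ancillae-$0$ event have probability exactly $\tfrac12$, and choosing $M = \lceil(\ln 2/4)(\ln(2)n/\eps')^n\rceil$ with $\eps'=(2/3)\eps$ drives the ``some column is neither $0^n$ nor $1^n$'' event below $\eps'$ by a union bound. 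Layer~2 is then $n$ disjoint $(M{+}1)$-ary OR gates, one per row, targeting the designated column. The decisive simplification is that standard-basis measurements commute with generalized Toffoli/OR gates, so the analysis reduces to a classical probability computation on the measured ancillae; \cref{eg-eps-nek} then converts the resulting statistics (targets all-zeros with probability exactly $\tfrac12$, all-ones with probability at least $\tfrac12-(2/3)\eps$) into the fidelity bound $\ge 1-\eps$ against the nekomata obtained by renormalizing the $\kb{0^n}$ and $\kb{1^n}$ components of the output. So the ideas you are missing are: (i) spread the ``randomness'' over $M$ independent $n$-qubit columns each hit by one \rtt gate, rather than routing it through one control qubit; (ii) aggregate with $n$ OR gates acting across rows; and (iii) replace interference bookkeeping by the commuting-measurements reduction together with \cref{eg-eps-nek}.
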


To state a stronger upper bound for approximating unitary transformations than can conveniently be done in terms of fidelity, call $1-\norm{\ket\aov - \ket\dov}_2^2$ the \emph{phase-dependent fidelity} of states $\ket\aov$ and $\ket\dov$. This quantity is at most the fidelity of $\ket\aov$ and $\ket\dov$ (\cref{fid-pdfid}).

\begin{cor}\label{par-ub}
	For all $d \ge 7$ and $\eps > 0$ there exist depth-$d$ QAC circuits $C_\oplus, C_F, C_{\sscat}$ of size and number of ancillae $\exp(\poly(n^{1/d}) \log(n/\eps))$, where the $\poly(n^{1/d})$ term is at most $O(n)$, such that for all $n$-qubit states $\ket\iv$,
	\begin{itemize}[leftmargin=*]
		\renewcommand\labelitemi{--}
		\item the phase-dependent fidelity of $C_\oplus \ket{\iv,0\dots 0}$ and $U_\oplus \ket\iv \otimes \ket{\zs}$ is at least $1-\eps$;
		\item the phase-dependent fidelity of $C_F \ket{\iv,\zs}$ and $U_F \ket\iv \otimes \ket{\zs}$ is at least $1-\eps$;
		\item the phase-dependent fidelity of $C_{\sscat}\ket{\zs}$ and $\ket{\Cat_n,\zs}$ is at least $1-\eps$.
	\end{itemize}
\end{cor}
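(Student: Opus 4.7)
The plan is to unify all three bullets by reducing each to the single task of constructing an approximate $n$-nekomata and then building such a circuit recursively on top of \cref{nek-ub-d2}. The third bullet is nearly immediate: the cat state $\ket{\Cat_n}$ is itself an $n$-nekomata (take $\ket{\dov_0}=\ket{\dov_1}$ to be the empty state), so a clean nekomata construction already yields the desired $C_{\sscat}$. For the first two bullets I would invoke the equivalences alluded to in \cref{pep-intro} and formalized in \cref{pep}: approximately computing $U_F$ on an arbitrary input reduces to constructing a nekomata at a cost of only $O(1)$ additional depth and $O(n)$ additional size, and approximately computing $U_\oplus$ then reduces to $U_F$ by Hadamard conjugation on every qubit, as illustrated in the figure preceding \cref{rfc}.

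For the nekomata construction itself I would set $k = n^{\Theta(1/d)}$ and proceed by a tree-like recursion. The base case is \cref{nek-ub-d2}, which produces an approximate $k$-nekomata in depth $2$ and size $\exp(O(k\log(k/\eps')))$. Given an approximate $m$-nekomata $\tfrac{1}{\sqrt 2}\sum_{b\in\bits}\ket{b^m,\dov_b}$, I would expand it to an approximate $mk$-nekomata by running, in parallel and controlled on each of the $m$ ``leaf'' qubits, a \cref{nek-ub-d2}-style sub-circuit on $k-1$ fresh ancillae whose effect is $\ket{\zs}\mapsto\ket{b^{k-1}}$ when the controlling leaf qubit equals $b$. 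Iterating the expansion $\ell$ times yields an approximate $k^{\ell+1}$-nekomata in depth $2+O(\ell)$, so taking $\ell = \Theta(d)$ gives an $n$-nekomata in depth $O(d)$; absorbing the constants by adjusting the hidden constant in $k = n^{\Theta(1/d)}$ brings the depth down to $d$ for all $d \ge 7$.

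The technical subtlety is that each expansion step appears to demand fanout-like operations, which are not native primitives. The resolution is the Hadamard-conjugation observation of Green et al.: appending a control line to each generalized Toffoli in the \cref{nek-ub-d2} sub-circuit, and inserting Hadamard layers in the spirit of the figure preceding \cref{rfc}, realizes the controlled $\ket{\zs}\to\ket{b^{k-1}}$ maps using only generalized Toffolis plus single-qubit gates, with at most a constant number of additional multi-qubit layers per expansion level.

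The main obstacle will be the error analysis, since at each level the $m$ expansions act in parallel on disjoint registers, so infidelities add across branches and compound across the $\Theta(d)$ levels. To reach final error $\eps$, I would demand per-sub-construction error $\eps'' = \Theta(\eps/n)$; this inflates the base size by a $\log(n/\eps)$ factor and yields, across the whole tree, the claimed size bound $\exp(\poly(n^{1/d})\log(n/\eps))$ with the $\poly(n^{1/d})$ factor at most $O(n)$. Finally, passing these fidelity bounds back through the reductions of \cref{pep}, using the inequality between fidelity and phase-dependent fidelity recorded immediately before the corollary statement, produces the three phase-dependent fidelity conclusions.
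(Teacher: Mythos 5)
Your proposal has the right high-level shape but two of the key steps are wrong or unjustified, and a third piece is missing.

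First, the claim that the third bullet is ``nearly immediate'' from the nekomata construction conflates two different problems. The $p$-approximate $n$-nekomata problem only requires $C\ket{\zs}$ to have fidelity at least $p$ with \emph{some} $n$-nekomata; the output of the \cref{nek-ub-d2} circuit is a nekomata whose targets are heavily entangled with exponentially many ancilla qubits (i.e.\ $\ket{\dov_0}\ne\ket{\dov_1}$), so its fidelity with $\ket{\Cat_n,\zs}$ is far from 1 --- indeed \cref{d2-lb-cat} rules out any such depth-2 circuit. Moreover the corollary asks for phase-dependent fidelity, which is in general smaller than fidelity. To obtain ``clean cat'' one must traverse the flowchart of \cref{flowchart} through clean parity, clean fanout and clean restricted fanout; there is no direct arrow from ``nekomata'' to ``clean cat.''

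Second, and more structurally, the expansion step is where the argument actually has to work, and the justification offered does not hold. The controlled $\ket{\zs}\mapsto\ket{b^{k-1}}$ map is precisely $k$-qubit clean restricted fanout, whereas the \cref{nek-ub-d2} circuit computes nothing clean: it leaves its target qubits entangled with garbage. Adding control wires and Hadamard conjugations to that circuit does not uncompute the garbage. The genuine reduction from a depth-$d$ nekomata circuit to clean parity (hence to clean restricted fanout), given in \cref{red} via the circuit in \cref{pc}, applies $C$ and $\adj{C}$ a total of four times with interspersed layers and has depth $4d+3$; for $d=2$ that is $11$, not $O(1)$ extra layers. Once this is corrected your recursion has depth on the order of $11\ell$ rather than $2+O(\ell)$, and you essentially recover the paper's argument --- except the paper phrases the recursion more cleanly as an \emph{exact} classical tree of small-arity restricted-fanout gates (\cref{srrf}), each substituted by a depth-11 approximate clean restricted fanout circuit, so the error analysis is localized to the substitution step rather than compounding across a nekomata-expansion tree.

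Finally, even with the corrected per-level depth the minimum achievable depth is $11$, whereas the corollary promises $d\ge 7$. The range $7\le d\le 10$ requires the circuit-specific simplification carried out in \cref{d7}, which your proposal does not supply.
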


The $d=11$ case of \cref{par-ub} follows immediately from \cref{nek-ub-d2} and our reduction from parity to constructing nekomata. We decrease the minimum depth from 11 to 7 using an optimization specific to the circuit from our proof of \cref{nek-ub-d2}. We prove \cref{par-ub} for higher depths using the fact that $n$-qubit restricted fanout can be computed by a circuit consisting of $d$ layers of $n^{1/d}$-qubit restricted fanout gates.

If we were to also count one-qubit gates toward size and depth, then statements similar to \cref{nek-ub-d2,par-ub} would still hold, because without loss of generality a depth-$d$ QAC circuit acting on $m$ qubits has at most $d+1$ layers of one-qubit gates and at most $(d+1)m$ one-qubit gates.

\subsubsection{Tight Lower Bounds for Constructing Approximate Nekomata in \texorpdfstring{``}{"}Mostly Classical" Circuits} \label{mc-intro}

Call a QAC circuit \emph{mostly classical} if it can be written as $CLM\adj L$ (i.e.\ $C$ is applied last) such that $C$ consists only of generalized Toffoli gates, $L$ is a layer of one-qubit gates, and $M$ is a layer of generalized Toffoli gates. The circuit $C$ here is a close analogue of (classical) AC circuits with bounded fanout, since generalized Toffoli gates can simulate classical AND and NOT gates. The following is apparent from our proof of \cref{nek-ub-d2}:

\begin{rmk}\label{nek-ub-d2-mc}
	\cref{nek-ub-d2} remains true even if ``QAC circuit" is replaced by ``mostly classical QAC circuit".
\end{rmk}

Motivated by \cref{nek-ub-d2-mc}, we prove the following lower bound for constructing approximate nekomata in mostly classical circuits:

\begin{thm}\label{mc-nek-lb-intro}
	Let $C$ be a mostly classical circuit of size $s$ and depth $o(\log n)$, acting on any number of qubits. Then for all $n$-nekomata $\ket\nu$, the fidelity of $C\ket{\zs}$ and $\ket\nu$ is at most
	\begin{equation*}
	\frac{1}2 + 
	\exp\left(-\frac{n^{1-o(1)}}{\max\left(\log s, \sqrt{n}\right)}\right).
	\end{equation*}
\end{thm}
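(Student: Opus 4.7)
The plan is to reduce the fidelity bound to a purely classical anti-concentration statement about the output distribution of the circuit, and then exploit the shallowness of the Toffoli-only part $C$ via a light-cone / independence argument.

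First, I would reduce fidelity to a classical probability question. Any $n$-nekomata can (after reordering qubits) be written as $\ket\nu = \tfrac{1}{\sqrt 2}(\ket{0^n}_S\ket{\dov_0}_{S^c} + \ket{1^n}_S\ket{\dov_1}_{S^c})$ with the cat qubits forming some $n$-subset $S$. Setting $\ket\Psi := CLML^\dagger\ket{\zs}$ and $\ket{\Psi_{b^n}} := \bra{b^n}_S\ket\Psi$, and writing $p_b^{(S)} := \|\Psi_{b^n}\|^2$ for the probability of measuring $b^n$ on $S$, one maximizes $|\langle\nu|\Psi\rangle|$ over unit vectors $\dov_0, \dov_1$ via Cauchy--Schwarz, and then uses $(a+b)^2 = a^2 + b^2 + 2ab$ together with $p_0^{(S)} + p_1^{(S)} \le 1$ to obtain
\begin{equation*}
\bigl|\langle\nu|\Psi\rangle\bigr|^2 \le \tfrac{1}{2} + \sqrt{p_0^{(S)} p_1^{(S)}}.
\end{equation*}
It thus suffices to bound $p_0^{(S)} p_1^{(S)}$ uniformly in $S$.

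Next I would classicalize the probabilities. The state $\ket{\psi_1} := LML^\dagger\ket{\zs}$ factorises as a tensor product across the ``chunks'' of qubits acted on by the disjoint Toffoli gates of $M$ (with singleton chunks for qubits untouched by $M$): $L^\dagger\ket{\zs}$ is a product of one-qubit states, $M$ entangles only within its own chunk, and $L$ acts qubitwise without mixing chunks. Measuring $\ket{\psi_1}$ in the computational basis therefore yields a product distribution $D = \prod_k D_k$. Since $C$ consists only of Toffoli gates, it implements a classical reversible permutation $f : \bits^N \to \bits^N$, giving $p_b^{(S)} = \Pr_{x \sim D}[f(x)_S = b^n]$. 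The theorem reduces to the classical claim: for every product distribution $D$, every reversible Toffoli circuit $f$ of depth $o(\log n)$ and size $s$, and every $n$-subset $S$, at least one of $\Pr_D[f(x)_S = 0^n]$ or $\Pr_D[f(x)_S = 1^n]$ is at most $\exp(-n^{1-o(1)}/\max(\log s, \sqrt n))$.

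For this classical claim, my plan is a light-cone / independence argument. Since $f$ has depth $o(\log n)$, each output bit $f(x)_i$ depends on inputs lying in a bounded set of chunks (its input light-cone); a counting argument combining the depth and size budgets extracts a subset $T \subseteq S$ of output bits with pairwise chunk-disjoint light-cones. Under $D$ the events $\{f(x)_i = b\}_{i \in T}$ are then independent, and a complementary anti-concentration estimate---bounding $\Pr_D[f(x)_i = b]$ away from $1$ for enough $i \in T$ and the appropriate $b$---gives $\Pr_D[f(x)_T = b^{|T|}] \le \exp(-\Omega(|T|))$. The $\max(\log s, \sqrt n)$ in the denominator emerges from two regimes for how large $|T|$ one can guarantee: a depth-driven spreadness argument yields $|T| = \tilde\Omega(\sqrt n)$ unconditionally, while a refined wire / size budget argument yields $|T| = \tilde\Omega(n/\log s)$ when $\log s \gtrsim \sqrt n$.

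The main obstacle will be executing the independence extraction rigorously. Unlike $AC^0$, the circuit $f$ is \emph{reversible} and its Toffoli gates have arbitrary arity, so switching-lemma-style random restrictions do not directly apply, and one must argue purely from the DAG of $f$. Two extreme cases deserve particular care: output bits whose light-cones span nearly all chunks (of which there can only be a bounded number, by a wire-count argument), and chunks whose marginal $D_k$ is near-deterministic (contributing effectively no entropy to $D$). Threading these cases while preserving the tight $n^{1-o(1)}/\max(\log s, \sqrt n)$ exponent is the delicate heart of the proof, and I expect it to be the most technically demanding step.
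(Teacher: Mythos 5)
Your first two steps match the paper: the reduction of fidelity to a bound on $\min(p_0^{(S)}, p_1^{(S)})$ is essentially Lemma~4.4 (your $\sqrt{p_0 p_1}$ is a slightly tighter version of the paper's $\sqrt{\min(p,q)}$), and the classicalization via commuting standard-basis measurements through the Toffoli-only part is exactly what the paper does in Section~4.1. Where you diverge is the core probabilistic step. The paper does \emph{not} look for a subset of output bits that are literally independent; instead it shows that the Hamming weight of a standard-basis measurement of all the output bits concentrates, using the Gavinsky--Lovett--Saks--Srinivasan inequality for read-$r$ families (Theorem~4.7), and then applies Lemma~4.4 to conclude that $0^n$ and $1^n$ cannot both be likely. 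The read-$r$ structure is the \emph{forward} light-cone fact that each \emph{input} bit influences at most $2^d$ output bits; the GLSS inequality is tailored exactly to that one-sided structure and requires no independence extraction at all.

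There is a genuine gap in your proposed step: you assert that ``each output bit $f(x)_i$ depends on inputs lying in a bounded set of chunks (its input light-cone),'' and you plan to extract $\tilde\Omega(\sqrt n)$ or $\tilde\Omega(n/\log s)$ outputs with pairwise disjoint light cones. But this backward light-cone bound is false in this model. Generalized Toffoli gates have arbitrary arity, so the target output of a single depth-1 gate may depend on \emph{all} of that gate's control wires, hence on an unbounded number of chunks; the bounded quantity is only how many outputs each input influences (at most $2^d$), not how many inputs each output depends on. The total ``influence mass'' $\sum_i |lc(i)|$ is only bounded by $N \cdot 2^d$ where $N$ is the \emph{total} number of qubits (including ancillae), which gives no useful per-output bound when $N \gg n$. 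Your remark that there are only boundedly many outputs with light cones spanning nearly all chunks is true for light cones of size $\Omega(N)$, but says nothing about the intermediate regime (e.g.\ light cones of size $\Theta(\sqrt N)$, of which there could be polynomially many), so a Tur\'an-style extraction of $\tilde\Omega(\sqrt n)$ pairwise-disjoint light cones among the $n$ targets is not forthcoming from the stated counting. Likewise, your plan to handle ``near-deterministic'' chunks by discarding them loses probability mass in a way that must be charged somewhere---this is precisely what the paper's careful analysis of the gate-fired variables $B^{(G)}$ and the ``good'' event does, and it is exactly the source of the $\log s$ factor. In short, the read-$r$ concentration inequality is doing work here that a disjoint-light-cone independence argument cannot obviously replicate, and the step you flag as ``the delicate heart of the proof'' is, as stated, resting on a false backward light-cone bound.
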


(See \cref{mc-nek-lb} for a more precise tradeoff between depth and fidelity.) In particular, \cref{mc-nek-lb-intro} implies that mostly classical circuits of depth $o(\log n)$ require size at least $\exp(n^{1-o(1)})$ to construct approximate $n$-nekomata, essentially matching the $\exp(\tilde O(n))$ size upper bound from \cref{nek-ub-d2,nek-ub-d2-mc}. This lower bound does not contradict the $\exp(n^{o(1)})$ size upper bounds of depth $\omega(1)$ from \cref{par-ub}, because our reductions between parity, fanout, and constructing nekomata do not in general map mostly classical circuits to mostly classical circuits. Since the identity circuit is mostly classical, the upper bound on the fidelity of $C\ket{\zs}$ and $\ket\nu$ in \cref{mc-nek-lb-intro} is tight up to the value being exponentiated. Finally, if we also allow $r$-qubit parity and fanout gates in mostly classical circuits---a natural model for small values of $r$, in light of the upper bounds from \cref{par-ub}---then a trivial generalization of our proof of \cref{mc-nek-lb-intro} implies that an identical statement holds for circuits of depth $o(\log_{\max(r,2)} n)$.

To prove \cref{mc-nek-lb-intro}, it suffices to prove that the Hamming weight of a standard-basis measurement of any $n$ qubits of $C\ket{\zs}$ is concentrated around some value. We use the fact that standard-basis measurements commute with generalized Toffoli gates, and, after some preparation, apply a concentration inequality of Gavinsky, Lovett, Saks and Srinivasan~\cite{Gav+15}.

\subsubsection{Lower Bounds for Arbitrary QAC Circuits of Low Size and Depth}\label{lbaqcls}

Call the first $n$ qubits of an $n$-nekomata $\frac{1}{\sqrt 2} \sum_{b=0}^1 \ket{b^n, \dov_b}$ the \emph{targets} of that nekomata.

\begin{thm}\label{small-sc}
	There is a universal constant $c>0$ such that the following holds. Let $C$ be a depth-$d$ QAC circuit acting on any number of qubits, and let $\ket\nu$ be an $n$-nekomata such that at most $cn/(d+1)$ multi-qubit gates in $C$ act on the targets of $\ket\nu$. Then the fidelity of $C\ket{\zs}$ and $\ket\nu$ is at most $1/2 + \exp(-\Omega(n/(d+1)))$.
\end{thm}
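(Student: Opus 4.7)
The plan is to bound the fidelity via a probability argument and then control the relevant probabilities by exhibiting many pristine target qubits. Writing an arbitrary $n$-nekomata as $\ket\nu = \tfrac{1}{\sqrt 2}(\ket{0^n,\dov_0}+\ket{1^n,\dov_1})$, a short calculation using Cauchy--Schwarz and optimizing over $\ket{\dov_0},\ket{\dov_1}$ gives
\[
|\langle C\zs|\nu\rangle|^2 \;\le\; \tfrac{1}{2}\bigl(\sqrt{\mathbb P_0}+\sqrt{\mathbb P_1}\bigr)^{\!2} \;\le\; \tfrac12+\sqrt{\mathbb P_0\,\mathbb P_1},
\]
where $\mathbb P_b$ is the probability that a standard-basis measurement of the target qubits of $C\ket{\zs}$ yields $b^n$. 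So it suffices to show $\sqrt{\mathbb P_0\,\mathbb P_1}\le \exp(-\Omega(n/(d+1)))$.

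To obtain this, I would call a target qubit \emph{pristine} if no multi-qubit gate of $C$ touches it, and aim to find a pristine set $S$ with $|S|=\Omega(n/(d+1))$. A pristine qubit is acted on only by single-qubit gates starting from $\ket 0$, so the final state factors as $C\ket{\zs}=\bigl(\bigotimes_{i\in S}(\alpha_i\ket 0+\beta_i\ket 1)\bigr)\otimes\ket{\dov'}$. Consequently $\mathbb P_0\le\prod_{i\in S}|\alpha_i|^2$ and $\mathbb P_1\le\prod_{i\in S}|\beta_i|^2$, so $\sqrt{\mathbb P_0\,\mathbb P_1}\le\prod_{i\in S}|\alpha_i\beta_i|\le 2^{-|S|}=\exp(-\Omega(n/(d+1)))$, as required.

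The hard part is producing the pristine set, and here I expect the main obstacle: a single generalized Toffoli can touch all $n$ targets simultaneously, so the trivial counting argument that at most $k$ targets are non-pristine (where $k\le cn/(d+1)$ is the number of target-touching gates) breaks down. My plan is to exploit the layer structure of $C$---within each of the $d$ layers the multi-qubit gates are pairwise disjoint, so ``wide'' gates in one layer preclude other gates in that same layer---and then combine this with the global bound $k\le cn/(d+1)$, choosing the constant $c$ small enough, to extract an $\Omega(n/(d+1))$-sized pristine set via a careful combinatorial accounting. I anticipate this may invoke the new QAC normal form advertised in the abstract (which could let me assume without loss of generality that each multi-qubit gate touches only $O(1)$ targets, up to a constant factor loss in gate count), or proceed via a probabilistic extraction over random subsets of targets. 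Once this combinatorial step is complete, the preceding two steps close the argument immediately.
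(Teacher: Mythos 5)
The first step — bounding the fidelity by $1/2 + \sqrt{\mathbb P_0\,\mathbb P_1}$ via Cauchy--Schwarz — is correct and is essentially the paper's \cref{nek-lb}. However, the pristine-qubit strategy for bounding $\sqrt{\mathbb P_0\,\mathbb P_1}$ cannot work, and the difficulty you flag is fatal rather than surmountable. A single generalized Toffoli gate can touch all $n$ targets, so a depth-$d$ circuit with one such gate per layer has only $d$ target-touching gates and \emph{zero} pristine targets, while $d \le cn/(d+1)$ holds for all large $n$. Nothing about the layer structure forbids this. Your fallback hope that the normal form could let you assume each gate touches $O(1)$ targets is also false: the paper's $R_\otimes$ normal form (\cref{rnf}) preserves the topology of the circuit exactly, and there is no way to decompose a wide Toffoli into bounded-arity QAC gates without $\Omega(\log n)$ depth, which is precisely the regime the theorem aims to rule out.

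The paper's mechanism handles wide gates directly rather than trying to avoid them. After converting to $R_\otimes$ normal form, each multi-qubit gate $G$ that touches targets has the form $I - 2\kb{\cv_G}$ with $\ket{\cv_G}$ a mono-product state. Expanding the circuit as a sum over all ways of replacing each of the (at most $cn/(d+1)$) target-touching gates by either $I$ or $\kb{\cv_G}$ produces at most $3^{cn/(d+1)}$ signed terms. In each term, the operator restricted to any one target wire is a product of at most $d$ one-qubit orthogonal projections applied to $\ket 0$ and then hit with $\kb{0}$ (or $\kb 1$); a geometric lemma about iterated projections (\cref{proj-tl}) bounds the norm of such a product by $\exp\bigl(-\langle 0|(I-Q_j)|0\rangle/(2(d+1))\bigr)$, and averaging shows this decay is available on at least half the targets for one of the two complementary choices of final projection. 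Multiplying over targets gives $\exp(-n/(4(d+1)))$ per term, which dominates the $3^{cn/(d+1)}$ term count once $c < 1/(4\ln 3)$. This ``expand into projector-product terms and win the race between $3^{cn/(d+1)}$ and $e^{-n/(4(d+1))}$'' idea is what the proposal is missing; no combinatorial extraction of a pristine set exists to be found.
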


\begin{cor}\label{small-sc-cor}
	Let $c$ be the constant from \cref{small-sc}. Let $C$ be a depth-$d$ QAC circuit acting on any number of qubits, and assume that, collectively, the first $n$ of these qubits are acted on by at most $cn/(d+1)$ multi-qubit gates in $C$. Then for all states $\ket\dov$,
	\begin{itemize}[leftmargin=*]
		\renewcommand\labelitemi{--}
		\item for $\ket{\iv_\oplus} = \ket{0,+^{n-1}}$, the fidelity of $C\ket{\iv_\oplus, \zs}$ and $U_\oplus\ket{\iv_\oplus} \otimes \ket\dov$ is at most $1/2 + \exp(-\Omega(n/(d+1)))$;
		\item for $\ket{\iv_F} = \ket{+,0^{n-1}}$, the fidelity of $C\ket{\iv_F, \zs}$ and $U_F\ket{\iv_F} \otimes \ket\dov$ is at most $1/2 + \exp(-\Omega(n/(d+1)))$;
		\item the fidelity of $C\ket{\zs}$ and $\ket{\Cat_n,\dov}$ is at most $1/2 + \exp(-\Omega(n/(d+1)))$.
	\end{itemize}
\end{cor}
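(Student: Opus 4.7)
The plan is to reduce each of the three bounds to \cref{small-sc} by constructing from $C$ a circuit $C'$ acting on $\ket{\zs}$ whose output has the same fidelity with an $n$-nekomata whose targets are the first $n$ qubits, while $C'$ retains the same multi-qubit gates and depth as $C$. The cat state part is immediate: $\ket{\Cat_n,\dov}$ is itself an $n$-nekomata with targets equal to the first $n$ qubits, so \cref{small-sc} applies to $C$ unchanged.

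For the fanout part, I would use the identity $U_F\ket{+,0^{n-1}} = \frac{1}{\sqrt{2}}(\ket{0^n}+\ket{1^n}) = \ket{\Cat_n}$, so that $U_F\ket{\iv_F}\otimes\ket\dov = \ket{\Cat_n,\dov}$ is again an $n$-nekomata with targets the first $n$ qubits. Defining $C' = C \cdot (H \otimes I)$, where $H$ acts on qubit $1$ and $I$ on the remaining qubits, gives $C'\ket{\zs} = C\ket{\iv_F,\zs}$; the added gate is single-qubit, so $C'$ has the same multi-qubit gates acting on the first $n$ qubits, and the same depth, as $C$, and \cref{small-sc} yields the claimed bound.

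For the parity part, I would invoke the Green et al.\ identity $U_\oplus = (H^{\otimes n} \otimes I)\, U_F\, (H^{\otimes n} \otimes I)$, which combined with the fanout calculation gives $U_\oplus\ket{\iv_\oplus} = H^{\otimes n}\ket{\Cat_n}$. Since fidelity is invariant under unitaries, the fidelity of $C\ket{\iv_\oplus,\zs}$ with $U_\oplus\ket{\iv_\oplus}\otimes\ket\dov$ equals the fidelity of $(H^{\otimes n}\otimes I)\, C\ket{\iv_\oplus,\zs}$ with $\ket{\Cat_n,\dov}$. Defining $C'' = (H^{\otimes n}\otimes I)\, C\,(I \otimes H^{\otimes(n-1)} \otimes I)$ makes $C''\ket{\zs}$ equal to the former state, and again $C''$ has the same multi-qubit gates acting on the first $n$ qubits, and the same depth, as $C$, so \cref{small-sc} completes the proof.

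All of the real work lies in \cref{small-sc} itself; there is no substantial obstacle in this corollary, beyond verifying that the single-qubit Hadamards used in the reductions neither introduce multi-qubit gates nor increase the depth, and that the three target states are indeed $n$-nekomata (or unitarily equivalent to one on the targets) after these gates are absorbed.
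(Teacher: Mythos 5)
Your proof is correct and follows exactly the route the paper has in mind: the paper merely remarks that \cref{small-sc-cor} follows from \cref{small-sc} ``using reasoning similar to that in \cref{prob-sec},'' and your argument spells out that reasoning — absorbing the state preparation and the conjugating $H^{\otimes n}$ into extra layers of single-qubit gates (which affect neither depth nor the multi-qubit-gate count on the first $n$ qubits), so that in each case the desired output state becomes the $n$-nekomata $\ket{\Cat_n,\dov}$ with targets the first $n$ qubits, to which \cref{small-sc} applies directly.
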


(Perhaps surprisingly, a sharp ``phase change" near the $cn/(d+1)$ threshold is in fact inherent to our proof. The +1 in $\exp(-\Omega(n/(d+1)))$ is necessary when $C = H$ and $\ket\nu = \ket+ \coloneqq \frac{\ket0 + \ket1}{\sqrt 2}$.) For example, \cref{small-sc} implies that a depth-2 QAC circuit constructing an approximate $n$-nekomata must have at least $\Omega(n)$ multi-qubit gates acting on the targets of that nekomata. This $\Omega(n)$ lower bound is tight, because \cref{nek-ub-d2} says that depth-2 QAC circuits can construct approximate $n$-nekomata, and a depth-$d$ QAC circuit can have at most $nd$ multi-qubit gates acting on any given set of $n$ qubits. Similarly, \cref{small-sc-cor} implies that depth-7 QAC circuits approximating $n$-qubit parity, fanout, or restricted fanout require at least $\Omega(n)$ multi-qubit gates acting on the $n$ ``input" qubits, and this $\Omega(n)$ lower bound is tight as well by \cref{par-ub}.

\cref{small-sc} also implies that the \emph{total} number of multi-qubit gates, a.k.a.\ the size, of a depth-$d$ QAC circuit constructing an approximate $n$-nekomata must be at least $\Omega(n/(d+1))$. When $d$ is $o(\log n)$, this lower bound is disappointingly far from the upper bounds of \cref{nek-ub-d2,par-ub}. However, Green et al.~\cite{Gre+02} observed that for some $d = \Theta(\log n)$, a depth-$d$ QAC circuit of size $O(n)$ can construct an $n$-nekomata (specifically, the $n$-qubit cat state), so for this value of $d$ our $\Omega(n/d)$ size lower bound is tight to within a logarithmic factor. Similarly, for some $d = \Theta(\log n)$, the minimum size of a depth-$d$ QAC circuit that approximates $n$-qubit parity, fanout, or restricted fanout is between $\Omega(n/\log n)$ and $O(n)$, by \cref{small-sc-cor} and upper bounds of Green et al.

If a QAC circuit has size $s \le o(\sqrt n)$ then its depth $d$ satisfies $d \le s \le o(\sqrt n)$, so $s \le o(\sqrt n) \le o(n/(d+1))$. It follows from \cref{small-sc,small-sc-cor} that QAC circuits of \emph{arbitrary} depth require size at least $\Omega(\sqrt n)$ to construct approximate $n$-nekomata, or to approximately compute $n$-qubit parity, fanout, or restricted fanout.\footnote{More generally, inspection of its proof reveals that \cref{small-sc} also holds if ``depth" is replaced by ``maximum number of multi-qubit gates acting on any one of the target qubits". This quantity is at most the \emph{total} number of multi-qubit gates acting on all of the targets, so similar reasoning implies that QAC circuits of arbitrary depth require at least $\Omega(\sqrt n)$ multi-qubit gates \emph{acting on the targets} to construct approximate $n$-nekomata. It follows from our reductions that QAC circuits require at least $\Omega(\sqrt n)$ gates acting on the $n$ ``input" qubits to approximately compute $n$-qubit parity, fanout, or restricted fanout.}

Finally, we remark that \cref{small-sc} is actually a special case of a more general result, \cref{small}, about states $\ket\dov$ such that for some orthogonal projections\footnote{I.e.\ $Q_j = Q_j^2 = \adj Q_j$ for all $j$.} $Q_1, \dotsc, Q_n$ on arbitrary numbers of qubits,
\begin{equation*}
\bra\dov \left(\bigotimes_{j=1}^n Q_j \otimes I\right) \ket\dov =
\bra\dov \left(\bigotimes_{j=1}^n (I -  Q_j)\otimes I\right) \ket\dov = 1/2.
\end{equation*}
(For example, $n$-nekomata satisfy this criterion with $Q_j = \kb0$ for all $j$.) We will comment on this generalization of \cref{small-sc} again in \cref{d2lb-intro}.

\subsubsection{A Normal Form for Quantum Circuits}\label{nf}

Integral to our proof of \cref{small-sc} is a certain normal form for QAC circuits, which may be of independent interest since the standard quantum circuit model is that of QAC circuits whose gates have maximum arity 2. Here we give the underlying intuition, by way of analogy with well-known facts from classical circuit complexity. If we define AC circuits as consisting only of AND and NOT gates, then it cannot in general be assumed that the NOT gates are all adjacent to the inputs. However, by DeMorgan's laws we may equivalently allow OR gates in AC circuits as well, and then it \emph{can} be assumed that the NOT gates are all adjacent to the inputs.\footnote{Invoking this assumption results in a constant-factor blowup in size and no blowup in depth, where (as is customary) we do not count NOT gates toward the size or depth of AC circuits.} Similarly, we introduce a certain further generalization of generalized Toffoli gates which allows us to assume that the one-qubit gates in a QAC circuit are all adjacent to the input.

\subsubsection{Depth-2 Lower Bounds}\label{d2lb-intro}

\begin{thm}\label{d2-lb}
	Let $C$ be a depth-2 QAC circuit of arbitrary size, acting on any number of qubits. Then for all states $\ket\dov$,
	\begin{enumerate}[label=(\roman*), ref={\thethm(\roman*)}]
		\item for $\ket{\iv_\oplus} = \ket{0,+^{n-1}}$, the fidelity of $C\ket{\iv_\oplus, \zs}$ and $U_\oplus\ket{\iv_\oplus} \otimes \ket\dov$ is at most $1/2 + \exp(-\Omega(n))$;\label[thm]{d2-lb-par}
		\item for $\ket{\iv_F} = \ket{+,0^{n-1}}$, the fidelity of $C\ket{\iv_F, \zs}$ and $U_F\ket{\iv_F} \otimes \ket\dov$ is at most $1/2 + \exp(-\Omega(n))$;\label[thm]{d2-lb-fan}
		\item the fidelity of $C\ket{\zs}$ and $\ket{\Cat_n,\dov}$ is at most $1/2 + \exp(-\Omega(n))$.\label[thm]{d2-lb-cat}
	\end{enumerate}
\end{thm}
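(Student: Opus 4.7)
My plan is first to deduce items (i) and (ii) from item (iii) using the reductions between parity, fanout, and nekomata-construction established in \cref{pep}. Since those reductions (including the construction of \cref{rfc}) introduce only one-qubit gates and ancillae, any depth-2 upper bound for approximating parity on $\ket{\iv_\oplus}$ or fanout on $\ket{\iv_F}$ would yield a depth-2 upper bound for approximating $\ket{\Cat_n,\dov}$, so it suffices to prove (iii).

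For (iii), I would first apply the normal form of \cref{nf} to write $C$ as a single initial layer of one-qubit gates followed by exactly two layers $T_1, T_2$ of the generalized multi-qubit gates introduced there. The one-qubit layer applied to $\ket{\zs}$ produces a product state $\ket\iota$, so the task reduces to bounding the fidelity of $\ket{\aov} \coloneqq T_2 T_1 \ket\iota$ with $\ket{\Cat_n,\dov}$ for arbitrary $\ket\dov$.

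The central step is to reduce to the projection generalization \cref{small} of \cref{small-sc} via a choice of projections adapted to the depth-2 light cone. Although the size of $C$ is unbounded, each target qubit is touched by at most one gate in each layer, so its past light cone is determined by the unique second-layer gate acting on it together with the first-layer gates that overlap that second-layer gate. I would partition the targets into ``blocks'' consisting of targets sharing a common second-layer gate, and then extract by a greedy argument a subcollection of $\Omega(n)$ blocks whose past light cones are mutually disjoint. On each surviving block $B_k$ I define $Q_k$ to be the projection onto the span of standard-basis states assigning all-zeros to the target coordinates within $B_k$; approximation of $\ket{\Cat_n,\dov}$ forces $\bra{\aov} \bigotimes_k Q_k \ket{\aov}$ and $\bra{\aov} \bigotimes_k (I-Q_k) \ket{\aov}$ each to be within $o(1)$ of $1/2$, satisfying the hypothesis of \cref{small} with $\Omega(n)$ projections and yielding the desired $1/2 + \exp(-\Omega(n))$ bound.

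The main obstacle I anticipate is twofold. First, extracting a linear-sized family of light-cone-disjoint blocks requires a careful amortized-counting argument on the bipartite hypergraph of targets against gates, since a single wide second-layer gate can absorb many first-layer gates and thereby reduce the number of independent blocks. Second, verifying the symmetric-expectation hypothesis of \cref{small} from \emph{approximate} rather than exact fidelity with $\ket{\Cat_n,\dov}$ will need a Cauchy--Schwarz style calculation that absorbs the cumulative slack from all blocks while keeping the final error subexponential in $n$.
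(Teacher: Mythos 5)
Your reduction of (i) and (ii) to (iii) and the opening move of applying \cref{rnf} to pass to an \rtt normal form are correct and match the paper. But the core of the argument has a gap that is not a matter of bookkeeping. You propose to apply \cref{small} directly with circuit $T_2T_1$, input the mono-product state $\ket\iota$, desired output $\ket{\Cat_n,\dov}$, and projections $Q_k$ on $\Omega(n)$ light-cone-disjoint blocks. Set aside that the hypothesis of \cref{small} concerns the \emph{desired output state}, which in fact measures to each of $\bigotimes_k Q_k \otimes I$ and $\bigotimes_k (I-Q_k) \otimes I$ with probability \emph{exactly} $1/2$ (there is no ``within $o(1)$'' slack to absorb). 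The real problem is quantitative: \cref{small} requires the number of multi-qubit gates acting on $\Hi_T$ to be at most $c\,|\mathcal I|/(d+1) = c\,|\mathcal I|/3$ with $c < 1/(4\ln 3)$, but each of your $|\mathcal I|$ blocks contributes at least one second-layer gate (and typically at least one first-layer gate) to $\Hi_T$, so the gate count is at least $|\mathcal I| > c\,|\mathcal I|/3$. \cref{small} simply does not apply as stated, and no clever choice of disjoint light cones changes this; the mismatch is a constant-factor obstruction. You anticipate part of the trouble (``a single wide second-layer gate can absorb many first-layer gates''), but the difficulty is not just that wide $L_2$ gates kill independence---it is that applying \cref{small} to the \emph{entire} depth-2 circuit cannot work even when independence is perfect.

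What the paper's proof does, and what is missing from your plan, is a set of tricks specifically designed to close this constant-factor gap. First, \cref{d2-small-wlog} (which uses \cref{meas-z} to measure selected ancillae without losing fidelity) normalizes the circuit so that every ancilla is touched in each layer and every gate hits a target. Second, the gates in $L_1$ are split into ``small'' and ``big'' via a threshold chosen by the generalized Markov inequality \cref{gen-markov}; the ancillae touched by the small gates are measured out (\cref{meas-z} again), which deletes all of $L_1^S$ from the circuit while only reshaping the input state. Third---and this is the decisive step---the surviving second-layer gates are \emph{not} counted as part of the circuit fed to \cref{small}; instead the projections $Q_V = U_V(\kb0\otimes I)U_V\otimes I$ are conjugated by those gates and the desired output state becomes $L_2'\ket{\Cat_n,\dov}$, which still measures exactly $1/2$ to each of $\bigotimes_V Q_V\otimes I$ and $\bigotimes_V (I-Q_V)\otimes I$. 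Only the small remainder $L_1^B$ is the circuit in \cref{small}, and its size is controlled to be at most $(c/2)|\mathcal I|$. Finally, the independent set is extracted by \cref{turan} rather than a bare greedy argument, with a separate case analysis on whether $|L_2|$ is small or large. Without the ancilla-measurement reduction and the conjugation of projections by $L_2'$, your application of \cref{small} fails at the gate-count hypothesis, so the argument as written does not go through.
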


Our proof of \cref{d2-lb} gives a multiplicative constant of roughly $1/10^{60000}$ implicit in the $\Omega(\cdot)$ notation in the above inequalities, which makes them trivial for small values of $n$. If $n$ is sufficiently large however, then \cref{d2-lb} implies that depth-2 QAC circuits cannot approximate $n$-qubit parity, fanout, or restricted fanout, or approximately construct the $n$-qubit cat state, even if these approximations are not required to be clean. Still taking $n$ to be sufficiently large, this improves on the previously mentioned result of Pad\'e et al.~\cite{Pad+20} that depth-2 QAC circuits cannot cleanly compute parity exactly on four or more qubits.

\cref{d2-lb,par-ub} imply that for all sufficiently large $n$, the minimum depth of a QAC circuit approximating $n$-qubit parity is between 3 and 7 inclusive, and likewise for fanout, restricted fanout, and constructing the cat state. By \cref{nek-ub-d2} there \emph{is} a depth-2 QAC circuit that constructs an approximate $n$-nekomata for all $n$, so any proof of \cref{d2-lb} must use some property of $\ket{\Cat_n,\dov}$ that does not hold for an arbitrary $n$-nekomata. Ours uses a property similar to the fact that if we measure some of the qubits in the ``$B$" register of $\ket{\Cat_n}_A \otimes \ket{\dov}_B$ in an arbitrary basis, then the resulting state in registers $A$ and $B$ is still an $n$-nekomata.

Our proof of \cref{d2-lb} mostly uses different techniques than those of Pad\'e et al. An exception is the observation, of which we use a generalization, that if we define a ``generalized $Z$ gate" on any number of qubits by $Z = I - 2\kb{1\dots1}$ then $Z\ket{0,\iv} = \ket{0,\iv}$ and $Z\ket{1,\iv} = \ket1 \otimes Z\ket\iv$ for all states $\ket\iv$. We also incorporate a variant of the proof given by Bene Watts, Kothari, Schaeffer and Tal~\cite[Theorem 16]{Ben+19} that there is no QNC circuit (QAC circuit whose gates have maximum arity 2) of depth $o(\log n)$ that maps $\ket{\zs}$ to $\ket{\Cat_n, \zs}$: Using a ``light cone" argument they prove that out of any $n$ output qubits, there are at least two whose standard-basis measurements would be independent, but the standard-basis measurements of any two qubits in $\ket{\Cat_n}$ are dependent.

Our proof of \cref{d2-lb} goes roughly as follows. If there are only $o(n)$ multi-qubit gates acting on the $n$ targets of $\ket{\Cat_n,\dov}$ then the result follows from \cref{small-sc}. Otherwise, out of the multi-qubit gates acting on the targets, the average gate acts on $O(1)$ targets, as would be the case in a QNC circuit. Using a variant of a light cone argument, we choose $\Theta(n)$ pairwise disjoint sets of qubits on which to define orthogonal projections, and apply the generalization of \cref{small-sc} that was mentioned at the end of \cref{lbaqcls}.

\subsection{Organization}

In \cref{prelim} we introduce some miscellaneous notation and definitions. In \cref{qac} we give multiple equivalent characterizations of QAC circuits, including the previously mentioned normal form, and introduce some related definitions which we will use in more general contexts as well. In \cref{pep} we give reductions between parity, fanout, restricted fanout, and constructing nekomata; we also use these reductions to prove that the $d \ge 11$ case of \cref{par-ub} follows from \cref{nek-ub-d2}, that \cref{small-sc-cor} follows from \cref{small-sc}, and that \cref{d2-lb-par,d2-lb-fan} follow from \cref{d2-lb-cat}. In \cref{mc-sec} we prove our upper and lower bounds for constructing approximate nekomata in mostly classical circuits, \cref{nek-ub-d2,mc-nek-lb-intro}. In \cref{more-lb} we prove our other main results, \cref{small-sc,d2-lb-cat}. In \cref{d7} we prove the $d<11$ case of \cref{par-ub}. 

Out of \cref{pep,mc-sec,more-lb}, occasionally a later section will reference a self-contained lemma from an earlier section, but otherwise these sections may be read in any order. \cref{d7} relies on content from \cref{pep,mc-sec}.

\subsection{Preliminaries}\label{prelim}

We write $\log$ and $\ln$ to denote the logarithms base 2 and $e$ respectively, and $(x_j)_j$ to denote the tuple of all $x_j$ for $j$ in some implicit index set. Also let $[n] = \{1,\dotsc,n\}$ and $\norm{\psi} = \sqrt{\psi^*\psi}$, i.e.\ $\norm\cdot$ denotes the 2-norm. Anything written as $\bra\cdot$ or $\ket\cdot$ is implicitly unit-length.

\emph{Orthogonal projections} are linear transformations $Q$ such that $Q = Q^2 = \adj Q$. For an orthogonal projection $Q$ and a state $\ket\aov$, we call $\bra\aov Q \ket\aov$ ``the probability that $\ket\aov$ measures to $Q$". If $Q = \kb\dov$ then we also call this ``the probability that $\ket\aov$ measures to $\ket\dov$", and it equals $|\ip\dov\aov|^2$, a.k.a.\ the \emph{fidelity} of $\ket\aov$ and $\ket\dov$. More generally, if $Q$ is an orthogonal projection on some Hilbert space $\Hi$ then we call $\bra\aov (Q \otimes I) \ket\aov$ ``the probability that the $\Hi$ qubits of $\ket\aov$ measure to $Q$".

We use standard notation for the Hadamard basis states $\ket{+} = \frac{\ket 0 + \ket 1}{\sqrt 2}, \ket{-} = \frac{\ket 0 - \ket 1}{\sqrt 2}$, Hadamard gate $H = \ketbra{+}{0} + \ketbra{-}{1}$, and NOT gate $X = \ketbra{0}{1} + \ketbra{1}{0}$. We write $I$ to denote the identity transformation, $I_{\Hi}$ for the identity on the Hilbert space $\Hi$, and $I_n$ for the identity on some $n$-qubit Hilbert space.

To be thorough, we remind the reader that an \emph{$n$-nekomata} is a state with $n$ qubits (called \emph{targets}) that measure to $0^n$ and to $1^n$ each with probability 1/2, or equivalently a state of the form $\frac{1}{\sqrt 2} \sum_{b=0}^1 \ket{b^n, \dov_b}$ for some states $\ket{\dov_0}, \ket{\dov_1}$ on any number of qubits. For example, the $n$-qubit \emph{cat state} is the state $\ket{\Cat_n} = (\ket{0^n} + \ket{1^n})/\sqrt 2$.

\section{QAC Circuits}\label{qac}

Consider a quantum circuit $C$, written as $C = L_d M_d \dotsb L_1 M_1 L_0$ such that each $L_k$ consists only of one-qubit gates and each $M_k$ is a layer (tensor product) of multi-qubit gates. We may assume that each $L_k$ is a single layer as well, because the product of one-qubit gates is also a one-qubit gate. Define the \emph{size} of $C$ to be the number of multi-qubit gates in $C$, the \emph{depth} of $C$ to be the number of layers of multi-qubit gates in $C$ (in this case, $d$), and the \emph{topology} of $C$ to be the set of pairs $(S,k)$ such that $S$ equals the support of some gate in $M_k$, where the \emph{support} of a gate is the set of qubits acted on by that gate. Note that the topology of $C$ encodes its depth, size, and more generally the number of multi-qubit gates acting on any given set of qubits.

Recall that QAC circuits are quantum circuits with arbitrary one-qubit gates and generalized Toffoli gates of arbitrary arity, where $(n+1)$-ary generalized Toffoli gates are defined by $\ket{x,b} \mapsto \ket{x, b \oplus \bigwedge_{j=1}^n x_j}$ for $x = (x_1, \dotsc, x_n) \in \bits^n, b \in \bits$. Define an $(n+1)$-ary OR gate by $\ket{x,b} \mapsto \ket{x, b \oplus \bigvee_j x_j}$ for $x \in \bits^n, b \in \bits$, and call the qubit corresponding to $b$ in these definitions the \emph{target} qubit of the gate. By the construction of an OR gate from a generalized Toffoli gate and NOT gates in \cref{or-gate}, we may add OR gates to the set of allowed gates when defining QAC circuits, without changing the set of topologies of QAC circuits computing any given unitary transformation.

\begin{SCfigure}[50]
	\centering
	\begin{quantikz}[row sep = 2.3mm]
		\ghost X & \orctrl 1 & \qw \\
		\ghost X & \orctrl 1 & \qw \\
		\ghost X & \orctrl 1 & \qw \\
		\ghost X & \gate \vee & \qw
	\end{quantikz} =
	\begin{quantikz}[row sep = 2mm]
		& \gate X & \ctrl 1 & \gate X & \qw \\
		& \gate X & \ctrl 1 & \gate X & \qw \\
		& \gate X & \ctrl 1 & \gate X & \qw \\
		\ghost \vee & \qw & \targ{} & \gate X & \qw
	\end{quantikz}
	\caption{The multi-qubit gates on the left and right are OR and generalized Toffoli gates respectively, whose target qubits are on the bottom wire.}
	\label{or-gate}
\end{SCfigure}
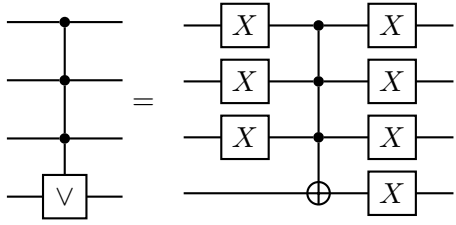

For a state $\ket\cv$ let $\rt{\ket\cv} = \rt\cv = I - 2\kb\cv$ (the R stands for ``reflection"). Let a \emph{mono-product state} be a tensor product of any number of one-qubit states. When $\ket\cv$ is a mono-product state we call $\rt\cv$ an \emph{\rtt gate}. For example, an $(n+1)$-qubit generalized Toffoli gate equals $\rt{\ket{1^n,-}}$, because it acts on the basis $\{\ket0, \ket1\}^{\otimes n} \otimes \{\ket+, \ket-\}$ by multiplying $\ket{1^n,-}$ by -1 and leaving all other states in this basis unchanged.

Consider an $(n+1)$-qubit mono-product state $\ket\cv$, and let $L$ be a layer of one-qubit gates such that $\ket\cv = L\ket{1^n,-}$. Then,
\begin{equation}\label{conj}
\rt\cv = I - 2\kb\cv = I - 2L\kb{1^n,-}\adj L = L(I - 2\kb{1^n,-})\adj L = L\rt{\ket{1^n,-}}\adj L,
\end{equation}
i.e.\ $\rt\cv$ equals the conjugation of a generalized Toffoli gate by a layer of one-qubit gates. (Fang et al.~\cite{Fan+06} observed \cref{conj} in the case where $\ket\cv = \ket{1^{n+1}}$ and $L = I_n \otimes H$.) Therefore, similarly to the above, we may add arbitrary \rtt gates to the set of allowed gates when defining QAC circuits.

In fact, a stronger statement holds. Let a QAC circuit be in \emph{\rtt normal form} if it can be written as $CL$ such that $C$ consists only of multi-qubit \rtt gates and $L$ is a layer of single-qubit gates. We will use the following in \cref{more-lb}:

\begin{prp}\label{rnf}
	Every QAC circuit computes the same unitary transformation as a circuit in \rtt normal form with the same topology.
\end{prp}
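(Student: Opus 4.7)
The plan is to proceed by induction on depth, commuting one-qubit layers rightward through multi-qubit layers while preserving the topology. Write the given circuit in the standard form $L_d M_d L_{d-1} M_{d-1} \dotsb L_1 M_1 L_0$, where each $L_k$ is a layer of one-qubit gates and each $M_k$ is a layer of multi-qubit gates. Since every generalized Toffoli gate is itself an \rtt gate (namely $\rt{\ket{1^n,-}}$, as noted in the excerpt), we may regard each $M_k$ as a layer of \rtt gates without changing the topology.

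The key commutation step is the following. Suppose $L$ is a layer of one-qubit gates and $M = \bigotimes_i \rt{\theta_i}$ is a layer of \rtt gates with pairwise disjoint supports $S_i$. Then $LM = M'L$, where $M' \coloneqq LML^\dagger$ is again a layer of \rtt gates with the same supports $S_i$. Indeed, since $L$ factors as a tensor product of one-qubit gates across all qubits, it restricts to a tensor product $L|_{S_i}$ of one-qubit gates on each $S_i$, and
$LML^\dagger = \bigotimes_i (L|_{S_i}) \rt{\theta_i} (L|_{S_i})^\dagger = \bigotimes_i \rt{(L|_{S_i}) \ket{\theta_i}}$
by exactly the calculation of \cref{conj}. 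Each $(L|_{S_i}) \ket{\theta_i}$ is a mono-product state because $L|_{S_i}$ is a tensor product of one-qubit gates acting on a mono-product input, so each factor is genuinely an \rtt gate on support $S_i$.

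Applying this commutation iteratively --- first pushing $L_d$ past $M_d$ and absorbing the resulting one-qubit layer into $L_{d-1}$, then pushing the new leftmost one-qubit layer past $M_{d-1}$ into $L_{d-2}$, and so on down to $L_0$ --- every one-qubit layer migrates to the far right and merges into a single layer $L$ of one-qubit gates. The outcome is a circuit $C'L$ in \rtt normal form whose multi-qubit layers have exactly the same supports, in the same order, as $M_d, \dotsc, M_1$ did originally, so the topology is preserved. The result equals the original circuit as a unitary because all manipulations were rewritings of equal operators.

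No step is a serious obstacle; the only subtlety is checking that conjugation by a one-qubit layer preserves both the \rtt property and the support of each individual gate in a layer, and this is immediate from the fact that a layer of one-qubit gates tensor-decomposes over the disjoint supports within any single multi-qubit layer. Thus the proof reduces to stating the commutation identity above and iterating it $d+1$ times.
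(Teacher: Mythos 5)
Your proposal is correct and takes essentially the same approach as the paper: both proceed by pushing one-qubit layers rightward through multi-qubit layers via the conjugation identity $LML^\dagger = \bigotimes_G L_G G L_G^\dagger$, using the computation in \cref{conj} to verify that each conjugate is still an \rtt gate on the same support, and iterating/inducting on depth. The only cosmetic difference is that the paper phrases the iteration as an induction ($C = LMD = (LML^\dagger)(LD)$, apply the inductive hypothesis to $LD$), while you unroll it explicitly; the content is identical.
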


\begin{proof}
	The proof is by induction on the depth $d$ of a QAC circuit $C$. If $d=0$ then $C$ is a layer of one-qubit gates, which is already in \rtt normal form. Otherwise write $C = LMD$ such that $L$ is a layer of one-qubit gates, $M$ is a layer of multi-qubit generalized Toffoli gates, and $D$ is a depth-$(d-1)$ QAC circuit. Since $C = LM\adj L L D$, it suffices to prove that $LD$ is equivalent to a circuit in \rtt normal form with the same topology as $D$, and that $LM\adj L$ is equivalent to a layer of \rtt gates that has the same topology as $M$. The first claim follows from the inductive hypothesis. To prove the second claim, note that $LM\adj L = \bigotimes_G L_G G \adj L_G$, where $G$ ranges over all gates in $M$, and $L_G$ is the tensor product of the gates in $L$ that act on the support of $G$. Then apply \cref{conj}.
\end{proof}

\section{Reductions to and from Constructing Nekomata}\label{pep}

The high-level idea of this section may be obtained relatively quickly by inspecting the beginning of \cref{red-sec} (in particular, \cref{pc}) and perhaps also \cref{srrf}, assuming familiarity with certain content from \cref{intro}. Most of the rest of the current section consists of routine calculations.

In \cref{prob-sec} we define the problems mentioned in the following theorem, and in \cref{prob-sec,red-sec} we prove the second and first paragraphs of this theorem respectively:

\begin{thm}\label{red}
	For all $\eps \ge 0$, if there is a QAC circuit of size $s$, depth $d$, and number of qubits acted on $a$ that solves the $(1-\eps)$-approximate $n$-nekomata problem, then there is a QAC circuit of size $O(s+n)$, depth $4d+3$, and number of ancillae $a$ that solves the $(1-O(\eps))$-approximate $(n+1)$-qubit clean parity problem.
	
	For all $0 \le p \le 1$ and every non-red\footnote{Only the arrow from ``nekomata" to ``clean parity" is red.} arrow from a problem P to a problem Q in \cref{flowchart}, if a QAC circuit $C$ solves $p$-approximate, $n$-qubit P then there is a QAC circuit with the same topology as $C$ that solves $p$-approximate, $n$-qubit Q. (If Q is the nekomata problem then substitute ``$n$-nekomata" for ``$n$-qubit nekomata" here.) Furthermore, if this arrow is dashed then $C$ itself solves $p$-approximate, $n$-qubit Q.
\end{thm}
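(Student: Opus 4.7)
My plan splits cleanly along the two paragraphs of the statement.

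\textbf{Paragraph 1 (nekomata $\to$ clean parity).} Let $C$ be a depth-$d$, size-$s$ circuit on $a$ qubits with $C\ket{\zs}$ of fidelity at least $1-\eps$ with some $n$-nekomata $\ket\nu = \tfrac{1}{\sqrt 2}(\ket{0^n}_T\ket{\dov_0}_B + \ket{1^n}_T\ket{\dov_1}_B)$, where $T$ is the target register and $B$ is the rest of the $a$ ancilla qubits. My idea is to ``borrow'' the nekomata targets as a cat-state-style resource for a constant-depth parity computation on the $(n+1)$-qubit input register $X$, then run $C^\dagger$ to partially uncompute. A single borrow-and-return cycle cannot possibly clean $B$ in general, because the interaction with the input entangles $B$ with the data in a way that depends on which branch of the $\ket{\dov_b}$ superposition we were in; so I would run a second borrow-and-return cycle explicitly designed to wipe out this residue. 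Concretely the circuit looks like $C^\dagger M_2 C L C^\dagger M_1 C$ (reading right to left) applied to the ancillae, where $L$ is a single-qubit Hadamard layer that swaps between the fanout and parity pictures and $M_1, M_2$ are depth-$1$ layers of CNOTs from $T$ into $X$. Each of $C$, $C^\dagger$ contributes depth $d$ and size $s$; the two CNOT layers contribute $O(n)$ multi-qubit gates and two additional layers; and after absorbing one further layer of multi-qubit corrections the total is $4d+3$ layers of depth and $2s+O(n)$ size on $a$ ancillae. The fidelity bookkeeping is the triangle inequality: $(1-\eps)$ fidelity corresponds to $O(\sqrt\eps)$ vector distance, which propagates additively through a constant number of unitary surgeries, re-squaring to $1 - O(\eps)$ fidelity at the end.

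\textbf{Paragraph 2 (topology-preserving arrow reductions).} Every non-red arrow in \cref{flowchart} should fit into one of three stereotypes, and I would verify each by a one-line calculation:
\begin{itemize}
    \item \emph{Dashed ``same circuit'' arrows:} clean $\to$ non-clean versions of the same problem (a clean solution is a fortiori a non-clean one), and fanout $\to$ restricted fanout (restricted fanout is just fanout evaluated at the input $\ket{b, 0^{n-1}}$, so the same circuit $C$ works verbatim).
    \item \emph{Parity $\leftrightarrow$ fanout (both clean and non-clean):} use the identity $H^{\otimes n} U_\oplus H^{\otimes n} = U_F$ from the opening figure. The circuit $H^{\otimes n} C H^{\otimes n}$ solves one problem iff $C$ solves the other; Hadamard conjugation is unitary and preserves the inner products defining fidelity, and only adds single-qubit gates, hence preserves topology.
    \item \emph{Restricted fanout $\to$ nekomata:} prepend a single Hadamard to the control qubit, so that $C(H\otimes I)\ket{\zs} \approx \tfrac{1}{\sqrt 2}(\ket{0^n,\dov_0} + \ket{1^n,\dov_1})$, an $n$-nekomata (and in the clean case, the cat state $\ket{\Cat_n}\otimes\ket{\zs}$). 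Again only a single-qubit gate is added so the topology is unchanged.
\end{itemize}
``Same topology'' just means the same set $\{(S,k)\}$ of multi-qubit supports, so in every case what I am really checking is that the modification only inserts or removes single-qubit gates.

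\textbf{Main obstacle.} Paragraph 2 is essentially bookkeeping; the real work is paragraph 1. The difficulty is that the nekomata's auxiliary register $B$ holds arbitrary, possibly $b$-correlated states $\ket{\dov_0},\ket{\dov_1}$, so the naive ``prepare-interact-uncompute'' protocol leaves $B$ in a residual state that $C^\dagger$ cannot clear. Showing that \emph{two} rounds of borrow-and-return are enough to both deliver parity into $X$ and return the ancillae cleanly to $\ket{\zs}$, while hitting the tight depth $4d+3$ (rather than $5d$ or $4d+7$) and only losing $O(\eps)$ fidelity, is the delicate step.
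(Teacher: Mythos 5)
Your second paragraph matches the paper's argument in spirit and in every instance that matters: the Hadamard-conjugation argument for parity-to-fanout, the "same circuit" arguments for clean-to-dirty and fanout-to-restricted-fanout, and the single-Hadamard reduction from restricted fanout to cat to nekomata (the paper routes through the cat problem explicitly, but your conflated version is just a compressed presentation of the same two steps). These are genuinely one-line calculations and you have them.

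The first paragraph is where the gap is, and it is a real one: you have correctly guessed the macro-structure --- a double sandwich $C^\dagger \cdot (\text{multi-qubit layer}) \cdot C \cdot (\text{middle operation}) \cdot C^\dagger \cdot (\text{multi-qubit layer}) \cdot C$ that hits depth $4d+3$ with $O(n)$ extra gates and needs the second pass to restore the ancillae --- but you have not identified the actual mechanism that makes it work, and your circuit as written would not compute parity. Specifically: the paper's ``multi-qubit layers'' are layers of controlled-$Z$ gates between the $n$ input qubits and the $n$ nekomata targets, not CNOTs, and their effect is to multiply the $\ket{1^n}$ branch of the nekomata by $(-1)^{\text{parity}(\ket\iv)}$, so that after the first $C^\dagger$ the ancillae are exactly $\ket{0^a}$ if the parity is $0$ and are in a state $\ket\dov = C^\dagger\ket{\nu^\perp}$ with $\ip{0^a}\dov = 0$ (hence supported on strings of Hamming weight $\ge 1$) if the parity is $1$. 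The crucial ingredient you are missing is what sits in the middle of the sandwich: a single $a$-ary OR gate (hence one more layer, giving $4d+2+1 = 4d+3$) from all $a$ ancillae into the output qubit, which flips the output iff the ancillae are not all-zeros, i.e.\ iff the parity is $1$. This is the step that turns the phase encoded in $\ket\nu$ vs.\ $\ket{\nu^\perp}$ into a classical bit; a single-qubit Hadamard layer $L$ in that position, as you propose, has no way to read the ancillae at all. Your phrase ``absorbing one further layer of multi-qubit corrections'' is exactly the slot where the OR gate belongs, but it needs to be named and its role derived. (A minor arithmetic slip: applying $C$ and $C^\dagger$ twice each contributes $4s$, not $2s$, to the size, though both are $O(s+n)$ so the stated bound survives.) Your observation that $B$ may hold $b$-correlated garbage that a single pass cannot clear is correct and is precisely why the second $C$--CZ--$C^\dagger$ pass is there, but without the OR gate in the middle there is nothing for the second pass to preserve: the output bit would never get set in the first place.
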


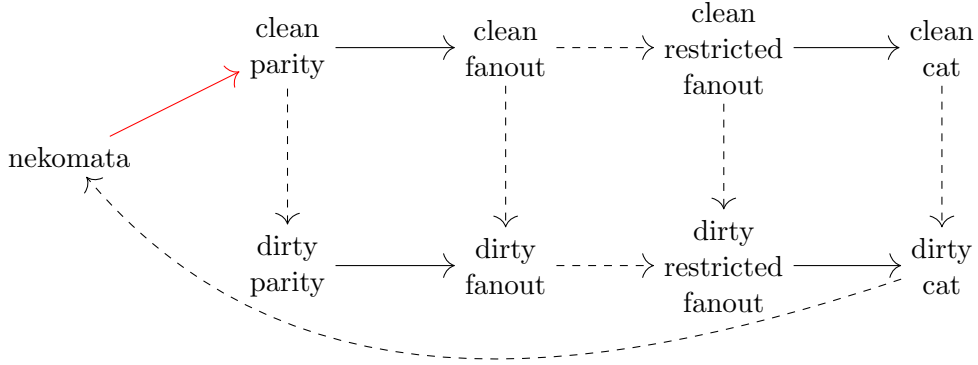
\begin{figure}
	\centering
	\begin{tikzpicture}
		[sty/.style = {align=center},
		arr/.style = {arrows={->[scale=1.8]}},
		arr2/.style = {arrows={[scale=1.8]<->[scale=1.8]}}]
		
		\def \dx {2.9};
		\def \dy {1.45};
		
		\node[sty] (nek) at (0,0) {nekomata};
		\node[sty] (cp) at (\dx,\dy) {clean \\ parity};
		\node[sty] (dp) at (\dx,-\dy) {dirty \\ parity};
		\node[sty] (cf) at (2*\dx,\dy) {clean \\ fanout};
		\node[sty] (df) at (2*\dx,-\dy) {dirty \\ fanout};
		\node[sty] (crf) at (3*\dx,\dy) {clean \\ restricted \\fanout};
		\node[sty] (drf) at (3*\dx,-\dy) {dirty \\ restricted \\ fanout};
		\node[sty] (cc) at (4*\dx,\dy) {clean \\ cat};
		\node[sty] (dc) at (4*\dx,-\dy) {dirty \\ cat};
		
		\draw[arr,red] (nek) to (cp) ;
		\draw[arr] (cp) to (cf);
		\draw[arr,dashed] (cf) to (crf);
		\draw[arr] (crf) to (cc);
		
		\draw[arr] (dp) to (df);
		\draw[arr,dashed] (df) to (drf);
		\draw[arr] (drf) to (dc);
		\draw[arr, dashed,out=198.5,in=310] (dc) to (nek);
		
		\draw[arr,dashed] (cp) to (dp);
		\draw[arr,dashed] (cf) to (df);
		\draw[arr,dashed] (crf) to (drf);
		\draw[arr,dashed] (cc) to (dc);
	\end{tikzpicture}
	\vspace{-9.5mm}
	\caption{A visualization of \cref{red}; see the theorem statement for the meaning of the arrows.}
	\label{flowchart}
\end{figure}

Then, using \cref{red}, in \cref{par-ub-sec} we prove that the $d \ge 11$ case of \cref{par-ub} follows from \cref{nek-ub-d2}. It is easy to prove \cref{small-sc-cor} assuming \cref{small-sc}, and to prove \cref{d2-lb-par,d2-lb-fan} assuming \cref{d2-lb-cat}, using reasoning similar to that in \cref{prob-sec}.
\subsection{Problem Definitions and Most Reductions}\label{prob-sec}

Recall that we define the \emph{phase-dependent fidelity} of states $\ket\aov$ and $\ket\dov$ to be $1 - \norm{\ket\aov - \ket\dov}^2$. This quantity is at most the fidelity of $\ket\aov$ and $\ket\dov$, because
\begin{equation}\label{fid-pdfid}
|\ip\dov\aov|^2
\ge \left(\frac{\ip\dov\aov + \ip\aov\dov}2\right)^2
= \left(1 - \frac{\norm{\ket\aov - \ket\dov}^2}2\right)^2
\ge 1-\norm{\ket\aov - \ket\dov}^2.
\end{equation}
\begin{rmk*}
	If $\ip\aov\dov$ is a real number close to 1, say $\ip\aov\dov = 1-\eps$, then $\ket\aov$ and $\ket\dov$ have fidelity $1-2\eps+\eps^2$ and a nearly identical phase-dependent fidelity of $1-2\eps$. On the other hand, if the phases of $\ket\aov$ and $\ket\dov$ differ, then these states may have low phase-dependent fidelity even if their fidelity is close to 1.
\end{rmk*}

The following two definitions are with respect to an arbitrary unitary transformation $U$ on $n$ qubits:

\begin{prb}[$p$-approximate Clean $U$]\label{clean-unitary}
	Construct a circuit $C$ on at least $n$ qubits such that for all $n$-qubit states $\ket\iv$, the phase-dependent fidelity of $C\ket{\iv, \zs}$ and $U\ket\iv \otimes \ket{\zs}$ is at least $p$.
\end{prb}

\begin{prb}[$p$-approximate Dirty $U$]\label{dirty-unitary}
	Construct a circuit $C$ on at least $n$ qubits such that for all $n$-qubit states $\ket\iv $, the first $n$ qubits of $C\ket{\iv , \zs}$ measure to $U\ket\iv$ with probability at least $p$.
\end{prb}

Any circuit that computes $p$-approximate clean $U$ also computes $p$-approximate dirty $U$, because the probability that the first $n$ qubits of $C\ket{\iv, \zs}$ measure to $U\ket\iv$ is at least the probability that $C\ket{\iv, \zs}$ measures to $U\ket\iv \otimes \ket{\zs}$, a.k.a.\ the fidelity of these two states, which is at least their phase-dependent fidelity.

Given $n$, recall that the unitary transformations for $n$-qubit parity and fanout are defined respectively by $U_\oplus \ket{b,x} = \ket{b \oplus \bigoplus_{j=1}^{n-1} x_j, x}$ and $U_F \ket{b, x} = \ket{b, x_1 \oplus b, \dotsc, x_{n-1} \oplus b}$ for $b \in \bits, x = (x_1, \dotsc, x_{n-1}) \in \bits^{n-1}$. Define the clean and dirty versions of approximating $n$-qubit parity and fanout as instances of \cref{clean-unitary,dirty-unitary} with respect to $U_\oplus$ and $U_F$.

Recall also that $H^{\otimes n} U_\oplus H^{\otimes n} = U_F$~\cite{Gre+02}. We will henceforth write ``$(p,n)$" as an abbreviation for ``$p$-approximate, $n$-qubit". If a circuit $C_\oplus$ computes $(p,n)$ clean parity then the circuit $C_F \coloneqq (H^{\otimes n} \otimes I) C_\oplus (H^{\otimes n} \otimes I)$ computes $(p,n)$ clean fanout, because
\begin{align*}
\max_{\ket\iv} \norm{(C_F - U_F \otimes I)\ket{\iv, \zs}}
&= \max_{\ket\aov} \norm{(H^{\otimes n} \otimes I)(C_F - U_F \otimes I)(H^{\otimes n} \otimes I)\ket{\aov, \zs}} \\
&= \max_{\ket\aov} \norm{(C_\oplus - U_\oplus \otimes I)\ket{\aov, \zs}}.
\end{align*}
Here we made the substitution $\ket\aov = H^{\otimes n}\ket\iv$, and used the facts that $H^2 = I$ and that applying a unitary transformation to a vector does not change the norm of that vector. Similarly, if $C_\oplus$ computes $(p,n)$ \emph{dirty} parity then the same circuit $C_F$ given above computes $(p,n)$ \emph{dirty} fanout, because
\begin{align*}
\min_{\ket\iv} |\bra{\iv,\zs} (\adj U_F \otimes I) C_F \ket{\iv,\zs}|^2 \\
= \min_{\ket\aov} |\bra{\aov,\zs} (H^{\otimes n} \otimes I) (\adj U_F \otimes I) (H^{\otimes n} \otimes I) (H^{\otimes n} \otimes I) C_F (H^{\otimes n} \otimes I) \ket{\aov,\zs}|^2 \\
= \min_{\ket\aov} |\bra{\aov,\zs} (\adj U_\oplus \otimes I) C_\oplus \ket{\aov,\zs}|^2.
\end{align*}

\begin{prb}[$p$-approximate Clean Restricted Fanout]\label{crf}
	Construct a circuit $C$ on at least $n$ qubits such that for all one-qubit states $\ket\iv$, the phase-dependent fidelity of $C\ket{\iv, 0^{n-1}, \zs}$ and $U_F \ket{\iv, 0^{n-1}} \otimes \ket{\zs}$ is at least $p$.
\end{prb}

\begin{prb}[$p$-approximate Dirty Restricted Fanout]\label{drf}
	Construct a circuit $C$ on at least $n$ qubits such that for all one-qubit states $\ket\iv$, the first $n$ qubits of $C\ket{\iv, 0^{n-1}, \zs}$ measure to $U_F \ket{\iv, 0^{n-1}}$ with probability at least $p$.
\end{prb}

Any circuit computing $(p,n)$ clean (resp.\ dirty) fanout trivially computes $(p,n)$ clean (resp.\ dirty) restricted fanout, and similarly to the above, any circuit computing $(p,n)$ clean restricted fanout also computes $(p,n)$ dirty restricted fanout.

\begin{prb}[$p$-approximate Clean $\ket{\Cat_n}$]\label{clean-state}
	Construct a circuit $C$ on at least $n$ qubits such that the phase-dependent fidelity of $C\ket{\zs}$ and $\ket{\Cat_n, \zs}$ is at least $p$.
\end{prb}

\begin{prb}[$p$-approximate Dirty $\ket{\Cat_n}$]\label{dirty-state}
	Construct a circuit $C$ on at least $n$ qubits such that the first $n$ qubits of $C\ket{\zs}$ measure to $\ket{\Cat_n}$ with probability at least $p$.
\end{prb}

By making the substitution $\ket\iv = \ket+$ in \cref{crf,drf} and using the fact that $\ket{\Cat_n} = U_F\ket{+,0^{n-1}}$, it is easy to see that if a circuit $C$ computes $(p,n)$ clean (resp.\ dirty) restricted fanout then the circuit $C(H \otimes I)$ solves the $(p,n)$ clean (resp.\ dirty) cat problem. Similarly to the above, any circuit solving the $(p,n)$ clean cat problem also solves the $(p,n)$ dirty cat problem. Finally, recall the following:

\begin{prb}[$p$-approximate $n$-nekomata]
	Construct a circuit $C$ such that for some $n$-nekomata $\ket \nu$, the fidelity of $C\ket{\zs}$ and $\ket \nu$ is at least $p$.
\end{prb}

Any circuit $C$ solving the $(p,n)$ dirty cat problem also solves the $p$-approximate $n$-nekomata problem, because the probability that the initial qubits of $C\ket{\zs}$ measure to $\ket{\Cat_n}$ equals the maximum over all states $\ket\dov$ of the fidelity of $C\ket{\zs}$ and $\ket{\Cat_n,\dov}$, and the latter state is an $n$-nekomata.

\subsection{Reducing Clean Parity to Constructing Nekomata}\label{red-sec}

\begin{rst*}[first paragraph of \cref{red}]
	For all $\eps \ge 0$, if there is a QAC circuit of size $s$, depth $d$, and number of qubits acted on $a$ that solves the $(1-\eps)$-approximate $n$-nekomata problem, then there is a QAC circuit of size $O(s+n)$, depth $4d+3$, and number of ancillae $a$ that solves the $(1-O(\eps))$-approximate $(n+1)$-qubit clean parity problem.
\end{rst*}

First we reduce exact $(n+1)$-qubit clean parity to the exact $n$-nekomata problem. Let $C$ be a circuit on $a$ qubits such that $\ket\nu \coloneqq C\ket{0^a}$ is an $n$-nekomata. A circuit for exact $(n+1)$-qubit clean parity is shown in \cref{pc}, where the top $n$ wires acted on by each of the $C$ and $\adj C$ subcircuits correspond to the targets of $\ket\nu$. Between times 1 and 2, and also between times 5 and 6, is a layer of $n$ copies of $\rt{\ket{11}}$, the $i$'th of which acts on the wires corresponding to the $i$'th input qubit and the $i$'th target of $\ket\nu$ for $i\in[n]$. The gate $\rt{\ket{11}}$ is better known as a controlled $Z$ gate, and acts as $\ket{xy} \mapsto (-1)^{xy} \ket{xy}$ for $x,y \in \bits$. Between times 3 and 4 is an OR gate (recall \cref{or-gate}).

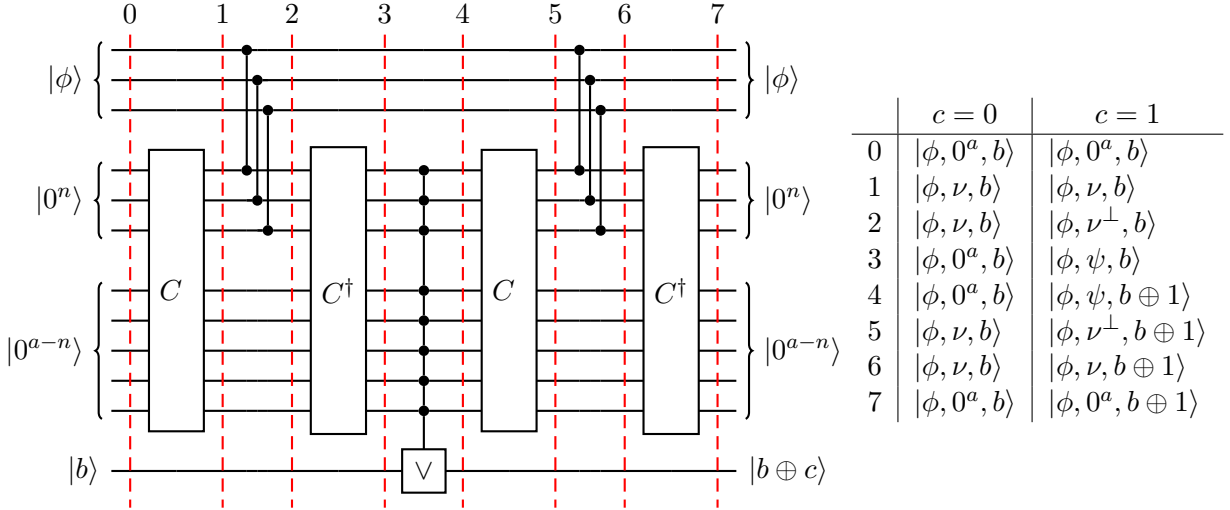
\begin{figure}[!b]
	\centerline{
		\begin{quantikz}[row sep={8mm,between origins}]
			\lstick[wires=3]{\ket\iv} \slice 0 & \qw \slice 1 & \ctrl{3} \qw &[-5mm] \qw &[-5mm] \qw \slice 2 & \qw \slice 3 & \qw \slice 4& \qw \slice 5 & \ctrl{3} \qw &[-5mm] \qw &[-5mm] \qw \slice 6 & \qw \slice 7 &  \rstick[wires=3]{\ket\iv} \qw \\[-4mm]
			& \qw & \qw & \ctrl{3} & \qw \qw & \qw & \qw & \qw & \qw & \ctrl{3} & \qw & \qw & \qw \\[-4mm]
			& \qw & \qw & \qw & \ctrl{3} & \qw & \qw & \qw & \qw & \qw & \ctrl{3} & \qw & \qw \\
			\lstick[wires=3]{\ket{0^n}} & \gate[wires=8]{C^{\ }} & \control{} & \qw & \qw & \gate[wires=8]{\adj C} & \orctrl{1} & \gate[wires=8]{C^{\ }} & \control{} & \qw & \qw & \gate[wires=8]{\adj C} & \rstick[wires=3]{\ket{0^n}} \qw \\[-4mm]
			& & \qw & \control{}\qw & \qw & & \orctrl{1} & & \qw & \control{} & \qw & & \qw \\[-4mm]
			& & \qw & \qw & \control{}\qw & & \orctrl{1}  & & \qw & \qw & \control{} & & \qw \\
			\lstick[wires=5] {\ket{0^{a-n}}} & & \qw & \qw & \qw & & \orctrl{1} & & \qw & \qw & \qw & & \rstick[wires=5] {\ket{0^{a-n}}} \qw \\[-4mm]
			& & \qw & \qw & \qw & & \orctrl{1} & &  \qw & \qw & \qw & & \qw \\[-4mm]
			& & \qw & \qw & \qw & & \orctrl{1} & &  \qw & \qw & \qw & & \qw\\[-4mm]
			& & \qw & \qw & \qw & & \orctrl{1} & &  \qw & \qw & \qw & & \qw \\[-4mm]
			& & \qw & \qw & \qw & & \orctrl{1} & &  \qw & \qw & \qw & & \qw \\
			\lstick{\ket b} & \qw & \qw & \qw & \qw & \qw & \gate \vee &  \qw & \qw & \qw & \qw & \qw & \rstick{\ket{b\oplus c}} \qw
		\end{quantikz}
		\begin{tabular}{l|l|l}
			& \multicolumn{1}{c|}{$c = 0$} & \multicolumn{1}{c}{$c=1$} \\ \hline
			0 & $\ket{\iv,0^a,b}$ & $\ket{\iv,0^a,b}$ \\
			1 & $\ket{\iv,\nu,b}$ & $\ket{\iv,\nu,b}$ \\
			2 & $\ket{\iv,\nu,b}$ & $\ket{\iv,\nu^\perp,b}$ \\
			3 & $\ket{\iv,0^a,b}$ & $\ket{\iv,\dov,b}$ \\
			4 & $\ket{\iv,0^a,b}$ & $\ket{\iv,\dov,b\oplus 1}$ \\
			5 & $\ket{\iv,\nu,b}$ & $\ket{\iv,\nu^\perp,b\oplus 1}$ \\
			6 & $\ket{\iv,\nu,b}$ & $\ket{\iv,\nu,b\oplus 1}$ \\
			7 & $\ket{\iv,0^a,b}$ & $\ket{\iv,0^a,b\oplus 1}$
	\end{tabular}}
	\caption{A circuit for parity, assuming $C$ constructs an $n$-nekomata. See the surrounding text for information about the other variables mentioned.}
	\label{pc}
\end{figure}

Let $\ket{\nu^\perp} = ((\kb{0^n} - \kb{1^n}) \otimes I_{a-n})\ket\nu$ and $\ket\dov = \adj C \ket{\nu^\perp}$. Then $\ip{0^a}\dov = \bra{0^a} \adj C \ket{\nu^\perp} = \ip\nu{\nu^\perp} = 1/2 - 1/2 = 0$, so $\ket\dov$ is a superposition of standard basis states with Hamming weight at least 1. It follows that if $\ket\iv$ is an $n$-qubit standard basis state with parity $c$, and if $b$ is a classical bit, then the table in \cref{pc} indicates the state at each time in the computation. In particular, the circuit in \cref{pc} correctly computes parity on the input $\ket{\iv,b}$, so by linearity this circuit correctly computes parity on arbitrary inputs.

\begin{rmk*}
	The truncation of the circuit in \cref{pc} at time 4 does not compute dirty parity, because if $\ket\iv$ is a superposition of standard basis states with different parities then at time 4 the ancillae are entangled with the input qubits.
\end{rmk*}

Now assume only that there exists an $n$-nekomata $\ket\nu$ such that $|\bra\nu C \ket{0^a}|^2 \ge 1-\eps$. Without loss of generality, multiply $\ket\nu$ by a phase factor so that $\bra\nu C \ket{0^a}$ is a nonnegative real number, i.e.\ $\bra \nu C \ket{0^a} \ge \sqrt{1-\eps} \ge 1-\eps$. To prove that the circuit $D$ from \cref{pc} computes $(1-O(\eps))$-approximate clean parity, it suffices to show that $\norm{(D - U_\oplus \otimes I_a)\ket{\Phi,0^a}} \le O(\sqrt\eps)$ for all $(n+1)$-qubit states $\ket\Phi$. Write $\ket\Phi = \sum_{b,c=0}^1 \alpha_{b,c} \ket{b,\iv_{b,c}}$ such that $\sum_{b,c} |\alpha_{b,c}|^2 = 1$, and $\ket{\iv_{b,c}}$ is a superposition of $n$-qubit standard basis states with parity $c$. By the triangle inequality it suffices to prove that $\norm{(D - U_\oplus \otimes I)\ket{\iv_{b,c},0^a,b}} \le O(\sqrt\eps)$ for all $b,c$, where we have written the expressions $\iv_{b,c},0^a,b$ in the same order as in \cref{pc}.

Again let $\ket{\nu^\perp} = ((\kb{0^n} - \kb{1^n}) \otimes I_{a-n})\ket\nu$, and now let $\ket\dov = \frac{(I - \kb{0^a}) \adj C \ket{\nu^\perp}} {\norm{(I - \kb{0^a}) \adj C \ket{\nu^\perp}}}$. Since $\ip\nu{\nu^\perp} = 0$ and $\ip{0^a}\dov = 0$, there exists an $a$-qubit unitary transformation $U$ such that $U\ket{0^a} = \ket\nu$ and $U\ket\dov = \ket{\nu^\perp}$. For $t \in [7]$ let $L_t$ be the subcircuit between times $t-1$ and $t$ in \cref{pc}, and let $M_t$ be the circuit formed by substituting $U$ for $C$ and $\adj U$ for $\adj C$ in $L_t$. Fix some $b$ and $c$, and let $\ket\iv = \ket{\iv_{b,c}}$. Then, by the triangle inequality,
\begin{align*}
	\norm{(D - U_\oplus \otimes I)\ket{\iv,0^a,b}}
	&= \norm{(L_7 \dotsb L_1 - M_7 \dotsb M_1) \ket{\iv, 0^a, b}} \\
	&\le \sum_{t=1}^7 \norm{L_7 \dotsb L_{t+1} (L_t - M_t) M_{t-1} \dotsb M_1 \ket{\iv, 0^a, b}} \\
	&= \sum_{t=1}^7 \norm{(L_t - M_t) M_{t-1} \dotsb M_1 \ket{\iv, 0^a, b}}
	\coloneqq \delta.
\end{align*}
If $c = 0$ then
\begin{align}\label{cz}
\begin{split}
\delta
	&= 2\norm{(C - U)\ket{0^a}} + 2\norm{(\adj C - \adj U)\ket\nu}
	= 2\norm{C\ket{0^a} - \ket\nu} + 2\norm{C(\adj C - \adj U)\ket\nu} \\
	&= 4\norm{C\ket{0^a} - \ket\nu}
	= 4\sqrt{2 - 2\bra\nu C \ket{0^a}}
	\le 4\sqrt{2-2(1-\eps)}
	\le O(\sqrt\eps).
\end{split}
\end{align}
If $c=1$ then
\begin{equation*}
\delta = \norm{(C-U) \ket{0^a}} + \norm{(\adj C - \adj U) \ket{\nu^\perp}} + \norm{(C-U) \ket\dov} + \norm{(\adj C - \adj U) \ket\nu},
\end{equation*}
and it follows from \cref{cz} that
\begin{equation*}
\delta
= O(\sqrt\eps) + \norm{C(\adj C - \adj U) \ket{\nu^\perp}} + \norm{C\ket\dov - \ket{\nu^\perp}}
= O(\sqrt\eps) + 2\norm{C\ket\dov - \ket{\nu^\perp}}.
\end{equation*}
Finally, since $(I - \kb\nu) - \kb{\nu^\perp}$ is positive semidefinite,
\begin{align*}
\norm{C\ket\dov - \ket{\nu^\perp}}^2
&= 2 - \bra{\nu^\perp}C\ket{\dov} - \bra\dov \adj C \ket{\nu^\perp}
= 2 - 2\norm{(I - \kb{0^a}) \adj C \ket{\nu^\perp}} \\
&= 2 - 2\sqrt{1 - |\bra{\nu^\perp} C \ket{0^a}|^2}
\le 2 - 2\sqrt{1 - \norm{(I - \kb\nu) C \ket{0^a}}^2} \\
&= 2 - 2\bra\nu C \ket{0^a} \le 2 - 2(1-\eps) \le O(\eps).
\end{align*}

\subsection{Proof of \texorpdfstring{\cref{par-ub} ($d \ge 11$)}{Corollary 1.2 (d >= 11)} Assuming \texorpdfstring{\cref{nek-ub-d2}}{Theorem 1.1}}\label{par-ub-sec}

\begin{lem}[essentially Green et al.~\cite{Gre+02}]\label{srrf}
	For all $m \ge 2$ there is a quantum circuit of depth $\ceil{\log_m n}$ and size at most $n-1$, consisting only of restricted fanout gates of arity at most $m$, that computes $n$-qubit restricted fanout exactly using no ancillae.
\end{lem}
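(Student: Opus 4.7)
The plan is to construct the circuit recursively by an $m$-ary tree on the qubits. For the base case, when $n \le m$, apply a single $n$-ary restricted fanout gate to all $n$ qubits (or no gate at all if $n=1$). This uses depth at most $1 = \lceil \log_m n \rceil$ and size at most $1 \le n-1$ for $n \ge 2$, and trivially correctly produces $\ket{b^n}$.

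For the inductive step with $n > m$: partition the $n$ qubits into $m$ non-empty groups $G_1, \ldots, G_m$, each of size at most $\lceil n/m \rceil$, with the distinguished qubit (initially in state $\ket b$) placed in $G_1$. From each group pick one ``representative'', taking the distinguished qubit as the representative of $G_1$; all other representatives start in state $\ket 0$. Apply a single $m$-ary restricted fanout to the $m$ representatives, which maps $\ket{b,0^{m-1}}$ to $\ket{b^m}$. Now each group contains exactly one qubit in state $\ket b$ (its representative) and $|G_i|-1$ qubits in state $\ket 0$, so in parallel apply the recursively constructed $|G_i|$-qubit restricted fanout circuit within each group.

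For the parameter bounds, let $T(n)$ and $S(n)$ denote the depth and size of this circuit. For depth, the recursion $T(n) \le 1 + T(\lceil n/m \rceil)$ together with $\lceil n/m \rceil \le m^{\lceil \log_m n \rceil - 1}$ (valid when $n \ge 2$) unwinds by induction to $T(n) \le \lceil \log_m n \rceil$. For size, using the inductive hypothesis $S(k) \le k - 1$,
\begin{equation*}
S(n) \le 1 + \sum_{i=1}^m S(|G_i|) \le 1 + \sum_{i=1}^m (|G_i|-1) = 1 + n - m \le n - 1,
\end{equation*}
where the last inequality uses $m \ge 2$. Correctness is immediate: after the top-level gate each group has exactly one qubit in state $\ket b$, and the recursive calls finish the broadcast within each group, all while using only the original $n$ qubits.

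The only subtlety, really the main bookkeeping point, is verifying that the partition is always realizable (which holds since $n \ge m$ implies we can split into $m$ non-empty parts each of size at most $\lceil n/m \rceil$) and handling the ceiling arithmetic in the depth bound; both are routine.
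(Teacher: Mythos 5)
Your proof is correct and uses essentially the same $m$-ary tree construction as the paper: a layer of arity-$\le m$ restricted fanout gates followed (or preceded) by recursion, with the same telescoping size count $1 + (n-m) \le n-1$. The only cosmetic difference is the direction of the induction — you peel off the \emph{first} layer and recurse on $m$ groups of size $\le \lceil n/m\rceil$, whereas the paper inducts on the depth $d = \lceil\log_m n\rceil$ and peels off the \emph{last} layer, first reaching $m^{d-1}$ copies and then finishing with one more layer of gates. Both give the same circuit topology and bounds.
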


\begin{proof}
	By linearity it suffices to consider input states of the form $\ket{b, 0^{n-1}}$ for $b \in \bits$. The proof is by induction on $d = \ceil{\log_m n}$, for a fixed value of $m$. Note that $d-1 < \log_m n \le d$, so $m^{d-1} < n \le m^d$. If $d = 0$ then $n=1$ and the identity circuit suffices. If $d>0$ then by induction we can map $\ket{b,0^{n-1}}$ to $\ket{b^{m^{d-1}}, 0^{n - m^{d-1}}}$ in depth $d-1$ and size at most $m^{d-1}-1$. Let $n_1, \dotsc, n_{m^{d-1}} \in [m]$ be such that $\sum_i n_i = n$, and compute $\bigotimes_i U_F \ket{b, 0^{n_i - 1}} = \ket{b^n}$. Since $\bigotimes_i U_F$ has size at most $n-m^{d-1}$ (omitting one-qubit gates), the total size of the circuit is at most $(n-m^{d-1}) + (m^{d-1}-1) = n-1$.
\end{proof}

\begin{rst*}[\cref{par-ub}]
	For all $d \ge 7$ and $\eps > 0$ there exist depth-$d$ QAC circuits $C_\oplus, C_F, C_{\sscat}$ of size and number of ancillae $\exp(\poly(n^{1/d}) \log(n/\eps))$, where the $\poly(n^{1/d})$ term is at most $O(n)$, such that for all $n$-qubit states $\ket\iv$,
	\begin{itemize}[leftmargin=*]
		\renewcommand\labelitemi{--}
		\item the phase-dependent fidelity of $C_\oplus \ket{\iv,0\dots 0}$ and $U_\oplus \ket\iv \otimes \ket{\zs}$ is at least $1-\eps$;
		\item the phase-dependent fidelity of $C_F \ket{\iv,\zs}$ and $U_F \ket\iv \otimes \ket{\zs}$ is at least $1-\eps$;
		\item the phase-dependent fidelity of $C_{\sscat}\ket{\zs}$ and $\ket{\Cat_n,\zs}$ is at least $1-\eps$.
	\end{itemize}
\end{rst*}

\begin{proof}[Proof ($d \ge 11$) assuming \cref{nek-ub-d2}]
	Recall that \cref{nek-ub-d2} states that for all $\eps>0$, the $(1-\eps)$-approximate $n$-nekomata problem is solvable by a depth-2 QAC circuit of size and number of qubits acted on $\exp(O(n\log(n/\eps)))$. It follows from \cref{red} that for all $\eps>0$, the $(1-\eps)$-approximate $n$-qubit clean parity, fanout, and cat problems are solvable by depth-11 QAC circuits of size and number of ancillae $\exp(O(n\log(n/\eps)))$.
	
	Now let $\eps>0$ and $N,d \in \N$, and let $n = \ceil{N^{1/d}}$. By \cref{srrf} there exists a depth-$d$ circuit of size at most $N-1$, consisting only of $(\le n)$-ary restricted fanout gates, that computes $N$-qubit restricted fanout exactly using no ancillae. Replace each gate in this circuit with the aforementioned depth-11 QAC circuit computing $(1-\eps)$-approximate clean restricted fanout on the appropriate number of qubits. The result is a depth-$11d$ QAC circuit of size and number of ancillae $(N-1) \cdot \exp(O(n\log(n/\eps)))$, and by an argument involving the triangle inequality similar to that in \cref{red-sec}, it computes $(1-O(N^2\eps))$-approximate $N$-qubit clean restricted fanout. It follows from \cref{red} that there exist depth-$O(d)$ QAC circuits of size and number of ancillae $O(N) \cdot \exp(O(n\log(n/\eps)))$ that compute $(1-O(N^2\eps))$-approximate $N$-qubit clean parity, fanout and cat. Finally, given $\mathcal E> 0$ and a sufficiently large value $D$, substituting appropriate values $\eps = \Theta(\mathcal E/N^2)$ and $d = \Theta(D)$ into the above gives depth-$D$ QAC circuits solving the $(1-\mathcal E)$-approximate $N$-qubit clean parity, fanout, and cat problems, where the size and number of ancillae are $\exp(\poly(N^{1/D}) \log(N/\mathcal E))$ (using the fact that $n \le N^{1/d}+1 = \poly(N^{1/d})$).
\end{proof}

\section{Tight Bounds for Constructing Approximate Nekomata in \texorpdfstring{``}{"}Mostly Classical" Circuits} \label{mc-sec}
Call a QAC circuit \emph{purely classical} if it consists only of generalized Toffoli gates (including NOT gates, which are generalized Toffoli gates on one qubit). Call a QAC circuit \emph{mostly classical} if it can be written as $CL$ such that $C$ is purely classical and $L$ is a layer of \rtt gates; by \cref{conj} this is equivalent to the definition from \cref{mc-intro}. Call a mostly classical QAC circuit \emph{nice} if it can be written as $CL$ in this way such that every multi-qubit gate $\rt\cv$ in $L$ satisfies $|\ip{\zs}\cv|^2 \le 1/4$. (The niceness condition will allow us to express certain quantities as convex combinations in a convenient way, by ensuring that the coefficients in these convex combinations are between 0 and 1.) We prove the following generalizations of \cref{nek-ub-d2,mc-nek-lb-intro} respectively:

\begin{thm}\label{nek-ub}
	For all $2 \le d \le \log n$ and $\eps>0$ there exists a nice, mostly classical, depth-$d$ QAC circuit $C$ of size and number of qubits acted on $\exp(O(n2^{-d} \log(n2^{-d}/\eps))) + O(n)$ such that $C\ket\zs$ has fidelity at least $1-\eps$ with some $n$-nekomata.
\end{thm}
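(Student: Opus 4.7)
The plan is to build the nekomata by a simple doubling scheme: use the nice, mostly classical, depth-2 construction of \cref{nek-ub-d2} (as noted in \cref{nek-ub-d2-mc}) to construct an approximate nekomata on a small set of $m_0 \coloneqq \lceil n/2^{d-2}\rceil$ targets, and then use $d-2$ additional layers of generalized Toffoli gates to double the number of targets from $m_0$ to $2m_0$ to $4m_0$ to \dots, truncating in the final step so as to obtain exactly $n$ targets. The assumption $d \le \log n$ guarantees $m_0 \ge 4$, so the recursion is well-defined.

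The doubling step is the key observation. Given any $m$-nekomata $\ket\nu = \frac{1}{\sqrt 2}\left(\ket{0^m, \dov_0} + \ket{1^m, \dov_1}\right)$, append $m$ fresh ancillas in $\ket{0^m}$ and, in a single layer, apply $m$ parallel $(m+1)$-ary generalized Toffoli gates, each controlled on the $m$ original targets and with a distinct fresh ancilla as its target. The resulting state is $\frac{1}{\sqrt 2}\left(\ket{0^{2m}, \dov_0} + \ket{1^{2m}, \dov_1}\right)$, which is a $2m$-nekomata. Because unitaries preserve fidelity, applying this exact doubling to a state that is $(1-\eps)$-close to some $m$-nekomata produces a state $(1-\eps)$-close to the corresponding $2m$-nekomata. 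Iterating $d-2$ times (shrinking the last step if needed) therefore yields a depth-$d$ circuit whose output has fidelity $\ge 1-\eps$ with some $n$-nekomata.

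For the resource counts: the initial depth-2 subcircuit has size and qubit-count $\exp(O(m_0\log(m_0/\eps))) = \exp(O(n2^{-d}\log(n2^{-d}/\eps)))$; the $k$-th doubling step uses $m_k \le 2^k m_0$ generalized Toffolis and introduces $m_k$ fresh qubits, contributing in total
\begin{equation*}
	\sum_{k=0}^{d-3} m_k \le (2^{d-2}-1)\, m_0 = O(n)
\end{equation*}
extra gates and qubits. Summing gives the stated bound. For the structural conditions, note that the assembled circuit has the form $(\text{purely classical Toffoli layers})\cdot(CLM\adj L)$ where $CLM\adj L$ is the depth-2 nice mostly classical circuit from \cref{nek-ub-d2}; absorbing all the Toffoli layers into $C$ shows the whole circuit is itself mostly classical of depth $d$, and niceness is inherited from the base case because the doubling steps add no new \rtt gates.

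The main ``obstacle'' is really \cref{nek-ub-d2} itself: once a nice, mostly classical depth-2 construction of approximate nekomata exists, the rest of this theorem is essentially a bookkeeping exercise, with the only minor subtlety being the integer rounding needed to end with exactly $n$ targets.
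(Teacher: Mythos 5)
Your high-level plan is the same as the paper's: construct an approximate $m$-nekomata with the depth-2 circuit on $m=\lceil n/2^{d-2}\rceil$ targets, then ``fan out'' to $n$ targets using the remaining $d-2$ layers of purely classical gates. But your doubling step, as written, does not fit in a single layer. You propose to apply $m$ parallel $(m+1)$-ary Toffoli gates, each with all $m$ original targets as controls and a distinct fresh ancilla as target. Those $m$ gates share the $m$ control wires, so they are \emph{not} a tensor product of gates on disjoint qubits, hence not a ``layer'' in the sense of this paper (see the definition in \cref{qac}). As a subcircuit they would contribute depth $m$, not depth $1$, and the final circuit would have depth far exceeding $d$.

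The fix is straightforward and recovers essentially the paper's argument. Replace each $(m+1)$-ary Toffoli by a $2$-ary Toffoli (a CNOT) controlled on the $i$-th original target alone, with the $i$-th fresh ancilla as its target. These $m$ CNOTs act on pairwise disjoint pairs of qubits, so they do form a layer, and they map $\ket{b^m}\otimes\ket{0^m}\mapsto\ket{b^m}\otimes\ket{b^m}$ exactly as needed. Equivalently one can apply, per target of the base nekomata, the depth-$\lceil\log k\rceil$, size-$(k-1)$ restricted-fanout circuit built from CNOTs as in \cref{srrf}, where the group sizes $k\le 2^{d-2}$ are chosen so that the total number of target qubits comes out to exactly $n$ -- this is what the paper does, and it handles the ``truncation in the last step'' that you mention without any special casing. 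Everything else in your writeup (the fidelity preservation under the exact unitary, the $O(n)$ accounting of extra gates and ancillae, and the structural observations about mostly-classical and niceness being inherited from the base circuit) is correct once the doubling step is repaired.
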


\begin{thm}\label{mc-nek-lb}
	Let $C$ be a mostly classical circuit of size $s$ and depth $d$.
	\begin{enumerate}[label=(\roman*), ref={\thethm(\roman*)}]
		\item The fidelity of $C\ket{\zs}$ and any $n$-nekomata is at most\label[thm]{mc-nek-lb-mean}
		\begin{equation*}
			\frac{1}2 + 
			\exp\left(-\Omega\left(\frac{n/(4^d \log n)}{\max\left(\log s, \sqrt{n/(4^d \log n)}\right)}\right)\right).
		\end{equation*}
		\item If $C$ is nice, then the fidelity of $C\ket{\zs}$ and any $n$-nekomata is at most\label[thm]{mc-nek-lb-nice}
		\begin{equation*}
		\frac{1}2 + \exp\left(-\Omega\left(\frac{n/2^d}{\max\left(\log s, \sqrt{n/2^d} \right)} \right) \right).
		\end{equation*}
	\end{enumerate}
\end{thm}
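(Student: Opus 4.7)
The plan is to bound the fidelity between $C\ket\zs$ and an arbitrary $n$-nekomata by showing concentration of the Hamming weight of a standard-basis measurement of the $n$ target qubits, and then applying the concentration inequality for read-$k$ families of Boolean functions due to Gavinsky, Lovett, Saks and Srinivasan~\cite{Gav+15}. For an $n$-nekomata $\ket\nu = \frac{1}{\sqrt 2}(\ket{0^n,\psi_0}+\ket{1^n,\psi_1})$ with target set $T$, expanding $C\ket\zs$ in the standard basis on $T$ and applying Cauchy--Schwarz gives $|\bra\nu C\ket\zs|^2 \le \tfrac12\bigl(\sqrt{P_0^T}+\sqrt{P_1^T}\bigr)^2$, where $P_b^T$ is the probability that a standard-basis measurement of the qubits in $T$ yields $b^n$. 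Since $P_0^T + P_1^T \le 1$, this is at most $\tfrac12 + \sqrt{\min(P_0^T,P_1^T)}$, so it suffices to show that for every $n$-subset $T$, one of $P_0^T$ or $P_1^T$ is exponentially small in the relevant quantity.

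Next I would push the measurement past the Toffoli part. Write $C=C_0 L$ with $C_0$ purely classical and $L$ a layer of \rtt gates. Since standard-basis measurements commute with generalized Toffoli gates, the distribution of a standard-basis measurement of $C\ket\zs$ on $T$ equals that of $C_0(X)|_T$, where $X$ is the measurement outcome of $L\ket\zs$. The gates of $L$ have disjoint supports, so $X$ decomposes as independent blocks $X^{(1)},\dotsc,X^{(K)}$, one per \rtt gate. In the nice case a direct calculation shows that each block's distribution is exactly the convex mixture $\lambda_k \delta_{\zs}+(1-\lambda_k)\mu_k$ with $\lambda_k=1-4|\ip{\zs}{\theta_k}|^2\in[0,1]$ and $\mu_k$ a product distribution over the bits of $X^{(k)}$---this is precisely why the threshold $1/4$ appears in the definition of ``nice''. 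Conditioning on the mixture choices then exposes independence at the level of individual bits.

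The core of the proof is then an application of GLSS. A light-cone argument in $C_0$ bounds the read parameter by $2^d$: in each layer of $C_0$, every wire participates in at most one generalized Toffoli gate, so the set of wires whose final value depends on a given input bit at most doubles per layer. After conditioning on the mixture choices, this yields a read-$2^d$ family of Boolean functions of independent bits, and GLSS produces tail bounds on $W = \sum_{j\in T}Y_j$ around its mean with two regimes---a sub-Gaussian one of exponent $t^2/(n\cdot 2^d)$ and a Bernstein-type exponential one of exponent $t/2^d$ weighted by a per-summand influence proportional to $\log s$---giving rise to the $\max(\log s,\sqrt{n/2^d})$ in the denominator of the exponent in \cref{mc-nek-lb-nice}. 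Applied at deviation $\max(\mathbb E[W],n-\mathbb E[W])\ge n/2$, this bounds $\min(P_0^T,P_1^T)$ by the required quantity, completing the nice case.

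The hard part will be the non-nice case of \cref{mc-nek-lb-mean}: when $|\ip{\zs}{\theta_k}|^2>1/4$, the coefficient $\lambda_k$ above is negative, so the mixture decomposition breaks and the bits of $X^{(k)}$ have genuine correlations that resist GLSS. I plan to reduce to the nice case by decomposing each non-nice \rtt gate into $O(1)$ nice \rtt gates (doubling the effective depth and accounting for the $2^d\to 4^d$ loss in the exponent) and by handling the remaining overlap through a union bound over $O(\log n)$ dyadic scales of $|\ip{\zs}{\theta_k}|^2$ (accounting for the extra $\log n$). Ensuring that this decomposition interacts cleanly with the downstream read-$k$ accounting in $C_0$, without spoiling either the independence of the blocks or the $2^d$ light-cone bound, will be the principal technical challenge.
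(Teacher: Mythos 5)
Your setup is right and matches the paper through the first several steps: the Cauchy--Schwarz reduction to bounding $\min(P_0^T,P_1^T)$ is exactly \cref{nek-lb}; commuting the standard-basis measurement through the Toffoli part, the independence of the blocks from the \rtt layer, the mixture decomposition $\lambda_k\delta_{\zs}+(1-\lambda_k)\mu_k$ for nice gates (which is \cref{convex-comb}), and the read-$2^d$ light-cone bound on the purely classical part are all correct and are all used in the paper.

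However, there is a genuine gap in the nice case. You say you will ``condition on the mixture choices'' to expose independence, and then apply \cref{glss} at deviation $n/2$ from the mean. But once you condition on the mixture bits $b=(b_G)_G$, the mean $\Ex[W\mid b]$ depends on $b$, so concentration around it does not directly bound $\min(P_0^T,P_1^T)$; you need $\Ex[W\mid b]$ to be close to a single fixed value for (almost) all $b$. Your attempted fix---a ``Bernstein-type exponential one of exponent $t/2^d$ weighted by a per-summand influence proportional to $\log s$''---is not a feature of GLSS as stated (it gives a single sub-Gaussian bound), and that is not where the $\log s$ comes from. In the paper, $\log s$ enters because, after discarding the few gates $G$ with $\mu_G>2\ln s$ (whose $b_G$ is very unlikely to be $1$) and conditioning on a ``good'' event ensuring each remaining gate's block of $x$-bits has at most $O(\log s)$ ones, the output bits are \emph{also} a read-$O(2^d\log s)$ family as functions of the $b_G$'s alone. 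The proof then chains $W(b,z)\approx\Ex[W(b,\cdot)\mid b]\approx W(b,z')\approx\Ex[W(\cdot,z')\mid z']$ for an independent copy $z'$ using GLSS twice with these two different read parameters, and only then pins down a fixed value around which $W$ concentrates. Without this hybrid argument (or something playing the same role), your argument does not go through.

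For the non-nice case your plan of decomposing each non-nice \rtt gate into $O(1)$ nice \rtt gates will not work: a product of two rank-one reflections is a planar rotation, not a reflection, so $\rt\cv$ does not factor that way, and at the distribution level the signed weight $1-4\prod_j(1-p_j)<0$ cannot be absorbed into a convex mixture of ``nice'' output distributions. The paper instead writes the conditional distribution of $(X_j)_j$ given $\min R<1$ as a sequential sampling procedure: a tree $\tau^{(G)}$ of depth $O(2^d\log n)$ is used to sample $\argmin R$ by highlighting one edge per internal node, then $\min R$ and the remaining coordinates are sampled conditionally, and the highlighted-edge indicators at each tree level form additional independent seeds whose influence on the output is bounded. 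The combination of the $O(2^d\log n)$ tree depth with the $2^d$ light cone and the $\log s$ goodness condition, fed through a sub-Gaussian MGF version of McDiarmid's inequality (\cref{gauss-mgf}), is what produces the $4^d\log n$ in the exponent---not a dyadic union bound over the sizes of $|\ip{\zs}{\theta_k}|^2$.
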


\cref{nek-ub,mc-nek-lb-nice} imply that for $d \ge 2$, the minimum size of a nice, mostly classical, depth-$d$ QAC circuit that ``constructs an approximate $n$-nekomata" (i.e.\ maps $\ket{\zs}$ to a state that has fidelity at least 3/4 with some $n$-nekomata) is between $\exp(\Omega(n/2^d))$ and $\exp(\tilde O(n/2^d)) + O(n)$. We prove the $d>2$ case of \cref{nek-ub} solely for the sake of comparison with \cref{mc-nek-lb-nice}, as \cref{nek-ub} gives a weaker upper bound than \cref{par-ub} when $\omega(1) \le d \le o(\log n)$. \cref{mc-nek-lb} makes a stronger statement about nice circuits than about non-nice circuits, since $a/\max(\log s, \sqrt a) = \min(a/\log s, \sqrt a)$ for all $a>0$.

In \cref{samp-sec} we make some general observations about mostly classical circuits and ``approximate nekomata", including observations common to the proofs of \cref{nek-ub,mc-nek-lb}. In \cref{mc-ub-sec} we prove \cref{nek-ub}, and in \cref{mc-lb-sec} we prove \cref{mc-nek-lb-nice}. We prove \cref{mc-nek-lb-mean} in \cref{mc-app}, because its proof has a similar high-level idea to that of \cref{mc-nek-lb-nice} and is much more complicated.
\subsection{Reduction to a Classical Sampling Problem}\label{samp-sec}

Collectively, the following observations reduce proving \cref{nek-ub,mc-nek-lb} to proving upper and lower bounds respectively for a certain type of sampling problem. This sampling problem can be succinctly characterized in purely classical and probabilistic terms, with only a transient reference to quantum circuits.

Recall that nekomata can be defined as states for which a standard-basis measurement of the targets is distributed in a certain way. The following two lemmas make similar statements about ``approximate nekomata", and are used to prove \cref{nek-ub,mc-nek-lb} respectively:

\begin{lem}\label{eg-eps-nek}
	Let $\ket\aov$ be a state with $n$ ``target" qubits that measure to all-zeros with probability exactly $1/2$ and all-ones with probability at least $1/2 - (2/3)\eps$. Then there exists an $n$-nekomata $\ket\nu$ such that $|\ip\nu\aov|^2 \ge 1-\eps$.
\end{lem}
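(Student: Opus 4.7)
The plan is to exhibit an explicit nekomata $\ket\nu$ and lower-bound the overlap $|\ip\nu\aov|^2$ directly. Decompose $\ket\aov$ along the standard basis of the $n$ target qubits as $\ket\aov = \sum_{x \in \bits^n} \ket{x} \otimes \ket{\alpha_x}$, where the $\ket{\alpha_x}$ are unnormalized vectors on the non-target register. The hypotheses translate to $p_0 \coloneqq \norm{\ket{\alpha_{0^n}}}^2 = 1/2$ and $p_1 \coloneqq \norm{\ket{\alpha_{1^n}}}^2 \ge 1/2 - (2/3)\eps$.

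The obvious candidate nekomata is the one whose non-target components are aligned with $\ket{\alpha_{0^n}}$ and $\ket{\alpha_{1^n}}$: set $\ket{\dov_b} = \ket{\alpha_{b^n}}/\norm{\ket{\alpha_{b^n}}}$ for $b \in \bits$ (choosing any unit vector if $\ket{\alpha_{b^n}} = 0$, which can only happen for $b=1$ and only when $\eps \ge 3/4$), and let $\ket\nu = \frac{1}{\sqrt 2}(\ket{0^n,\dov_0} + \ket{1^n,\dov_1})$. All components of $\ket\aov$ with $x \notin \{0^n,1^n\}$ contribute zero to $\ip\nu\aov$ since $\ket\nu$ is supported on targets $0^n$ and $1^n$, so
\begin{equation*}
\ip\nu\aov = \tfrac{1}{\sqrt 2}\bigl(\ip{\dov_0}{\alpha_{0^n}} + \ip{\dov_1}{\alpha_{1^n}}\bigr) = \tfrac{\sqrt{p_0}+\sqrt{p_1}}{\sqrt 2} = \tfrac12 + \sqrt{p_1/2}.
\end{equation*}

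Squaring and substituting the lower bound on $p_1$ gives $|\ip\nu\aov|^2 \ge \bigl(\tfrac12 + \sqrt{1/4-\eps/3}\bigr)^2$, and it remains to check this is at least $1-\eps$ for all $\eps \in [0,1]$. Expanding and rearranging reduces the claim to $\sqrt{1/4-\eps/3} \ge 1/2 - 2\eps/3$; for $\eps \ge 3/4$ the right side is nonpositive (and the bound is automatic, tight at $\eps=3/4$), while for $\eps < 3/4$ both sides are nonnegative and squaring reduces the inequality to $\eps \le 3/4$. There is no conceptual obstacle here -- the main work is just picking the right nekomata so that the cross terms in $\ip\nu\aov$ contribute constructively, and carefully handling the degenerate case $\ket{\alpha_{1^n}} = 0$ so that $\ket{\dov_1}$ is still a legal unit vector.
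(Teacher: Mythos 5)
Your proof is correct and takes essentially the same route as the paper's: both choose the nekomata whose non-target components are the normalized conditional states $(\kb{b^n}\otimes I)\ket\aov / \norm{(\kb{b^n}\otimes I)\ket\aov}$, compute $\ip\nu\aov = (\sqrt{p_0}+\sqrt{p_1})/\sqrt 2$, and bound this using $p_1 \ge 1/2 - (2/3)\eps$. The only difference is the algebraic endgame---the paper sets $\delta = 1/2 - p_1$ and uses $\sqrt{1-2\delta}\ge 1-2\delta$, while you square the reduced inequality with a case split at $\eps = 3/4$---and your explicit handling of the degenerate case $p_1 = 0$ is in fact slightly more careful than the paper's.
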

\begin{proof}
	Let $\delta \le (2/3)\eps$ be such that the targets of $\ket\aov$ measure to all-ones with probability $1/2 - \delta$. Let $\ket\nu = \frac{1}{\sqrt 2} \sum_{b=0}^1 \frac{(\kb{b^n} \otimes I) \ket\aov} {\norm{(\kb{b^n} \otimes I) \ket\aov}}$, where $\kb{b^n}$ acts on the targets of $\ket\aov$, and note that $\ket\nu$ is an $n$-nekomata. Then,
	\begin{align*}
	\ip\aov\nu^2 &= \left(\frac{1}{\sqrt 2}\sum_{b=0}^1 \norm{(\kb{b^n} \otimes I) \ket\aov}\right)^2
	= \frac{1}{2}\left(\sqrt{\frac{1}{2}} + \sqrt{\frac{1}{2} - \delta}\right)^2
	= \frac{1}{2}\left(1-\delta + \sqrt{1-2\delta}\right) \\
	&\ge \frac{1}{2}(1-\delta+1-2\delta) = 1 - (3/2)\delta \ge 1-\eps. \qedhere
	\end{align*}
\end{proof}

\begin{lem}\label{nek-lb}
	Let $\ket\aov$ be a state with $n$ ``target" qubits that measure to all-zeros with probability $p$ and all-ones with probability $q$. Then $|\ip \nu \aov|^2 \le 1/2 + \sqrt{\min(p,q)}$ for all $n$-nekomata $\ket \nu$ with the same targets as $\ket\aov$.
\end{lem}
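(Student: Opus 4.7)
The plan is to expand both $\ket\nu$ and $\ket\aov$ according to the standard basis on the $n$ target qubits, keeping the two ``special" basis states $\ket{0^n}$ and $\ket{1^n}$ separate from the others, and then bound the inner product by Cauchy--Schwarz.

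Concretely, since $\ket\nu$ is an $n$-nekomata with targets matching those of $\ket\aov$, I can write
\begin{equation*}
\ket\nu = \tfrac{1}{\sqrt 2}\bigl(\ket{0^n}\otimes\ket{\psi_0} + \ket{1^n}\otimes\ket{\psi_1}\bigr)
\end{equation*}
for some unit states $\ket{\psi_0},\ket{\psi_1}$ on the non-target qubits. For $\ket\aov$ I would decompose
\begin{equation*}
\ket\aov = \ket{0^n}\otimes\ket{\aov_0'} + \ket{1^n}\otimes\ket{\aov_1'} + \sum_{x \notin \{0^n,1^n\}} \ket{x}\otimes\ket{\aov_x'},
\end{equation*}
where the $\ket{\aov_x'}$ are unnormalized; by the definitions of $p$ and $q$, $\norm{\aov_0'}^2 = p$ and $\norm{\aov_1'}^2 = q$.

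Taking the inner product kills the ``other" basis states, leaving $\ip\nu\aov = \tfrac{1}{\sqrt 2}(\ip{\psi_0}{\aov_0'} + \ip{\psi_1}{\aov_1'})$. Cauchy--Schwarz (using $\norm{\psi_b}=1$) then gives
\begin{equation*}
|\ip\nu\aov| \le \tfrac{1}{\sqrt 2}\bigl(\norm{\aov_0'} + \norm{\aov_1'}\bigr) = \tfrac{1}{\sqrt 2}(\sqrt p + \sqrt q),
\end{equation*}
so $|\ip\nu\aov|^2 \le \tfrac12(p+q) + \sqrt{pq}$. Finally I would use $p+q \le 1$ (since measuring the targets to $0^n$ and to $1^n$ are disjoint events) and $\sqrt{pq} = \sqrt{\min(p,q)}\cdot\sqrt{\max(p,q)} \le \sqrt{\min(p,q)}$ to conclude the claimed bound.

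There isn't really a hard step here; the only subtlety is making sure the decomposition of $\ket\aov$ is written so that Cauchy--Schwarz yields the clean factor $\sqrt p + \sqrt q$, and that the final two inequalities ($p+q\le 1$ and $\sqrt{pq}\le \sqrt{\min(p,q)}$) are applied in the right order to recover the stated form rather than the slightly stronger $\tfrac12 + \sqrt{pq}$ that the argument actually yields.
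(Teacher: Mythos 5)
Your proof is correct and is essentially the same as the paper's: your explicit basis decomposition of $\ket\nu$ and $\ket\aov$ is just an unrolled version of the paper's use of the projectors $Q_b = \kb{b^n}\otimes I$, and the subsequent steps (Cauchy--Schwarz giving $\tfrac{1}{\sqrt 2}(\sqrt p + \sqrt q)$, then $p+q\le 1$ and $\sqrt{pq}\le\sqrt{\min(p,q)}$) match exactly.
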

\begin{proof}
	Let $Q_b = \kb{b^n} \otimes I$ for $b \in \bits$. By the triangle inequality and Cauchy-Schwarz,
	\begin{equation*}
	|\ip\nu\aov|
	= |\bra\nu(Q_0 + Q_1)\ket\aov|
	\le \sum_{b=0}^1 |\bra\nu Q_b \ket\aov|
	\le \sum_{b=0}^1 \norm{Q_b \ket\nu} \cdot \norm{Q_b \ket\aov}
	= \sqrt{p/2} + \sqrt{q/2},
	\end{equation*}
	so $|\ip\nu\aov|^2 \le p/2 + q/2 + \sqrt{pq} \le 1/2 + \sqrt{\min(p,q)}$.
\end{proof}

Consider a mostly classical circuit, written as $CL$ such that $C$ is purely classical and $L$ is a layer of \rtt gates. A standard-basis measurement of designated ``target" qubits of $CL\ket{\zs}$ is distributed identically to an appropriate marginal distribution of a standard-basis measurement of \emph{all} qubits of $CL\ket{\zs}$. It is easy to see that standard-basis measurements commute with generalized Toffoli gates, so we may first measure $L\ket{\zs}$ in the standard basis and then apply $C$ to the result.

Finally, the following is straightforward to verify:

\begin{lem}\label{d1-distr}
	Let $(\ket{\cv_j})_j$ be one-qubit states, and let $p_j = |\ip1{\cv_j}|^2$ for all $j$. A standard-basis measurement of $\rt{\bigotimes_j \ket{\cv_j}} \ket{\zs}$ outputs all-zeros with probability $\left(1 - 2\prod_j (1-p_j) \right)^2$, and any other boolean string $(y_j)_j$ with probability $4\prod_j  (1-p_j) P(\bern(p_j) = y_j)$.
\end{lem}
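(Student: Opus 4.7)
The plan is to directly compute the amplitudes of $\rt{\ket\cv}\ket{\zs}$, where $\ket\cv = \bigotimes_j \ket{\cv_j}$, and then read off the measurement probabilities. Since $\rt{\ket\cv} = I - 2\ketbra\cv\cv$, we have $\rt{\ket\cv}\ket{\zs} = \ket{\zs} - 2\overline{\ip{\zs}\cv}\ket\cv$, so the whole calculation reduces to expanding $\ket\cv$ in the standard basis.

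Concretely, I would write each one-qubit state as $\ket{\cv_j} = \alpha_j \ket 0 + \beta_j \ket 1$ with $|\alpha_j|^2 = 1-p_j$ and $|\beta_j|^2 = p_j$, and set $\gamma = \prod_j \alpha_j$, so that $\ip{\zs}\cv = \gamma$ and $|\gamma|^2 = \prod_j (1-p_j)$. Expanding $\ket\cv$ in the standard basis gives, for $y = (y_j)_j \in \bits^n$, the amplitude
\begin{equation*}
\bra y \ket\cv = \prod_j \alpha_j^{1-y_j} \beta_j^{y_j}.
\end{equation*}
Substituting into $\rt{\ket\cv}\ket{\zs} = \ket{\zs} - 2\bar\gamma \ket\cv$, the amplitude of $\ket{\zs}$ in $\rt{\ket\cv}\ket{\zs}$ is $1 - 2\bar\gamma \gamma = 1 - 2\prod_j (1-p_j)$, whose squared magnitude is the claimed probability of all-zeros. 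For $y \ne \zs$, the amplitude is $-2\bar\gamma \prod_j \alpha_j^{1-y_j} \beta_j^{y_j}$, whose squared magnitude equals
\begin{equation*}
4|\gamma|^2 \prod_j |\alpha_j|^{2(1-y_j)} |\beta_j|^{2y_j} = 4\prod_j (1-p_j) \cdot \prod_j (1-p_j)^{1-y_j} p_j^{y_j},
\end{equation*}
and the second product is precisely $\prod_j P(\bern(p_j) = y_j)$, matching the statement.

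This is a purely mechanical calculation, so there is no real obstacle; the only thing to be careful about is tracking the complex conjugate on $\gamma$ (which drops out of the modulus) and correctly identifying the Bernoulli factors. No separate treatment of the all-zeros case is needed beyond noting that substituting $y = \zs$ into the ``other $y$'' formula would double-count the $\ket{\zs}$ contribution from the identity term in $\rt{\ket\cv}$.
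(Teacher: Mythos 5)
Your proposal is correct and follows essentially the same route as the paper: expand $\rt{\ket\cv}\ket{\zs} = \ket{\zs} - 2\ip\cv{\zs}\ket\cv$ and read off the standard-basis amplitudes. The only cosmetic difference is that you parametrize each $\ket{\cv_j}$ explicitly as $\alpha_j\ket0 + \beta_j\ket1$, whereas the paper keeps everything in terms of the inner products $\ip{y_j}{\cv_j}$ and $\ip{0}{\cv_j}$; both lead to identical computations and the identification $|\ip{y_j}{\cv_j}|^2 = P(\bern(p_j)=y_j)$.
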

\begin{proof}
	Let $z=\ket\zs$, let $\ket\cv = \bigotimes_j \ket{\cv_j}$, and recall that $\rt\cv = I-2\kb\cv$. Clearly,
	\begin{equation*}
		|\bra{z} \rt\cv \ket{z}|^2 =
		|\ip{z}{z} - 2\ip{z}\cv\ip\cv{z}|^2 =
		\left|1-2\prod_j|\ip0{\cv_j}|^2\right|^2 =
		\left(1-2\prod_j(1-p_j)\right)^2.
	\end{equation*}
	Similarly, if $y=(y_j)_j$ is any boolean string besides the all-zeros string, then
	\begin{align*}
		|\bra{y} \rt\cv \ket{z}|^2
		&= |\ip{y}z - 2\ip{y}\cv\ip\cv{z}|^2
		= \left|-2\prod_j \ip{y_j}{\cv_j} \ip{\cv_j}0\right|^2
		= 4\prod_j|\ip{y_j}{\cv_j}|^2|\ip0{\cv_j}|^2 \\
		&= 4\prod_j  (1-p_j) P(\bern(p_j) = y_j). \qedhere
	\end{align*}
\end{proof}

For mostly classical circuits that are nice, the following is a more convenient characterization of this distribution:

\begin{cor}\label{convex-comb}
	If $\prod_j (1-p_j) \le 1/4$ then the distribution from \cref{d1-distr} is a convex combination of all-zeros with probability $1 - 4\prod_j (1-p_j)$ and $(\bern(p_j))_j$ with probability $4\prod_j (1-p_j)$, where the $\bern(p_j)$ random variables are all independent.
\end{cor}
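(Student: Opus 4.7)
The plan is to check that the proposed mixture assigns the same probability to every outcome as the distribution described in \cref{d1-distr}, and then verify that the mixing weights are legitimate. Write $q = \prod_j(1-p_j)$ for brevity, so the hypothesis is $q \le 1/4$. The mixture puts weight $1-4q$ on the point mass at the all-zeros string and weight $4q$ on the product distribution $\bigotimes_j \bern(p_j)$.

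First I would verify the all-zeros probability. The mixture assigns to the all-zeros string the value
\begin{equation*}
(1-4q) + 4q\prod_j(1-p_j) = 1-4q+4q^2 = (1-2q)^2,
\end{equation*}
which is exactly the all-zeros probability in \cref{d1-distr}. For any other string $y = (y_j)_j$, the point mass contributes nothing, so the mixture assigns probability
\begin{equation*}
4q\cdot\prod_j P(\bern(p_j)=y_j),
\end{equation*}
again matching \cref{d1-distr}. Since these outcomes exhaust the sample space, the two distributions agree.

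Finally I would check that the mixing weights are valid probabilities: the weight $4q$ is nonnegative and, by the hypothesis $q \le 1/4$, is at most $1$, so $1-4q \in [0,1]$ as well. There is no real obstacle here — the lemma is a direct bookkeeping consequence of \cref{d1-distr}, and the only reason to assume $q \le 1/4$ is to make the decomposition literally a convex combination rather than a signed one.
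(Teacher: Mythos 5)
Your proof is correct and follows essentially the same calculation as the paper's: both verify that $(1-2q)^2 = (1-4q) + 4q\prod_j(1-p_j)$ so that the all-zeros mass decomposes correctly, with the non-zero strings matching automatically. You add the explicit check that $q \le 1/4$ makes the weights lie in $[0,1]$, which the paper leaves implicit.
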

\begin{proof}
	$\left(1 - 2\prod_j (1-p_j) \right)^2 = \left(1 - 4\prod_j(1-p_j)\right) + 4\prod_j(1-p_j)^2$, and $4\prod_j(1-p_j)^2 = 4\prod_j(1-p_j)P(\bern(p_j)=0)$.
\end{proof}

\subsection{Proof of \texorpdfstring{\cref{nek-ub}}{Theorem 4.1}}\label{mc-ub-sec}

We first prove the depth-2 case of \cref{nek-ub}, and then we reduce the general case to the depth-2 case.
\begin{rst*}[depth-2 case of \cref{nek-ub}]
	For all $\eps>0$ there exists a nice, mostly classical, depth-2 QAC circuit $C$ of size and number of qubits acted on $\exp(O(n\log(n/\eps)))$ such that $C\ket{\zs}$ has fidelity at least $1-\eps$ with some $n$-nekomata.
\end{rst*}

\begin{proof}
	Let $M \in \N$ and $\delta \in (0,1)$ be parameters to be chosen later.\footnote{Ultimately we will let $M = \exp(\Theta(n\log(n/\eps)))$ and $\delta = \Theta(\eps/n)$.} The circuit acts on $n(M+1)$ qubits, all initialized to $\ket 0$, and arranged in a grid of dimensions $n \times (M+1)$ (\cref{grid}). Designate one column as the ``target" column, and call the qubits in the $M$ other columns ``ancillae". First, to each ancilla column, apply $\rt{\left(\sqrt \delta \ket 0 + \sqrt{1-\delta} \ket 1\right)^{\otimes n}}$. Second, to each row, apply an $(M+1)$-qubit OR gate whose target qubit is in the target column. (A layer of OR gates is a depth-1 purely classical circuit, by the construction in \cref{or-gate}.)
	
	All measurements described below are with respect to the state on the ancillae between the first and second layers of the above circuit. By \cref{eg-eps-nek} it suffices to choose $M$ and $\delta$ such that if we measure the ancillae in the standard basis, then with probability exactly 1/2 all of the ancillae measure to 0, and with probability at least $1/2 - (2/3)\eps$ at least one ancilla in each row measures to 1. We now choose $\delta$ in terms of $M$ such that the ancillae measure to all-zeros with probability 1/2. By \cref{d1-distr} and the independence of measurements of different columns, it suffices to ensure that $(1 - 2\delta^n)^{2M} = 1/2$. Choose $\delta \in (0, (1/2)^{1/n})$ that satisfies this equation.
	
	\begin{figure}
		\centering
		\begin{tikzpicture}
			\draw[step=.5cm] (0,0) grid (11,-2);
			\draw[decorate, decoration = {brace, amplitude=7pt, raise=4pt}] (0,-2) -- (0,0) node[midway, left = 10pt] {$n$};
			\draw[decorate, decoration = {brace, amplitude=7pt, raise=4pt}] (0,0) -- (11,0) node[midway, yshift = 17pt] {$M+1$};
		\end{tikzpicture}
		\caption{}\label{grid}
	\end{figure}
	
	Let $\eps^\prime = (2/3)\eps$. Below we will choose $M$ such that the probability that there exists an ancilla column measuring to neither all-zeros nor all-ones is at most $\eps^\prime$. Equivalently, with probability at least $1 - \eps^\prime$, every ancilla column measures to either all-zeros or all-ones. Since the ancillae measure to all-zeros with probability 1/2, it follows that with probability at least $1/2 - \eps^\prime$, every ancilla column measures to all-zeros or all-ones \emph{and} at least one ancilla column measures to all-ones. Therefore the probability is at least $1/2-\eps^\prime$ that at least one ancilla in every row measures to 1, as desired.
	
	By \cref{d1-distr} and a union bound, the probability that there  exists an ancilla column measuring to neither all-zeros nor all-ones is at most
	\begin{equation*}
	M(1 - (1-2\delta^n)^2 - 4\delta^n(1-\delta)^n) = 4M\delta^n(1 - \delta^n - (1-\delta)^n) \le 4Mn\delta^{n+1}.
	\end{equation*}
	Since $1/2 = (1-2\delta^n)^{2M} \le \exp(-4\delta^n M)$, it follows that $\delta^n \le \ln(2)/4M$, so
	\begin{equation*}
	4Mn\delta^{n+1} \le 4Mn(\ln(2)/4M)^{1+1/n} = \ln(2) n (\ln(2)/4M)^{1/n}.
	\end{equation*}
	To make this bound at most $\eps^\prime$, let
	\begin{equation*}
	M = \ceil{(\ln(2)/4) \cdot (\ln(2)n/\eps^\prime)^n} \le \exp(O(n\log(n/\eps))).
	\end{equation*}
	Finally, the circuit is nice because $\delta^n \le \ln(2)/4M \le \ln(2)/4 < 1/4$.
\end{proof}

\begin{rst*}[\cref{nek-ub}]
	For all $2 \le d \le \log n$ and $\eps>0$ there exists a nice, mostly classical, depth-$d$ QAC circuit $C$ of size and number of qubits acted on $\exp(O(n2^{-d} \log(n2^{-d}/\eps))) + O(n)$ such that $C\ket\zs$ has fidelity at least $1-\eps$ with some $n$-nekomata.
\end{rst*}

\begin{proof}
	Since a two-qubit generalized Toffoli gate (a.k.a.\ a C-Not gate) computes two-qubit restricted fanout,\footnote{\vspace{-\baselineskip}Recall that a circuit $U_F$ computes $n$-qubit restricted fanout if $U_F\ket{b,0^{n-1}} = \ket{b^n}$ for all $b \in \bits$.} it follows from \cref{srrf} that $N$-qubit restricted fanout is computable by a purely classical circuit of depth $\ceil{\log N}$, size $N-1$, and no ancillae. Use the circuit from the depth-2 case of \cref{nek-ub} to construct a state $\ket\aov$ that has fidelity at least $1-\eps$ with some $m$-nekomata for $m = \ceil{n/2^{d-2}}$, and initialize $n-m$ additional qubits to $\ket{0^{n-m}}$. Partition the $m$ targets of $\ket\aov$ (more precisely, of the $m$-nekomata approximated by $\ket\aov$) and the $n-m$ new qubits into $m$ sets, each of size at most $\ceil{n/m} \le 2^{d-2}$, and each including one of the targets of $\ket\aov$. To each of these sets of qubits, apply the depth-$(\le d-2)$ circuit for restricted fanout described above, where the qubit being ``fanned out" is the target of $\ket\aov$ in that set.	
\end{proof}

\subsection{Proof of \texorpdfstring{\cref{mc-nek-lb-nice}}{Theorem 4.2(ii)}}\label{mc-lb-sec}

We use the following concentration inequality of Gavinsky, Lovett, Saks and Srinivasan~\cite{Gav+15}:

\begin{dfn}[{\cite{Gav+15}}]
	Call a random string $(Y_1, \dotsc, Y_n) \in \bits^n$ a \emph{read-$r$ family} if there exist $m \in \N$, independent random variables $X_1, \dotsc, X_m$, sets $S_1, \dotsc, S_n \subseteq [m]$ such that $|\{j \mid i \in S_j\}| \le r$ for all $i \in [m]$, and functions $f_1, \dotsc, f_n$ such that $Y_j = f_j((X_i)_{i\in S_j})$ for all $j \in [n]$.
\end{dfn}

\begin{thm}[{\cite{Gav+15}}]\label{glss}
	Let $(Y_1, \dotsc, Y_n)$ be a read-$r$ family, and let $\mu = \Ex\left[\sum_{j=1}^n Y_j\right]$. Then for all $\eps\ge0$,
	\begin{align*}
	P(Y_1 + \dotsb + Y_n \ge \mu + \eps n) &\le \exp(-2\eps^2 n/r), \\
	P(Y_1 + \dotsb + Y_n \le \mu - \eps n) &\le \exp(-2\eps^2 n/r).
	\end{align*}
\end{thm}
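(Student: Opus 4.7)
I would attack this via the exponential-moment (Chernoff) method. The upper tail is my target; the lower tail follows by applying the argument to the read-$r$ family $(1-Y_j)_j$. By Markov's inequality applied to $\exp(t(\sum_j Y_j - \mu))$ and optimization over $t \ge 0$, it suffices to establish the moment-generating-function bound
\begin{equation*}
\Ex\left[e^{t \sum_j (Y_j - \Ex Y_j)}\right] \le \exp(rnt^2/8)
\end{equation*}
for all $t \ge 0$; then choosing $t = 4\eps/r$ delivers the target tail $\exp(-2\eps^2 n/r)$ exactly, and optimizing over both signs yields both inequalities.

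The plan for the MGF bound is to reduce it to Hoeffding's lemma---$\Ex[e^{s(Z-\Ex Z)}] \le e^{s^2/8}$ for $Z\in[0,1]$---via a H\"older decomposition driven by the read-$r$ structure. Concretely, I would aim to construct a fractional partition of $[n]$ into \emph{read-once pieces}: non-negative weights $\lambda_1,\dotsc,\lambda_K$ with $\sum_k \lambda_k = r$ and subsets $I_1,\dotsc,I_K \subseteq [n]$ such that $\sum_{k:j\in I_k}\lambda_k = 1$ for every $j$, and each sub-family $(Y_j)_{j\in I_k}$ involves pairwise disjoint $S_j$'s (so that the $Y_j$'s within $I_k$ are mutually independent). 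Given such a decomposition, the identity $\exp(t\sum_j Y_j) = \prod_k \exp(\lambda_k t\sum_{j\in I_k}Y_j)$ together with H\"older's inequality using conjugate exponents $r/\lambda_k$ gives
\begin{equation*}
\Ex\left[e^{t \sum_j Y_j}\right] \le \prod_k \Ex\left[e^{rt \sum_{j\in I_k} Y_j}\right]^{\lambda_k/r};
\end{equation*}
each factor splits into a product of one-variable expectations by independence within $I_k$, Hoeffding bounds each by $\exp(rt\Ex Y_j + r^2t^2/8)$, and the identities $\sum_{k:j\in I_k}\lambda_k=1$ and $\sum_k \lambda_k|I_k|=n$ collapse the exponent to exactly $t\mu + rnt^2/8$.

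The main obstacle is that the combinatorial lemma above can fail as stated: such a fractional partition exists iff the fractional chromatic number $\chi_f(G)$ of the intersection graph on $[n]$ (with $j \sim j'$ iff $S_j \cap S_{j'} \ne \emptyset$) is at most $r$, and $\chi_f(G)$ can strictly exceed $r$---for instance the read-$2$ family $Y_j = X_j \oplus X_{(j \bmod 5)+1}$ has intersection graph $C_5$ with $\chi_f = 5/2$. So the plain H\"older argument only yields the weaker bound $\exp(-2\eps^2 n/\chi_f(G))$, and to recover the sharp constant $r$ one must genuinely exploit the independence of the $X_i$'s rather than only the pairwise intersection pattern. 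The route I would pursue is a fractional edge-decomposition of the bipartite incidence graph between $[n]$ and $[m]$---in which each $i\in[m]$ has degree at most $r$---into $r$ \emph{assignment layers} where each $X_i$ is charged to at most one index $j$ per layer; within each layer one would then condition on all $X_{i'}$'s that are not the assigned variable of some $j$, after which the remaining $(Y_j)_{j\in I_k}$ become mutually independent functions of the assigned singletons and Hoeffding applies conditionally. Getting this conditioning step to interact correctly with the outer H\"older inequality---so that the conditional Hoeffding bound survives the average over the conditioning variables and the exponents still collapse to $t\mu + rnt^2/8$---is where the real work lies.
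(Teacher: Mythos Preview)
The paper does not prove this theorem; it is quoted verbatim from Gavinsky--Lovett--Saks--Srinivasan~\cite{Gav+15} and used as a black box, so there is no in-paper proof to compare against.

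That said, your overall plan---Chernoff method reducing to an MGF bound of the form $\Ex[e^{t\sum_j(Y_j-\Ex Y_j)}]\le e^{rnt^2/8}$, then Hoeffding on the pieces---is exactly the shape of the argument in~\cite{Gav+15}, and your diagnosis of the obstacle is correct: a fractional partition into read-once blocks need not exist with total weight $r$ (your $C_5$ example is apt). Where your proposed fix diverges from the source is in the combinatorics. Rather than building assignment layers and conditioning, \cite{Gav+15} proves the clean product inequality
\[
\Ex\!\left[\prod_{j=1}^n f_j\right]\;\le\;\prod_{j=1}^n \Ex\!\left[f_j^{\,r}\right]^{1/r}
\]
for any nonnegative $f_j$ depending only on $(X_i)_{i\in S_j}$, by induction on the number $m$ of underlying variables: integrate out one $X_i$ at a time, apply H\"older with exponents $r$ and $r/(r-1)$ to separate the (at most $r$) factors that see $X_i$ from the rest, and recurse. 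Taking $f_j=e^{t(Y_j-\Ex Y_j)}$ and invoking Hoeffding on $\Ex[f_j^r]$ then gives the MGF bound with the sharp constant immediately. This sidesteps the fractional-chromatic obstruction entirely, and is both shorter and cleaner than the layer-and-condition route you sketched; your final paragraph is groping toward the same inductive integration, but the ``assignment layers'' framing makes it harder than it needs to be.
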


\begin{rmk*}
	For example, if $r=1$ then $Y_1, \dotsc, Y_n$ are all independent and so \cref{glss} recovers a well-known Chernoff bound for sums of independent Bernoulli random variables. \cref{glss} also recovers this Chernoff bound when $n = rm$ and $Y_j = X_{\ceil{j/r}}$ for all $j$~\cite{Gav+15}.
\end{rmk*}

Consider a string $x$ of independent Bernoulli random variables. If $G$ is a generalized Toffoli gate then $G\ket x$ is a read-2 family, because for all $i$ the $i$'th bit of $x$ can only influence the $i$'th and target bits of $G\ket x$. More generally, if $G$ is a generalized Toffoli gate and $L_1, L_2$ are layers of NOT gates acting on subsets of the support of $G$, then $L_1 G L_2 \ket x$ is a read-2 family. Even more generally, it follows by induction that if $C$ is a depth-$d$ purely classical circuit then $C\ket x$ is a read-$2^d$ family.

Before proving \cref{mc-nek-lb-nice}, as a warmup we briefly prove the following:

\begin{prp}\label{pc-prod}
	If $C$ is a depth-$d$ purely classical circuit and $\ket\iv$ is a mono-product state, then $|\bra\nu C\ket \iv|^2 \le 1/2 + \exp(-\Omega(n/2^d))$ for all $n$-nekomata $\ket\nu$.
\end{prp}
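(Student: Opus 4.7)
The plan is to combine three ingredients already laid out in the excerpt: (a) standard-basis measurements commute with generalized Toffoli gates, so measuring $C\ket\iv$ is equivalent to classically applying $C$ to a standard-basis measurement of $\ket\iv$; (b) the observation just before the statement that the output of a depth-$d$ purely classical circuit on independent Bernoulli inputs is a read-$2^d$ family; and (c) \cref{nek-lb}, which bounds the fidelity of any state with any $n$-nekomata on the same targets by $1/2 + \sqrt{\min(p,q)}$, where $p$ and $q$ are the probabilities that the $n$ target qubits measure to $0^n$ and $1^n$ respectively.

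First I would write $\ket\iv = \bigotimes_j \ket{\iv_j}$ and let $X_j \in \bits$ be independent Bernoulli random variables with $P(X_j = 1) = |\ip1{\iv_j}|^2$, corresponding to a standard-basis measurement of $\ket\iv$. By (a), a standard-basis measurement of the $n$ targets of $C\ket\iv$ is distributed as $(Y_1, \dotsc, Y_n)$, where $Y_1, \dotsc, Y_n$ are $n$ of the output bits of $C$ applied classically to $x = (X_j)_j$. By (b), $(Y_1, \dotsc, Y_n)$ is a read-$2^d$ family of Bernoulli random variables (restricting the read-$2^d$ family on all output wires to any $n$ of them preserves the read parameter).

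Next I would apply \cref{glss} to bound $\min(p,q)$, where $p = P(Y_1 = \dotsb = Y_n = 0)$ and $q = P(Y_1 = \dotsb = Y_n = 1)$. Let $\mu = \Ex[Y_1 + \dotsb + Y_n]$. If $\mu \le n/2$ then with $\eps = 1 - \mu/n \ge 1/2$,
\begin{equation*}
q \le P(Y_1 + \dotsb + Y_n \ge \mu + \eps n) \le \exp(-2\eps^2 n / 2^d) \le \exp(-n/2^{d+1}),
\end{equation*}
and symmetrically $p \le \exp(-n/2^{d+1})$ when $\mu > n/2$. Either way, $\min(p,q) \le \exp(-n/2^{d+1})$, so $\sqrt{\min(p,q)} \le \exp(-n/2^{d+2}) = \exp(-\Omega(n/2^d))$, and (c) gives the desired bound.

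There is no real obstacle: all three ingredients are already developed above, and the only care needed is in the case split on whether $\mu \le n/2$ or $\mu > n/2$, which guarantees that one of the two all-constant events lies at least $n/2$ from the mean, so that the read-$2^d$ concentration inequality yields an exponentially small probability. This proof also serves as a clean warmup for \cref{mc-nek-lb-nice}, where \cref{convex-comb} will be used to reduce the nice mostly-classical setting to a similar read-$2^d$ concentration problem with an extra mixture step.
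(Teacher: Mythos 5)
Your proof is correct and takes essentially the same route as the paper's: measurements commute with Toffoli gates, the targets form a read-$2^d$ family since $\ket\iv$ is mono-product, a case split on whether the mean Hamming weight is above or below $n/2$ makes \cref{glss} give an exponentially small bound on one of $p,q$, and \cref{nek-lb} finishes. The only difference is presentational---you spell out the $\eps=1-\mu/n\ge1/2$ substitution explicitly, whereas the paper states it in one sentence.
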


\begin{proof}
	Since standard-basis measurements of qubits in a mono-product state are independent, it follows from the above discussion that a standard-basis measurement of any $n$ designated target qubits of $C\ket\iv$ is a read-$2^d$ family. If the expected Hamming weight of a standard-basis measurement of the targets of $C\ket\iv$ is less (resp.\ greater) than or equal to $n/2$, then \cref{glss} implies that the targets of $C\ket\iv$ measure to all-ones (resp.\ all-zeros) with probability at most $\exp(-\Omega(n/2^d))$, and the result follows from \cref{nek-lb}.
\end{proof}

\begin{rst*}[\cref{mc-nek-lb-nice}]
	If $C$ is a nice, mostly classical circuit of size $s$ and depth $d$, then the fidelity of $C\ket\zs$ and any $n$-nekomata is at most
	\begin{equation*}
		\frac{1}{2} + \exp\left(-\Omega\left(\frac{n/2^d}{\max\left(\log s, \sqrt{n/2^d} \right)} \right) \right).
	\end{equation*}
\end{rst*}

\begin{proof}
	Designate $n$ qubits of $C\ket{\zs}$ as targets, and assume without loss of generality that $s \ge \exp\left(\sqrt{n/2^d}\right)$. We will prove that for some $a \in \bits$, the targets of $C\ket{\zs}$ measure to $a^n$ with probability at most $\exp(-\Omega(n2^{-d}/\log s))$. The result then follows from \cref{nek-lb}.
	
	Write $C = D(L \otimes \bigotimes_{G \in \mathcal G} G)$ such that $D$ is purely classical, $L$ is a layer of single-qubit gates, and $\mathcal G$ is a set of multi-qubit \rtt gates that each satisfy the precondition of \cref{convex-comb}. For all $G \in \mathcal G$, a standard-basis measurement of $G\ket{\zs}$ is distributed identically to $(b_G \wedge x_{G,i})_i$ for some independent Bernoulli random variables $b_G, (x_{G,i})_i$, where $\Ex[b_G] = 4\prod_i (1-\Ex[x_{G,i}])$. Let $\mu_G = \sum_i \Ex[x_{G,i}]$; then $\Ex[b_G] \le 4\exp(-\mu_G)$.
	
	By a union bound, the probability that there exists $G \in \mathcal G$ such that $\mu_G > 2 \ln s$ and $b_G = 1$ is at most
	\begin{equation*}
	\sum_{\mathclap{G : \mu_G > 2 \ln s}} 4\exp(-\mu_G) < 4s \exp(-2\ln s) = \exp(-\Omega(\log s)) \le \exp(-\Omega(n2^{-d}/\log s)).
	\end{equation*}
	Therefore it suffices to prove that for some $a \in \bits$, the targets of $\ket\aov \coloneqq D(L \otimes \bigotimes_{G: \mu_G \le 2\ln s} G \otimes I) \ket{\zs}$ measure to $a^n$ with probability at most $\exp(-\Omega(n2^{-d} /\log s))$. Henceforth we will never refer to any gate $G$ for which $\mu_G > 2 \ln s$; phrases such as ``for all $G$" and ``$(\cdot_G)_G$" will implicitly quantify over only those gates $G$ for which $\mu_G \le 2\ln s$.

	Let $b = (b_G)_G$ and $x = (x_{G,i})_{G,i}$. Call $x$ ``good" if $\sum_i x_{G,i} \le c\ln s$ for all $G$, where $c>2$ is a universal constant large enough so that $e(2e/c)^c < 1$. A well-known Chernoff bound states that if $S$ is a sum of independent Bernoulli random variables, and $\mu = \Ex[S]$, then $P(S > t) < (e\mu/t)^t e^{-\mu}$ for all $t>\mu$. Therefore, by a union bound and the fact that $\max_G \mu_G \le 2\ln s$, the probability that $x$ fails to be good is at most
	\begin{equation*}
	\sum_G (e\mu_G/c\ln s)^{c \ln s} \le s(2e/c)^{c \ln s} = (e(2e/c)^c)^{\ln s} = \exp(-\Omega(\log s)) \le \exp(-\Omega(n2^{-d}/\log s)).
	\end{equation*}
	
	Let $y$ be a string of independent Bernoulli random variables distributed identically to a standard-basis measurement of $L\ket{\zs}$. Call the targets of $D\ket{y, (b_G \wedge x_{G,i})_{G,i}, \zs}$ the ``output bits", and note that they are distributed identically to a standard-basis measurement of the targets of $\ket\aov$. If $b$ is fixed then the output bits are a read-$2^d$ family (as functions of the independent Bernoulli random variables in $x$ and $y$). Alternatively, if $x$ and $y$ are fixed and $x$ is good then the output bits are a read-$O(2^d \log s)$ family (as functions of the independent Bernoulli random variables in $b$).
	
	For $r_1, r_2 \in \R$ let $r_1 \approx r_2$ if $|r_1 - r_2| \le 0.1 n$. Let $W(b,z)$ be the Hamming weight of the output bits as a function of $b$ and $z \coloneqq (x,y)$, and let $z^\prime$ be an independent copy of $z$. We now argue that
	\begin{equation}\label{mceq1}
	P\left(W(b,z) \approx \Ex[W(b,\cdot) \mid b] \approx W(b,z^\prime) \approx \Ex[W(\cdot, z^\prime) \mid z^\prime]\right) \ge 1 - \exp(-\Omega(n2^{-d}/\log s)),
	\end{equation}
	where the probability is over $(b,z,z^\prime)$, and the expectations are over independent copies of $z$ and $b$ respectively, substituted for ``$\cdot$" as inputs to $W$. For any fixed value of $b$, \cref{glss} implies that $W(b,z) \approx \Ex[W(b,\cdot) \mid b]$ except with probability at most $\exp(-\Omega(n2^{-d}))$ over $z$, and the same statement holds with $z^\prime$ in place of $z$. Similarly, for any fixed value of $z^\prime = (x^\prime, y^\prime)$ such that $x^\prime$ is good, \cref{glss} implies that $W(b,z^\prime) \approx \Ex[W(\cdot, z^\prime) \mid z^\prime]$ except with probability at most $\exp(-\Omega(n2^{-d}/\log s))$ over $b$. Since $x^\prime$ is good except with probability at most $\exp(-\Omega(n2^{-d}/\log s))$, \cref{mceq1} follows from a union bound.
	
	Therefore, by the triangle inequality,
	\begin{equation*}
	P\left(\left|W(b,z) - \Ex[W(\cdot, z^\prime) \mid z^\prime]\right| \ge 0.3n \right) \le \exp(-\Omega(n2^{-d}/\log s)),
	\end{equation*}
	where the probability is over $(b,z,z^\prime)$. It follows that there exists a fixed value of $z^\prime$ such that the above inequality holds with the probability being over $(b,z)$. For this fixed value of $z^\prime$, if $\Ex[W(\cdot, z^\prime) \mid z^\prime]$ is at most (resp.\ at least) $n/2$, then the output bits are all-ones (resp.\ all-zeros) with probability at most $\exp(-\Omega(n2^{-d}/\log s))$.
\end{proof}

\section{Lower Bounds for General QAC Circuits}\label{more-lb}
In \cref{slbgc-sec} we prove a generalization of \cref{small-sc}. The proof uses the following claim, which is proved in \cref{it-proj-sec} (and is obtained as a corollary of a stronger result):
\begin{cor}\label{proj-tl}
	For all $d\ge1$, orthogonal projections $Q_1, \dotsc, Q_d$, and states $\ket\iv$,
	\begin{equation*}
		\norm{Q_d \dotsb Q_1 \ket\iv} \le \exp\left(-\frac{\bra\iv (I - Q_d) \ket\iv}{2d}\right).
	\end{equation*}
\end{cor}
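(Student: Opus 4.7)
The plan is to let $\ket{v_0} = \ket\iv$ and $\ket{v_k} = Q_k \ket{v_{k-1}}$ for $k = 1, \dotsc, d$, so that $\ket{v_d} = Q_d \dotsb Q_1 \ket\iv$ is the (possibly non-unit) vector whose norm I want to bound. The key observation I will exploit is that $\ket{v_d}$ lies in the range of $Q_d$, so $(I - Q_d)\ket{v_d} = 0$, which gives $(I - Q_d)\ket\iv = (I - Q_d)(\ket\iv - \ket{v_d})$ and hence $\bra\iv (I - Q_d) \ket\iv = \norm{(I - Q_d)\ket\iv}^2 \le \norm{\ket\iv - \ket{v_d}}^2$. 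So the task reduces to bounding the total ``drift'' $\norm{\ket\iv - \ket{v_d}}$ in terms of the total loss of squared norm along the product.

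For this I would telescope: writing $\ket\iv - \ket{v_d} = \sum_{k=1}^d (\ket{v_{k-1}} - \ket{v_k}) = \sum_{k=1}^d (I - Q_k)\ket{v_{k-1}}$ and combining the triangle inequality with Cauchy--Schwarz ($(\sum_k a_k)^2 \le d\sum_k a_k^2$) yields
\begin{equation*}
\norm{\ket\iv - \ket{v_d}}^2 \le d \sum_{k=1}^d \norm{(I - Q_k) \ket{v_{k-1}}}^2.
\end{equation*}
Since each $Q_k$ is an orthogonal projection, $\norm{(I - Q_k)\ket{v_{k-1}}}^2 = \norm{\ket{v_{k-1}}}^2 - \norm{Q_k \ket{v_{k-1}}}^2 = \norm{\ket{v_{k-1}}}^2 - \norm{\ket{v_k}}^2$, and the sum on the right telescopes to $1 - \norm{\ket{v_d}}^2$.

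Chaining the two bounds gives $\bra\iv (I - Q_d) \ket\iv \le d\bigl(1 - \norm{\ket{v_d}}^2\bigr)$, i.e., $\norm{\ket{v_d}}^2 \le 1 - \bra\iv(I - Q_d)\ket\iv/d$. Since $1 - x \le e^{-x}$, this becomes $\norm{\ket{v_d}}^2 \le \exp(-\bra\iv(I - Q_d)\ket\iv/d)$, and taking the square root yields the claimed inequality. I do not expect a substantive obstacle; the only nontrivial step is the Cauchy--Schwarz bound, and the factor of $d$ it introduces is precisely what appears in the denominator $2d$ of the exponent. That the constant ``$2d$'' is the correct order of magnitude is consistent with the tight rank-one example of a state rotated through $d$ equally spaced projections at angle $\theta/d$ each, where the norm $\cos(\theta/d)^d \approx \exp(-\theta^2/(2d))$ matches the bound with $\bra\iv(I - Q_d)\ket\iv \approx \theta^2$.
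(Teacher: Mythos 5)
Your proof is correct, and it takes a genuinely different and more elementary route than the paper's. The paper derives the corollary from Proposition 5.4, which computes the \emph{exact} maximum of $\norm{Q_d\dotsb Q_1\ket\iv}$ over intermediate projections: that argument introduces the angular metric $\Delta(\alpha,\beta)=\arccos|\ip\alpha\beta|$, proves a triangle inequality for it, combines AM--GM with Jensen applied to $\cos$, and then (to get from the exact $\cos(\theta/d)^d$ form to the $\exp$ form in the corollary) does a fourth-order Taylor comparison of $\cos r$ and $e^{-r^2/2}$ on $[0,1]$. Your argument bypasses all of that: the Pythagorean identity makes $\sum_k \norm{(I-Q_k)\ket{v_{k-1}}}^2$ telescope to $1-\norm{\ket{v_d}}^2$, the contraction property of $I-Q_d$ plus Cauchy--Schwarz on the telescoped difference gives $\bra\iv(I-Q_d)\ket\iv \le d\left(1-\norm{\ket{v_d}}^2\right)$, and then only the one-line inequality $1-x\le e^{-x}$ is needed. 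The resulting intermediate bound $\norm{\ket{v_d}}^2 \le 1 - \bra\iv(I-Q_d)\ket\iv/d$ is slightly weaker than the paper's exact $\cos(\theta/d)^{2d}$, but both yield the identical final exponential bound, so nothing is lost for the stated corollary. What the paper's longer route buys is the sharp constant and the matching lower bound construction in Proposition 5.4 (useful to know the $1/(2d)$ in the exponent is the right order); what yours buys is a short, self-contained proof requiring no trigonometric machinery.
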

Then, using this generalization of \cref{small-sc}, in \cref{meas-anc,lbcd2-sec} we prove \cref{d2-lb-cat}.
\subsection{Proof of \texorpdfstring{\cref{small-sc}}{Theorem 1.5}} \label{slbgc-sec}

\cref{small-sc} is the case of the following in which $\Hi_1, \dotsc, \Hi_n$ are single-qubit Hilbert spaces, $\ket\iv$ is the all-zeros state, $Q_j = \kb0$ for all $j$, and $\ket\dov$ is an $n$-nekomata.

\begin{thm}\label{small}
	There is a universal constant $c > 0$ such that the following holds. Let $\Hi_1, \dotsc, \Hi_n$ be Hilbert spaces, let $\Hi_T = \bigotimes_{j=1}^n \Hi_j$ (for ``targets"), and let $\Hi_A$ be a Hilbert space (for ``ancillae"). Let $\ket\iv = \ket{\iv_1, \dotsc, \iv_n, \iv_A}$ for some states $\ket{\iv_j} \in \Hi_j, j \in [n] \cup \{A\}$. Let $Q_j$ be an orthogonal projection on $\Hi_j$ for $j\in[n]$, and let $\ket\dov$ be a state in $\Hi_T \otimes \Hi_A$ that measures to $\bigotimes_{j=1}^n Q_j \otimes I_{\Hi_A}$ and to $\bigotimes_{j=1}^n (I-Q_j) \otimes I_{\Hi_A}$ each with probability 1/2. Let $C$ be a depth-$d$ QAC circuit on $\Hi_T \otimes \Hi_A$ with at most $cn/(d+1)$ multi-qubit gates acting on $\Hi_T$. Then, $|\bra \dov C \ket \iv|^2 \le 1/2 + \exp(-\Omega(n/(d+1)))$.
\end{thm}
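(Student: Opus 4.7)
The plan is to first apply Proposition \ref{rnf} to replace $C$ with an equivalent circuit $C'L$ in \rtt normal form, where $L$ is a single layer of one-qubit gates and $C'$ consists only of multi-qubit \rtt gates, with the same topology (in particular, the same number of multi-qubit gates acting on $\Hi_T$). Since $L\ket\iv$ is still a mono-product state across the factorization $\Hi_1 \otimes \dotsb \otimes \Hi_n \otimes \Hi_A$, I may absorb $L$ into $\ket\iv$ and assume that $C$ itself is a product of multi-qubit \rtt gates. Defining $P_0 \coloneqq \bigotimes_{j=1}^n Q_j \otimes I_{\Hi_A}$ and $P_1 \coloneqq \bigotimes_{j=1}^n (I - Q_j) \otimes I_{\Hi_A}$, the hypothesis $Q_j(I - Q_j) = 0$ yields $P_0 P_1 = 0$, while the hypothesis on $\ket\dov$ means exactly that $\|P_b \ket\dov\|^2 = 1/2$ for $b \in \{0,1\}$ and $\ket\dov = P_0\ket\dov + P_1\ket\dov$.

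Splitting $\bra\dov = \bra\dov P_0 + \bra\dov P_1$ and applying Cauchy--Schwarz then gives
\begin{equation*}
|\bra\dov C\ket\iv|^2 \le \tfrac{1}{2}\bigl(\|P_0 C\ket\iv\| + \|P_1 C\ket\iv\|\bigr)^2 \le \tfrac{1}{2} + \sqrt{\beta_0\beta_1},
\end{equation*}
where $\beta_b \coloneqq \|P_b C\ket\iv\|^2$ and I used $\beta_0+\beta_1 \le 1$. It therefore suffices to show $\min(\beta_0,\beta_1) \le \exp(-\Omega(n/(d+1)))$. The mono-product form of $\ket\iv$ gives $\|P_b \ket\iv\|^2 = \prod_{j=1}^n r_j^{(b)}$ with $r_j^{(0)} = \bra{\iv_j}Q_j\ket{\iv_j}$ and $r_j^{(1)} = 1 - r_j^{(0)}$, so \emph{without} the circuit $C$, the product $\|P_0 \ket\iv\|^2 \cdot \|P_1 \ket\iv\|^2 \le 4^{-n}$ already furnishes the bound. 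The content of the theorem is that up to $cn/(d+1)$ multi-qubit \rtt gates on $\Hi_T$ cannot erase this product-state decay; intuitively, they can create correlations among only ``few'' targets relative to $n$.

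To quantify this, I would choose $b \in \{0,1\}$ with $\bra\iv(I-P_b)\ket\iv \ge 1/2$ (at least one such $b$ exists) and bound $\beta_b = \|P_b C\ket\iv\|^2$ using Corollary \ref{proj-tl}. The mechanism is to insert the resolution $I = P_b + (I - P_b)$ between successive multi-qubit \rtt gates of $C$ that touch $\Hi_T$ and to expand $\|P_b C\ket\iv\|$ into a sum of terms, each of the form $\|\Pi_k \dotsb \Pi_1 \ket\iv\|$ for some alternating sequence of orthogonal projections drawn from $\{P_b,\, I - P_b,\, \kb{\cv_g}\}$ (and with the $\cv_g$-projections partitioned by which target subsets they touch, using $\rt{\cv_g} = I - 2\kb{\cv_g}$). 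Applying Corollary \ref{proj-tl} with the final projection chosen to be $P_b$ yields, for each such term, a decay factor of $\exp(-\bra\iv(I-P_b)\ket\iv/(2k)) \le \exp(-\Omega(1/(d+1)))$ per target block, and compounding these over the $n$ targets (using that the total number of gate-target incidences is at most $n(d+1)$ and only $cn/(d+1)$ distinct gates occur, so only a controlled number of sequences appear) produces the desired $\exp(-\Omega(n/(d+1)))$ bound on $\min(\beta_0,\beta_1)$.

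The main obstacle I anticipate is the bookkeeping in the expansion above: naively expanding every $\rt{\cv_g} = I - 2\kb{\cv_g}$ produces $2^{|\mathcal{G}|}$ terms, and the coefficient in the sum blows up combinatorially unless the expansion is organized so that most subsets contribute trivially. The key should be that Corollary \ref{proj-tl} can be applied on \emph{entire} sequences at once, so that the rank-one updates imposed by the gates on $P_b$ (computed explicitly via $\rt{\cv_g} P_b \rt{\cv_g} = P_b - 2(\ket{\cv_g}\bra{\eta_b} + \ket{\eta_b}\bra{\cv_g}) + 4q_b \kb{\cv_g}$ with $\ket{\eta_b} = P_b\ket{\cv_g}$ and $q_b = \bra{\cv_g}P_b\ket{\cv_g}$) can be tracked by a single potential function per target qubit. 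The constant $c$ must be chosen small enough that this potential still satisfies $\Omega(n/(d+1))$ at the end of the circuit, which is exactly where the sharp ``phase change'' at the threshold $cn/(d+1)$ mentioned after \cref{small-sc-cor} enters the argument.
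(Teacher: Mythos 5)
Your opening moves are exactly the paper's: reduce via \cref{rnf} to $C$ a product of multi-qubit \rtt gates acting on the mono-product input, split $\bra\dov$ as $\bra\dov P_0 + \bra\dov P_1$, and reduce by Cauchy--Schwarz to bounding $\min_b\norm{P_b C\ket\iv}$. You also correctly identify the two ingredients the paper uses next---expanding each \rtt gate as $\rt\cv = I - 2\kb\cv$ and invoking \cref{proj-tl}---and you correctly anticipate that the combinatorial blowup from this expansion must be paid for by the choice of $c$. But the mechanism you sketch for combining these ingredients does not close, and the gap is where all the work is.

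The first problem is the WLOG choice. You pick $b$ with $\bra\iv(I-P_b)\ket\iv \ge 1/2$, which is only a constant. If you then apply \cref{proj-tl} with final projection $P_b$ on the whole Hilbert space (as you explicitly propose), the bound you get is $\exp\left(-\bra\iv(I - P_b)\ket\iv/(2k)\right) \approx \exp(-\Omega(1/d))$, a constant, not $\exp(-\Omega(n/d))$. There is no way to recover the $n$ in the exponent from this starting point: $\bra\iv(I-P_b)\ket\iv$ is a probability and cannot exceed $1$. What the paper does instead is pick $b$ so that $\sum_{j=1}^n \bra{\iv_j}(I - Q_j)\ket{\iv_j} \ge n/2$ --- a per-coordinate sum that \emph{can} be $\Theta(n)$ --- and then applies \cref{proj-tl} \emph{separately on each} $\Hi_j$ with the one-factor projection $Q_j$, obtaining a factor $\exp\left(-\bra{\iv_j}(I-Q_j)\ket{\iv_j}/(2(d+1))\right)$ per target. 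Multiplying these over $j$ produces $\exp(-n/(4(d+1)))$.

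What makes this per-target application legal is the missing structural observation in your proposal: after fixing one term of the expansion (i.e.\ replacing each \rtt gate $G$ acting on $\Hi_T$ by either $I$ or $\kb{\cv_G}$, and leaving gates off $\Hi_T$ alone), the resulting operator restricted to $\Hi_T$ is a \emph{tensor product} over $j \in [n]$ of products of one-qubit orthogonal projections, because $\kb{\cv_G}$ is itself a tensor product of one-qubit projections. Hence $\norm{Q\,(\text{fixed term})\,\ket\iv}$ factors as $\prod_j \norm{Q_j M_{jd}\dotsb M_{j1}\ket{\iv_j}}$ times a norm $\le 1$ on $\Hi_A$, and \cref{proj-tl} applies coordinate-wise. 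Your idea of ``inserting $I = P_b + (I-P_b)$ between gates'' is not what is wanted here; it does not produce this factorization and, as you yourself worry, it is not clear why the resulting sum wouldn't blow up. The same factorization also resolves your combinatorial concern: only the gates in $\mathcal G_T$ (at most $cn/(d+1)$ of them) need to be expanded, and the expansion $I - 2\kb{\cv_G}$ contributes $|{-2}|^k$ per $k$-subset, so the sum of coefficient magnitudes is exactly $3^{|\mathcal G_T|} \le 3^{cn/(d+1)}$ --- not $2^{|\mathcal G|}$ with unbounded coefficients. Combining, one gets $3^{cn/(d+1)} \cdot \exp(-n/(4(d+1)))$, which is $\exp(-\Omega(n/(d+1)))$ once $c < 1/(4\ln 3)$; this is exactly the ``phase change'' you flagged. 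In short: right shape, right tools, but the tensor factorization across the $n$ target factors --- which simultaneously enables the per-target use of \cref{proj-tl} and tames the expansion --- is the load-bearing step, and it is absent from your plan.
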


\begin{proof}
	By \cref{rnf} we may write $C = DL$ for some layer of single-qubit gates $L$ and QAC circuit $D$, where $D$ has the same topology as $C$ and consists only of multi-qubit \rtt gates. Since $L\ket\iv$ factors as a product state in the same way that $\ket\iv$ does, we may assume without loss of generality that $C$ consists only of multi-qubit \rtt gates, by replacing $C$ and $\ket\iv$ with $D$ and $L\ket\iv$ respectively.
	
	We now generalize \cref{nek-lb} from nekomata to states such as $\ket\dov$. Let $Q = \bigotimes_{j=1}^n Q_j \otimes I_{\Hi_A}$ and $Q^\prime = \bigotimes_{j=1}^n (I-Q_j) \otimes I_{\Hi_A}$, and let $\ket\aov = C \ket\iv$. Since $\ket\dov$ measures to $Q + Q^\prime$ with probability 1, it follows from the triangle inequality and Cauchy-Schwarz that
	\begin{align*}
		|\ip \dov \aov|^2 &= |\bra \dov (Q + Q^\prime) \ket \aov|^2 \le (|\bra \dov Q \ket \aov| + |\bra \dov Q^\prime \ket \aov|)^2 \\
		&\le (\norm{Q \ket \aov} \cdot \norm{Q \ket \dov} + \norm{Q^\prime \ket \aov} \cdot \norm{Q^\prime \ket \dov})^2
		= (\norm{Q \ket \aov}/\sqrt 2 + \norm{Q^\prime \ket \aov}/\sqrt 2)^2 \\
		&= \bra \aov (Q + Q^\prime) \ket \aov/2 + \norm{Q \ket \aov} \cdot \norm{Q^\prime \ket \aov}
		\le 1/2 + \min(\norm{Q\ket\aov}, \norm{Q^\prime\ket\aov}),
	\end{align*}
	so it suffices to prove that $\min(\norm{Q\ket\aov}, \norm{Q^\prime\ket\aov}) \le \exp(-\Omega(n/(d+1)))$.
	
	Since $\sum_{j=1}^n \bra{\iv_j} Q_j \ket{\iv_j} + \sum_{j=1}^n \bra{\iv_j} (I - Q_j) \ket{\iv_j} = n$, either $\sum_{j=1}^n \bra{\iv_j} Q_j \ket{\iv_j} \ge n/2$ or $\sum_{j=1}^n \bra{\iv_j} (I-Q_j) \ket{\iv_j} \ge n/2$. Assume without loss of generality that $\sum_{j=1}^n \bra{\iv_j} (I-Q_j) \ket{\iv_j} \ge n/2$. We will prove that $\norm{Q \ket \aov} \le \exp(-\Omega(n/(d+1)))$.
	
	Let $\mathcal G$ be the set of gates in $C$, ordered such that $C = \prod_{G \in \mathcal G} G$ (where each gate $G$ is implicitly tensored with the identity). Also let $\mathcal G_T \subseteq \mathcal G$ be the set of gates in $C$ that act on $\Hi_T$. For $G \in \mathcal G_T$ let $\ket{\cv_G}$ be the mono-product state, specified up to a phase factor, such that $G = \rt{\cv_G} = I - 2\kb{\cv_G}$. Let $F$ be the set of functions with domain $\mathcal G$ that map each gate $G$ in $\mathcal G_T$ to either $I$ or $\kb{\cv_G}$, and map each gate $G$ in $\mathcal G \backslash\mathcal G_T$ to $G$ itself. Then $C = \sum_{f \in F} (-2)^{|\{G: f(G) = \kb{\cv_G}\}|} \prod_{G \in \mathcal G} f(G)$, so by the triangle inequality,
	\begin{equation*}
		\norm{Q \ket \aov} = 
		\norm{Q C \ket \iv} \le
		\sum_{f \in F} 2^{|\{G: f(G) = \kb{\cv_G}\}|} \cdot \max_{f \in F} \Norm{Q \prod_{\mathclap{G \in \mathcal G}} f(G) \cdot \ket \iv}.
	\end{equation*}
	
	By assumption, $|\mathcal G_T| \le cn/(d+1)$ (for a constant $c$ to be specified later), so
	\begin{equation*}
		\sum_{f \in F} 2^{|\{G: f(G) = \kb{\cv_G}\}|} = \sum_{\mathclap{S \subseteq \mathcal G_T}} 2^{|S|} = \prod_{\mathclap{G \in \mathcal G_T}} (2^0 + 2^1) = 3^{|\mathcal G_T|} \le 3^{cn/(d+1)}.
	\end{equation*}
	Consider an arbitrary function $f \in F$. For all $G \in \mathcal G$ we may write $f(G) = f_T(G) \otimes f_A(G)$, where $f_T(G)$ is a tensor product of one-qubit orthogonal projections on $\Hi_T$, and $f_A(G)$ is either an orthogonal projection or a unitary transformation on $\Hi_A$. (This can be seen by considering all three cases: $f(G) = I$, $f(G) = \kb{\cv_G}$, or $G \notin \mathcal G_T$ and $f(G) = G$.) Furthermore, if $G \notin \mathcal G_T$ then $f_T(G) = I$. Therefore, letting $\ket{\iv_T} = \ket{\iv_1, \dotsc, \iv_n}$,
	\begin{equation*}
		\Norm{Q \prod_{G \in \mathcal G} f(G) \cdot \ket \iv}
		= \Norm{\bigotimes_j Q_j \cdot \prod_{\mathclap{G \in \mathcal G_T}} f_T(G) \cdot \ket{\iv_T}} \cdot \Norm{\prod_{G \in \mathcal G} f_A(G) \cdot \ket{\iv_A}}.
	\end{equation*}
	Clearly $\Norm{\prod_{G \in \mathcal G} f_A(G) \cdot \ket{\iv_A}} \le 1$. For $k \in [d]$ let $M_k$ be the tensor product of $f_T(G)$ over all ``depth-$k$" gates $G \in \mathcal G_T$, i.e.\ $M_1, \dotsc, M_d$ are layers of one-qubit orthogonal projections such that $\prod_{G \in \mathcal G_T} f_T(G) = M_d \dotsb M_1$. Write $M_k = \bigotimes_{j=1}^n M_{jk}$, where $M_{jk}$ is an orthogonal projection on $\Hi_j$ for all $j \in [n]$. Then, by \cref{proj-tl},
	\begin{align*}
		\Norm{\bigotimes_j Q_j \cdot \prod_{\mathclap{G \in \mathcal G_T}} f_T(G) \cdot \ket{\iv_T}}
		&= \prod_{j=1}^n \Norm{Q_j M_{jd} \dotsb M_{j1}\ket{\iv_j}}
		\le \prod_{j=1}^n \exp\left(-\frac{\bra{\iv_j}(I - Q_j)\ket{\iv_j}}{2(d+1)}\right) \\
		&\le \exp\left(-\frac{n/2}{2(d+1)}\right).
	\end{align*}
	Altogether this implies that $\norm{Q \ket \aov} \le \exp((c\ln 3 - 1/4) \cdot n/(d+1))$, and the result follows by taking $c < 1/(4\ln 3)$.
\end{proof}
\subsection{Proof of \texorpdfstring{\cref{proj-tl}}{Corollary 5.1}} \label{it-proj-sec}

Let $\Delta(\ket\alpha,\ket\beta) = \arccos|\ip\alpha\beta|$; we will abbreviate this as $\Delta(\alpha,\beta)$.
\begin{lem}\label{dtri}
	The function $\Delta$ satisfies the triangle inequality, i.e.\ $\Delta(\alpha,\gamma) \le \Delta(\alpha,\beta) + \Delta(\beta,\gamma)$ for all states $\ket\alpha, \ket\beta, \ket\gamma$.
\end{lem}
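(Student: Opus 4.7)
The plan is to reduce the claim to an algebraic inequality relating the three pairwise fidelities and then establish that inequality via Cauchy--Schwarz.

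Writing $a = |\ip\alpha\beta|$, $b = |\ip\beta\gamma|$, $c = |\ip\alpha\gamma|$, all three values lie in $[0,1]$, so all three angles $\arccos a, \arccos b, \arccos c$ lie in $[0,\pi/2]$. Since $\arccos$ is monotonically decreasing on $[0,\pi]$, the inequality $\arccos c \le \arccos a + \arccos b$ splits into two cases. If $\arccos a + \arccos b \ge \pi/2$, it holds automatically because $\arccos c \le \pi/2$. Otherwise both sides lie in $[0,\pi/2]$ and, applying $\cos$, the inequality is equivalent to
\begin{equation*}
c \ge \cos(\arccos a + \arccos b) = ab - \sqrt{(1-a^2)(1-b^2)}.
\end{equation*}
So it suffices to prove this algebraic bound, and moreover it is enough to prove it when the right-hand side is nonnegative (otherwise the inequality $c \ge 0$ trivially suffices and we are in the first case anyway).

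The main calculation is to decompose $\ip\alpha\gamma$ using the orthogonal projection $P = I - \kb\beta$:
\begin{equation*}
\ip\alpha\gamma = \bra\alpha \kb\beta \ket\gamma + \bra\alpha P \ket\gamma = \ip\alpha\beta\ip\beta\gamma + \bra\alpha P \ket\gamma.
\end{equation*}
Because $P$ is an orthogonal projection, $P = P^2 = \adj P$, so by Cauchy--Schwarz
\begin{equation*}
|\bra\alpha P \ket\gamma| = |\bra\alpha P \cdot P \ket\gamma| \le \norm{P\ket\alpha}\norm{P\ket\gamma} = \sqrt{1-a^2}\sqrt{1-b^2},
\end{equation*}
using $\bra\alpha P \ket\alpha = 1 - |\ip\alpha\beta|^2$ and similarly for $\gamma$. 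By the reverse triangle inequality for complex numbers, $c = |\ip\alpha\gamma| \ge |\ip\alpha\beta\ip\beta\gamma| - |\bra\alpha P \ket\gamma| \ge ab - \sqrt{(1-a^2)(1-b^2)}$, which is exactly what was needed.

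I expect no serious obstacle: the only subtlety is keeping the case analysis on $\arccos a + \arccos b$ versus $\pi/2$ straight so that inverting $\cos$ is legitimate, but since all three angles automatically lie in $[0,\pi/2]$ this is routine. The rest is a one-line projection identity plus Cauchy--Schwarz.
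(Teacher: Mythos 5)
Your proof is correct, but it takes a genuinely different route from the paper's. You reduce the angle inequality to the algebraic bound $c \ge ab - \sqrt{(1-a^2)(1-b^2)}$ via monotonicity of $\cos$ on $[0,\pi/2]$, then prove that bound by expanding $\ip\alpha\gamma = \ip\alpha\beta\ip\beta\gamma + \bra\alpha(I-\kb\beta)\ket\gamma$ and controlling the second term with Cauchy--Schwarz and the reverse triangle inequality; the projection here is onto $\ket\beta$ and its complement, and $\ket\alpha, \ket\gamma$ are the vectors being resolved. The paper instead works in the two-dimensional span of $\ket\alpha$ and $\ket\gamma$, picking an orthonormal basis $\{\ket\alpha, \ket\dov\}$ of that span and resolving $\ket\beta$ against it, which produces an intermediate angle $\eta$ satisfying both $\Delta(\alpha,\beta)\ge\eta$ and $\Delta(\beta,\gamma)\ge\Delta(\alpha,\gamma)-\eta$ via the cosine angle-subtraction identity; summing these gives the claim. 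Your version is more elementary and self-contained --- it reduces everything to one clean two-term algebraic inequality with a one-line Cauchy--Schwarz justification, and the only care needed is the routine case split to legitimize inverting $\cos$. The paper's version is more geometric: by extracting $\eta$ it exhibits exactly ``how much progress'' $\ket\beta$ makes toward $\ket\gamma$ within the $\alpha$--$\gamma$ plane, which foreshadows the interpolation picture used in the subsequent Proposition about iterated projections. Both are complete and correct.
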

\begin{rmk*}
	The intuition behind our ultimate use of \cref{dtri} is that, up to normalization, the total amount of ``progress" made by $Q_{d-1} \dotsb Q_1$ in interpolating between $\ket\iv$ and $Q_d$ is at most the sum of the amounts of progress made by the individual projections $Q_1, \dotsc, Q_{d-1}$.

	For intuition as to why \cref{dtri} is true, consider the similarly defined function $\Delta^\prime(u,v) = \arccos\langle u,v\rangle$ for unit vectors $u,v \in \R^3$, where $\langle\cdot,\cdot\rangle$ denotes the usual inner product on $\R^3$. It is well known that $\Delta^\prime(u,v)$ equals the angle between $u$ and $v$, which equals the length of the arc (\cref{sphere}) formed by traversing a great circle on the unit sphere from $u$ to $v$ in the shorter of the two directions. This arc is known to be the shortest path on the unit sphere between $u$ and $v$, so $\Delta^\prime$ represents distance on the unit sphere.
	
	We make two more unrelated remarks. First, if we count states differing only by a phase factor as equivalent, then \cref{dtri} implies that $\Delta$ is a metric on the set of states (on some consistent number of qubits). Second, the results in this subsection generalize easily from $\C^{2^n}$ to arbitrary Hilbert spaces.
\end{rmk*}
\begin{figure}
	\begin{minipage}[b]{0.3\textwidth}
		\centering
		\begin{tikzpicture}
			[vec/.style = {semithick, arrows = {-Stealth[scale=1.5, angle'=30]}}]
			\pgfmathsetmacro{\r}{2}
			\pgfmathsetmacro{\h}{0.8}
			\pgfmathsetmacro{\u}{220}
			\pgfmathsetmacro{\v}{320}
			\pgfmathsetmacro{\p}{0.85}
			\pgfmathsetmacro{\e}{2}
			\node[above] (ul) at ({\p*\r*cos(\u)}, {\p*\h*sin(\u)}) {$u$};
			\node[above] (vl) at ({\p*\r*cos(\v)}, {\p*\h*sin(\v)}) {$v$};
			\draw[very thin] (0,0) circle (\r);
			\shade[ball color = white, very thin, opacity = 0.25] (0,0) circle (\r);
			\draw[thin,dashed,domain=0:180] plot ({\r*cos(\x)}, {\h*sin(\x)});
			\draw[thin,domain=180:{\u-\e}] plot ({\r*cos(\x)}, {\h*sin(\x)});
			\draw[red, thick, domain=\u:\v] plot ({\r*cos(\x)}, {\h*sin(\x)});
			\draw[thin,domain={\v+\e}:360] plot ({\r*cos(\x)}, {\h*sin(\x)});
			\draw[vec] (0,0) to ({\r*cos(\u)}, {\h*sin(\u)}) ;
			\draw[vec] (0,0) to ({\r*cos(\v)}, {\h*sin(\v)});
			\draw[fill=black] (0,0) circle (0.05);
		\end{tikzpicture}
		\caption{}
		\label{sphere}
	\end{minipage}
	\hfill
	\begin{minipage}[b]{0.6\textwidth}
		\begin{tikzpicture}
			\foreach \j in {0,1,...,4}
			\draw[->] (0,0) to node[near end, above  left = -5.2pt]{$\ket{\cv_\j}$} (\j/4*70:4);
		\end{tikzpicture}
		\caption{An optimal choice of $\ket{\cv_1}, \dotsc, \ket{\cv_{d-1}}$ in the $d=4$ case of \cref{proj1}.}
		\label{proj-fig}
	\end{minipage}
\end{figure}
\begin{proof}
	Let $\ket\dov$ be a state orthogonal to $\ket\alpha$ such that $\ket\gamma$ is in the span of $\ket\alpha$ and $\ket\dov$, and let $\eta = \arccos \frac{|\ip\beta\alpha|}{\sqrt{|\ip\beta\alpha|^2 + |\ip\beta\dov|^2}}$. By the triangle inequality,
	\begin{align*}
		|\ip\beta\gamma| &= |\bra\beta(\kb\alpha+\kb\dov)\ket\gamma|
		\le |\ip\beta\alpha| \cdot |\ip\gamma\alpha| + |\ip\beta\dov| \cdot |\ip\gamma\dov| \\
		&\le \frac{|\ip\beta\alpha|}{\sqrt{|\ip\beta\alpha|^2 + |\ip\beta\dov|^2}} \cdot |\ip\gamma\alpha| +
		\frac{|\ip\beta\dov|}{\sqrt{|\ip\beta\alpha|^2 + |\ip\beta\dov|^2}} \cdot |\ip\gamma\dov| \\
		&= \cos\eta \cdot \cos \Delta(\alpha,\gamma) + \sin\eta \cdot \sin \Delta(\alpha,\gamma)
		= \cos(\Delta(\alpha,\gamma)-\eta),
	\end{align*}
	so $\Delta(\beta,\gamma) \ge |\Delta(\alpha,\gamma) - \eta| \ge \Delta(\alpha,\gamma) - \eta$. Similarly,
	$|\ip\beta\alpha| \le
	\frac{|\ip\beta\alpha|}{\sqrt{|\ip\beta\alpha|^2 + |\ip\beta\dov|^2}}
	= \cos\eta$,
	so $\Delta(\alpha,\beta) \ge \eta$. Therefore $\Delta(\alpha,\beta) + \Delta(\beta,\gamma) \ge \eta + (\Delta(\alpha,\gamma) - \eta) = \Delta(\alpha,\gamma)$.
\end{proof}
\begin{prp}\label{strong-tl}
	For all $d\ge1$, nonzero orthogonal projections $Q_d$, and states $\ket\iv$,
	\begin{equation*}
		\max_{\mathclap{Q_1, \dotsc, Q_{d-1}}} \norm{Q_d Q_{d-1} \dotsb Q_1 \ket\iv} 
		= \cos\left(\frac{\arccos \norm{Q_d \ket\iv}} d\right)^d,
	\end{equation*}
	where the maximum is taken over all orthogonal projections $Q_1, \dotsc, Q_{d-1}$.
\end{prp}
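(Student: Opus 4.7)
The plan is to translate both sides into a statement about angles and then combine the triangle inequality for $\Delta$ (\cref{dtri}) with Jensen's inequality. For any choice of $Q_1,\dotsc,Q_{d-1}$, I will set $\ket{v_0}=\ket\iv$ and recursively define $\ket{v_j}$ to be the unit direction of $Q_j\dotsb Q_1\ket\iv$ (the degenerate case where some intermediate product vanishes is trivial). A direct calculation gives $\|Q_j\ket{v_{j-1}}\|=|\ip{v_{j-1}}{v_j}|=\cos\theta_j$ with $\theta_j\coloneqq\Delta(v_{j-1},v_j)\in[0,\pi/2]$, and chaining these yields $\|Q_d Q_{d-1}\dotsb Q_1\ket\iv\|=\prod_{j=1}^d\cos\theta_j$.

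For the upper bound, I set $\alpha=\arccos\|Q_d\ket\iv\|$. Since $\ket{v_d}$ lies in the image of $Q_d$, the variational characterization of $\|Q_d\ket\iv\|$ gives $|\ip{v_0}{v_d}|\le\|Q_d\ket\iv\|$, i.e.\ $\Delta(v_0,v_d)\ge\alpha$; iterating \cref{dtri} then yields $\sum_j\theta_j\ge\alpha$. Because $\log\cos$ is concave on $[0,\pi/2]$ (its second derivative is $-\sec^2$) and $\cos$ is nonnegative and decreasing there, Jensen's inequality gives $\prod_j\cos\theta_j\le\cos(\bar\theta)^d\le\cos(\alpha/d)^d$ with $\bar\theta=\tfrac1d\sum_j\theta_j\ge\alpha/d$, which is the desired upper bound.

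For the lower bound, I will realize equality via the equally-spaced configuration depicted in \cref{proj-fig}. I pick a unit $\ket{v_d}$ in the image of $Q_d$ with $\ip{v_0}{v_d}=\cos\alpha$ (after a phase rescaling), let $\ket{v_\perp}$ be the unit vector orthogonal to $\ket{v_0}$ in $\mathrm{span}(\ket{v_0},\ket{v_d})$ satisfying $\ket{v_d}=\cos\alpha\ket{v_0}+\sin\alpha\ket{v_\perp}$, and set $\ket{v_j}=\cos(j\alpha/d)\ket{v_0}+\sin(j\alpha/d)\ket{v_\perp}$ and $Q_j=\kb{v_j}$ for $1\le j\le d-1$. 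Consecutive $\ket{v_{j-1}},\ket{v_j}$ have inner product $\cos(\alpha/d)$, so induction gives $Q_{d-1}\dotsb Q_1\ket{v_0}=\cos(\alpha/d)^{d-1}\ket{v_{d-1}}$; applying $\|Q_d\ket{\psi}\|\ge|\ip{v_d}{\psi}|$ with $\ket\psi=\ket{v_{d-1}}$ then delivers $\|Q_dQ_{d-1}\dotsb Q_1\ket{v_0}\|\ge\cos(\alpha/d)^d$, matching the upper bound. The main conceptual step is the upper-bound argument, where the interplay between \cref{dtri} and the concavity of $\log\cos$ converts the sum constraint $\sum_j\theta_j\ge\alpha$ into the desired product bound; the lower bound is transparent once one identifies the right geodesic interpolation between $\ket{v_0}$ and the image of $Q_d$.
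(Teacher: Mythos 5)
Your proposal is correct, and its logical skeleton coincides with the paper's: the upper bound combines the triangle inequality for $\Delta$ with a Jensen-type concavity argument, and the lower bound is realized by the equally-spaced geodesic interpolation. The presentation differs in two modest ways. First, the paper factors out an explicit rank-1 lemma (\cref{proj1}), then reduces the general case to it by arguing that each intermediate $Q_j$ may be replaced by $\kb{\cv_j}$ and by invoking the variational characterization $\norm{Q_d\ket\aov}=\max_{\ket\dov=Q_d\ket\dov}|\ip\dov\aov|$; you instead work directly with the normalized intermediate states $\ket{v_j}$, and the single observation $|\ip{v_0}{v_d}|\le\norm{Q_d\ket\iv}$ plays the role of that variational step. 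Second, the paper derives the bound $\prod_j\cos\theta_j\le\cos(\bar\theta)^d$ via AM--GM followed by concavity of $\cos$, whereas you use concavity of $\log\cos$ in one step; these are equivalent (log-concavity of $\cos$ is exactly what the AM--GM plus concavity combination establishes), and both are easy to verify. Neither difference changes the substance of the argument, so this is essentially the paper's proof with a slightly more compact packaging.
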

\begin{proof}
	We first prove an analogous statement about rank-1 orthogonal projections, specifically that for all states $\ket{\cv_0}$ and $\ket{\cv_d}$,
	\begin{equation}\label{proj1}
		\max_{\ket{\cv_1}, \dotsc, \ket{\cv_{d-1}}} \left|\prod_{j=1}^d \ip{\cv_{j-1}}{\cv_j}\right|
		= \cos\left(\frac{\arccos|\ip{\cv_0}{\cv_d}|}d\right)^d.
	\end{equation}
	We then prove that the original proposition follows from this rank-1 analogue.
	
	First we prove that the left side of \cref{proj1} is at most the right side. On the image of $\Delta$, i.e.\ on the interval $[0,\pi/2]$, the cosine function is decreasing and concave. Therefore for all states $\ket{\cv_1}, \dotsc, \ket{\cv_{d-1}}$, by the AM-GM inequality, Jensen's inequality, and \cref{dtri},
	\begin{align*}
		\left|\prod_{j=1}^d \ip{\cv_{j-1}}{\cv_j}\right|^{1/d}
		&\le \frac{1}d \sum_{j=1}^d |\ip{\cv_{j-1}}{\cv_j}|
		= \frac{1}d \sum_{j=1}^d \cos \Delta(\cv_{j-1},\cv_j)
		\le \cos\left(\frac{1}d\sum_{j=1}^d \Delta(\cv_{j-1},\cv_j)\right) \\
		&\le \cos\left(\frac{\Delta(\cv_0,\cv_d)}d\right)
		= \cos\left(\frac{\arccos|\ip{\cv_0}{\cv_d}|}d\right).
	\end{align*}

	Next we give an example (\cref{proj-fig}) which shows that the left side of \cref{proj1} is at least the right side. (This part is not needed to prove \cref{proj-tl}, but it is brief and may be of independent interest.) For ease of notation let $\ket\sigma = \ket{\cv_0}$ and $\ket\tau = \ket{\cv_d}$. By multiplying $\ket\tau$ by a phase factor we may assume that $\ip\sigma\tau$ is a nonnegative real number. Let $\eta = \arccos(\ip\sigma\tau)/d$, let
	$\ket\dov =
	\frac{(I - \kb\sigma)\ket\tau}{\norm{(I - \kb\sigma)\ket\tau}} =
	\frac{\ket \tau  - \ket \sigma \ip \sigma \tau }{\sqrt{1 - \ip \sigma \tau ^2}}$,
	and for $j\in[d-1]$ let $\ket{\cv_j} = \cos(j\eta)\ket\sigma + \sin(j\eta)\ket\dov$. The latter equation also holds for $j=0$ and $j=d$, respectively because $\ket\sigma = \ket{\cv_0}$ and
	\begin{equation*}
		\cos(d\eta) \ket\sigma + \sin(d\eta) \ket\dov = \ip\sigma\tau \cdot \ket\sigma + \sqrt{1-\ip\sigma\tau^2} \cdot \ket\dov = \ket\tau = \ket{\cv_d}.
	\end{equation*}
	Since $\ip\sigma\dov=0$, it follows that for all $j\in[d]$,
	\begin{equation*}
		\ip{\cv_{j-1}}{\cv_j} = \cos((j-1)\eta)\cos(j\eta) + \sin((j-1)\eta)\sin(j\eta) = \cos(j\eta - (j-1)\eta) = \cos(\eta),
	\end{equation*}
	so $\prod_{j=1}^d\ip{\cv_{j-1}}{\cv_j} = \cos(\eta)^d$ as desired.
	
	Finally we prove that the original proposition follows from \cref{proj1}. For $j \in [d-1]$ we may assume that $Q_j$ is rank-1, because  if $Q_j \dotsb Q_1 \ket\iv \neq 0$ then $Q_j \dotsb Q_1 \ket\iv = \kb{\cv_j} Q_{j-1} \dotsb Q_1 \ket\iv$ for $\ket{\cv_j} = \frac{Q_j \dotsb Q_1 \ket\iv}{\norm{Q_j \dotsb Q_1 \ket\iv}}$, and if $Q_j \dotsb Q_1 \ket\iv = 0$ then clearly we cannot decrease $\norm{Q_d \dotsb Q_1 \ket\iv}$ by replacing $Q_j$ with an arbitrary rank-1 orthogonal projection. For any state $\ket \aov$, the norm $\norm{Q_d \ket\aov}$ equals the maximum of $|\ip\dov \aov|$ over all states $\ket\dov$ such that $Q_d\ket\dov = \ket\dov$.\footnote{By Cauchy-Schwarz, $|\ip\dov \aov| = |\bra\dov Q_d \ket\aov| \le \norm{Q_d \ket\aov}$, with equality if $\ket\dov = Q_d \ket\aov / \norm{Q_d \ket\aov}$ or if $Q_d\ket\aov = 0$.} (Here we used the fact that $Q_d \neq 0$ to ensure that there exists a \emph{state} in the 1-eigenspace of $Q_d$, rather than just the zero vector.) Therefore,
	\begin{align*}
		\max_{\mathclap{Q_1, \dotsc, Q_{d-1}}} \norm{Q_d \dotsb Q_1 \ket\iv}
		&= \;\max_{\mathclap{\ket{\cv_1}, \dotsc, \ket{\cv_{d-1}}}} \norm{Q_d \ket{\cv_{d-1}} \dotsb \ip{\cv_1}\iv}
		= \;\max_{\mathclap{\substack{\ket{\cv_1}, \dotsc, \ket{\cv_{d-1}} \\ \ket\dov = Q_d\ket\dov}}} \left|\ip\dov{\cv_{d-1}} \dotsb \ip{\cv_1}\iv\right| \\
		&= \;\max_{\mathclap{\ket\dov = Q_d \ket\dov}} \cos\left(\frac{\arccos|\ip\dov\iv|} {d}\right)^d
		= \cos\left(\frac{\arccos \left(\max_{\ket\dov = Q_d \ket\dov} |\ip\dov\iv|\right)}{d} \right)^d \\
		&= \cos\left(\frac{\arccos \norm{Q_d \ket\iv}} d\right)^d. \qedhere
	\end{align*}
\end{proof}
\begin{rst*}[\cref{proj-tl}]
	For all $d \ge 1$, orthogonal projections $Q_1, \dotsc, Q_d$, and states $\ket\iv$,
	\begin{equation*}
		\norm{Q_d \dotsb Q_1 \ket\iv} \le \exp\left(-\frac{\bra\iv (I - Q_d) \ket\iv}{2d}\right).
	\end{equation*}
\end{rst*}
\begin{proof}
	The claim is trivial if $Q_d=0$, so assume otherwise. Since
	\begin{equation*}
		\arccos \norm{Q_d \ket\iv} \ge \sin \arccos \norm{Q_d \ket\iv} = \sqrt{1 - \norm{Q_d \ket\iv}^2} = \sqrt{\bra\iv (I - Q_d) \ket\iv},
	\end{equation*}
	it follows from \cref{strong-tl} that
	\begin{equation*}
		\norm{Q_d \dotsb Q_1 \ket\iv} \le
		\cos\left(\frac{\arccos \norm{Q_d \ket\iv}} d\right)^d \le
		\cos\left(\frac{\sqrt{\bra\iv (I - Q_d) \ket\iv}} d\right)^d,
	\end{equation*}
	so it suffices to prove that $\cos r \le \exp(-r^2/2)$ for all $r \in [0,1]$.
	
	A special case of the Lagrange remainder theorem states that if $f:\R\to\R$ is $n$ times differentiable on all of $\R$, then for all $x \in \R$ there exists $h$ between 0 and $x$ such that
	\begin{equation*}
		f(x) = \sum_{k=0}^{n-1} \frac{f^{(k)}(0)}{k!} x^k + \frac{f^{(n)}(h)}{n!} x^n,
	\end{equation*}
	where $f^{(k)}$ denotes the $k$'th derivative of $f$. An application with $f = \cos(\cdot), x = r, n = 4$ reveals that
	\begin{equation*}
		\cos r \le
		1 - \frac{r^2}{2} + \frac{\max\{\cos h: 0 \le h \le r\}}{24} \cdot r^4 =
		1 - \frac{r^2}{2} + \frac{r^4}{24},
	\end{equation*}
	and an application with $f = \exp(\cdot), x = -r^2/2, n = 3$ reveals that
	\begin{equation*}
		e^{-r^2/2} \ge 1 - r^2/2 + \frac{1}{2} (-r^2/2)^2 + \frac{\max\{e^h: -r^2/2 \le h \le 0\}}{6} (-r^2/2)^3 \\
		= 1 - \frac{r^2}{2} + \frac{r^4}{8} - \frac{r^6}{48}.
	\end{equation*}
	Finally, since $r^2 \le 1$ it follows that $r^6 \le r^4$, so
	\begin{equation*}
		e^{-r^2/2} \ge 1 - \frac{r^2}{2} + \frac{r^4}{8} - \frac{r^4}{48} \ge 1 - \frac{r^2}{2} + \frac{r^4}{24} \ge \cos r. \qedhere
	\end{equation*}
\end{proof}

\subsection{Simplifying Depth-2 QAC Circuits by Measuring Ancillae}\label{meas-anc}

For a one-qubit state $\ket\dov$, let \emph{the $\ket\dov$ basis} be an orthonormal basis of $\C^2$ that includes $\ket\dov$. (We refer to ``the" $\ket\dov$ basis because, up to a phase factor, there is a unique state orthogonal to $\ket\dov$.)

\begin{lem}\label{meas-z}
	Let $\Hi_1$ be a one-qubit Hilbert space, and let $\Hi_2$ and $\Hi_3$ be Hilbert spaces on arbitrary numbers of qubits. Then for all $\ket\dov \in \Hi_1, \ket\cv \in \Hi_2, \ket\iv \in \Hi_1 \otimes \Hi_2 \otimes \Hi_3$, the following two procedures generate identically distributed random states in $\Hi_1 \otimes \Hi_2 \otimes \Hi_3$:
	\begin{itemize}[leftmargin=*]
		\renewcommand\labelitemi{--}
		\item measure the $\Hi_1$ qubit of $(\rt{\ket{\dov,\cv}} \otimes I_{\Hi_3}) \ket\iv$ in the $\ket\dov$ basis;
		\item measure the $\Hi_1$ qubit of $\ket\iv$ in the $\ket\dov$ basis, and then, conditioned on the outcome being $\ket\dov$, apply $\rt\cv$ on $\Hi_2$.
	\end{itemize}
\end{lem}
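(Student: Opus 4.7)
The plan is to unravel the reflection $\rt{\ket{\dov,\cv}} = I - 2\kb\dov \otimes \kb\cv$ and show that its action is ``conditional" on the $\Hi_1$ qubit being $\ket\dov$. Specifically, let $\ket{\dov^\perp} \in \Hi_1$ denote a state orthogonal to $\ket\dov$, and decompose
\begin{equation*}
\ket\iv = \ket\dov \otimes \ket\aov + \ket{\dov^\perp} \otimes \ket\beta
\end{equation*}
for some (subnormalized) $\ket\aov, \ket\beta \in \Hi_2 \otimes \Hi_3$. Then a direct computation using that $\kb\dov$ annihilates $\ket{\dov^\perp}$ gives
\begin{equation*}
(\rt{\ket{\dov,\cv}} \otimes I_{\Hi_3}) \ket\iv = \ket\dov \otimes (\rt\cv \otimes I_{\Hi_3}) \ket\aov + \ket{\dov^\perp} \otimes \ket\beta.
\end{equation*}

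From here I would just read off the distribution produced by each procedure. For the first procedure, the standard rule for projective measurement in the $\ket\dov$ basis yields outcome $\ket\dov$ with probability $\norm{(\rt\cv \otimes I_{\Hi_3}) \ket\aov}^2$, which equals $\norm{\ket\aov}^2$ since $\rt\cv$ is unitary, and leaves the post-measurement state $\ket\dov \otimes (\rt\cv \otimes I_{\Hi_3})\ket\aov / \norm{\ket\aov}$; the complementary outcome yields $\ket{\dov^\perp} \otimes \ket\beta/\norm{\ket\beta}$ with probability $\norm{\ket\beta}^2$. For the second procedure, measuring $\ket\iv$ in the $\ket\dov$ basis yields the same two outcomes with the same probabilities, with post-measurement states $\ket\dov \otimes \ket\aov/\norm{\ket\aov}$ and $\ket{\dov^\perp} \otimes \ket\beta/\norm{\ket\beta}$ respectively; applying $\rt\cv$ on $\Hi_2$ conditioned on outcome $\ket\dov$ then produces exactly the same random state as procedure 1.

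There is no real obstacle here — the lemma is essentially a definitional rearrangement, and the only thing to be careful about is keeping the registers straight and noting that $\rt\cv$ being unitary is what makes the two probabilities for the $\ket\dov$ outcome agree. One could alternatively phrase the argument by observing that the projector $\kb\dov\kb\dov$ onto the $\ket\dov$ outcome on $\Hi_1$ commutes with $\rt{\ket{\dov,\cv}}$ (since both act diagonally in any basis extending $\{\ket\dov, \ket{\dov^\perp}\}$ on $\Hi_1$), so one may freely reorder ``apply $\rt{\ket{\dov,\cv}}$" and ``measure $\Hi_1$ in the $\ket\dov$ basis"; after the measurement, $\rt{\ket{\dov,\cv}}$ reduces to $\rt\cv$ on the $\ket\dov$ branch and to $I$ on the $\ket{\dov^\perp}$ branch, matching procedure 2.
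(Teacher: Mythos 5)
Your proposal is correct and uses exactly the same key identity as the paper, which simply states that the lemma ``follows easily from the fact that $\rt{\ket{\dov,\cv}} = (I - \kb\dov) \otimes I + \kb\dov \otimes \rt\cv$''; your decomposition of $\ket\iv$ and the computation yielding $\ket\dov \otimes (\rt\cv \otimes I_{\Hi_3})\ket\aov + \ket{\dov^\perp}\otimes\ket\beta$ is just this identity applied to $\ket\iv$, followed by a spelled-out reading of the measurement statistics. Both your primary argument and your alternative commutation phrasing are fine; you have merely made explicit what the paper leaves to the reader.
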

\begin{proof}
	This follows easily from the fact that $\rt{\ket{\dov,\cv}} = (I - \kb\dov) \otimes I + \kb\dov \otimes \rt\cv$.
\end{proof}

\cref{d2-lb-cat} is clearly equivalent to the statement that if $C$ is a depth-2 QAC circuit, then any $n$ designated ``target" qubits of $C\ket{\zs}$ measure to $\ket{\Cat_n}$ with probability at most $1/2 + \exp(-\Omega(n))$. The following is the starting point for our proof:

\begin{prp}\label{d2-small-wlog}
	Let $p$ and $\ket\dov$ be such that for some depth-2 QAC circuit $C$, designated ``target" qubits of $C\ket{\zs}$ measure to $\ket\dov$ with probability $p$. Then there exist layers of \rtt gates $L_2, L_1$ and a mono-product state $\ket\iv$ such that for some partition of the qubits of $L_2 L_1 \ket\iv$ into ``targets" and ``ancillae",
	\begin{enumerate}[label=(\roman*)]
		\item the targets of $L_2 L_1 \ket\iv$ measure to $\ket\dov$ with probability at least $p$;
		\item for all $k \in \{1,2\}$, every ancilla is acted on by a gate in $L_k$, and every gate in $L_k$ acts on at least one target.
	\end{enumerate}
\end{prp}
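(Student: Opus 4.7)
The plan is to start from an $\rtt$ normal form of $C$ and iteratively apply local simplifications to $(L_2, L_1, \ket\iv)$, each preserving or increasing the probability that the targets measure to $\ket\dov$, until condition (ii) is satisfied. By \cref{rnf}, $C = C^\prime L$ for some depth-2 layering $C^\prime$ of multi-qubit \rtt gates and some layer $L$ of one-qubit gates; setting $L_2 L_1 \coloneqq C^\prime$ and $\ket\iv \coloneqq L\ket{\zs}$ gives $L_2 L_1 \ket\iv = C\ket{\zs}$ with $\ket\iv$ mono-product, and retaining the original target designation makes (i) hold with equality initially.

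The first simplification handles ancillae touched by at most one layer. An untouched ancilla is deleted. If an ancilla $q$ is touched only by a single gate $G = \rt{\ket{\cv_q} \otimes \ket{\cv_R}}$ in one layer $L_k$ (where $R$ is the support of $G$ minus $q$), then $q$ commutes past every gate in $L_{3-k}$, so I may measure $q$ in the $\ket{\cv_q}$ basis ``at the end'' without affecting the target measurement. \cref{meas-z} equates this with measuring $q$ before $G$ is applied and conditionally applying $\rt{\cv_R}$ on $R$; since $q$ starts in $\ket{\iv_q}$, this expresses the target-$\ket\dov$ probability as the convex combination $r P_1 + (1 - r) P_0$, where $r = |\ip{\cv_q}{\iv_q}|^2$, $P_1$ is the probability in the sub-circuit obtained by removing $q$ and replacing $G$ by $\rt{\cv_R}$, and $P_0$ is the probability in the sub-circuit obtained by removing both $q$ and $G$. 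Retaining whichever sub-circuit has the larger probability preserves (i). After iterating, every remaining ancilla is touched by both $L_1$ and $L_2$.

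The second simplification eliminates ancilla-only gates. If a connected component of the qubit-gate hypergraph contains no target, then $L_2 L_1 \ket\iv$ factors along the component and the component's qubits and gates can be deleted. Moreover, if $G \in L_k$ is an ancilla-only gate on support $A_G$ and some target $t$ lies outside the supports of all $L_k$-gates (and also outside $L_1$'s support when $k = 2$, so that $t$ is still in $\ket{\iv_t}$ just before $G$ is applied), then $G$ can be replaced by $G^+ \coloneqq \rt{\ket{\cv_G} \otimes \ket{\iv_t}}$ on $A_G \cup \{t\}$: a direct computation using the fact that $t$ is in the mono-product state $\ket{\iv_t}$ gives $G^+ \ket{\iv_{A_G}, \iv_t} = (G\ket{\iv_{A_G}}) \otimes \ket{\iv_t}$, so the full circuit is unchanged while $G^+$ now touches a target. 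Each simplification strictly decreases a monovariant such as the sum of the total number of qubits, total number of gates, and the number of ancilla-only gates, so the process terminates. The main obstacle I foresee is the case where an ancilla-only $L_k$-gate remains yet every target is already in $\text{supp}(L_k)$ (and in $\text{supp}(L_1)$ if $k = 2$), leaving no target free for direct expansion; I would handle this by interleaving the two simplifications with a component analysis that either frees a target through further applications of the first simplification or shows that the offending component is disconnected from the targets, and hence removable.
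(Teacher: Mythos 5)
Your framework (reduce to \rtt normal form via \cref{rnf}, then iterate local simplifications that delete ancillae by measuring them via \cref{meas-z}) matches the paper's, and your first simplification and the computation $G^+\ket{\iv_{A_G},\iv_t} = (G\ket{\iv_{A_G}})\otimes\ket{\iv_t}$ are correct. However, the ``obstacle'' you flag at the end is a genuine gap, and neither fallback you propose closes it. Take two targets $t_1,t_2$ and two ancillae $a_1,a_2$, with $L_1$ having gates on $\{t_1,a_1\}$ and $\{t_2,a_2\}$, and $L_2$ having gates on $\{t_1\}$, $\{t_2\}$, and $\{a_1,a_2\}$. After your first simplification nothing changes, since every ancilla is touched by both layers. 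The $L_2$ gate on $\{a_1,a_2\}$ is ancilla-only, but both targets lie in the support of $L_2$, so your expansion is unavailable; the qubit--gate hypergraph is connected, so there is no target-free component to delete; and ``further applications of the first simplification'' cannot help since, by construction, none remain.

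The missing idea is a simpler one that the paper uses: an ancilla-only gate $G$ in the \emph{final} layer $L_2$ may simply be \emph{removed}, because $G$ is a unitary acting only on ancillae and therefore leaves the reduced density matrix on the targets unchanged, hence the probability of measuring $\ket\dov$ unchanged. Removing $G$ leaves its ancillae touched only by $L_1$, at which point your first simplification finishes them off. For an ancilla-only gate $G$ in $L_1$, the paper first ensures that every ancilla is acted on by $L_2$, and then measures \emph{all} of $G$'s qubits simultaneously in the bases of their respective $L_2$ gates via \cref{meas-z}; after $L_1$ those qubits form a tensor factor disjoint from the rest of the state, so this deletes $G$ outright. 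Your expansion trick is a valid local move when a free target happens to exist, but the unconditional deletion of ancilla-only $L_2$ gates is what makes the argument go through in general.
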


\begin{rmk*}
	Although not necessary for our purposes, using \cref{rnf} it is easy to generalize the following argument to show that the gates in $L_2$ and $L_1$ may be assumed to be multi-qubit gates.
\end{rmk*}

\begin{proof}
	Let a ``construction" be a tuple of the form $(L_2, L_1, \ket\iv)$ where $L_2$ and $L_1$ are layers of \rtt gates and $\ket\iv$ is a mono-product state. By \cref{rnf} there exists a construction satisfying (i). Below we describe a procedure that takes as input a construction satisfying (i) but not (ii), and outputs a construction satisfying (i) using fewer ancillae than the original construction. It then suffices to iterate this procedure on a construction satisfying (i) until the construction also satisfies (ii), because the number of ancillae can only decrease finitely many times.
	
	Let $(L_2, L_1, \ket\iv)$ be a construction satisfying (i) but not (ii), and let $\ket\aov = L_2 L_1 \ket\iv$. For all $k \in \{1,2\}$ and gates $G$ in $L_k$, write $G = \rt{\bigotimes_{\Hi} \ket{\cv_{\Hi}^k}}$, where $\Hi$ ranges over all one-qubit Hilbert spaces acted on by $G$, and $\ket{\cv_{\Hi}^k}$ is a state in $\Hi$. (Since $\Hi$ and $k$ uniquely determine $G$, this does not assign conflicting definitions to any of the $\ket{\cv_{\Hi}^k}$.)
	
	First consider the case where an ancilla $\Hi$ is not acted on by $L_2$ (that is, by any gate in $L_2$). If $\Hi$ is also not acted on by $L_1$ then we may simply remove $\Hi$ from the construction. Otherwise, measure the $\Hi$ qubit of $\ket\aov$ in the $\ket{\cv_\Hi^1}$ basis. By \cref{meas-z}, the resulting state on the qubits besides $\Hi$ equals $L_2^\prime L_1^\prime \ket{\iv^\prime}$ for some random construction $(L_2^\prime, L_1^\prime, \ket{\iv^\prime})$. Furthermore, the expectation over $(L_2^\prime, L_1^\prime, \ket{\iv^\prime})$ of the probability that the targets of $L_2^\prime L_1^\prime \ket{\iv^\prime}$ measure to $\ket\dov$ equals the probability that the targets of $\ket\aov$ measure to $\ket\dov$, which is at least $p$. Therefore there exists a fixed construction in the support of $(L_2^\prime, L_1^\prime, \ket{\iv^\prime})$ that satisfies (i), and the procedure may output this construction.
	
	If an ancilla $\Hi$ is acted on by $L_2$ but not by $L_1$, then measure the $\Hi$ qubit of $\ket\aov$ in the $\ket{\cv_\Hi^2}$ basis, and the rest of the argument is similar to the above. If every ancilla is acted on by $L_2$, and a gate $G$ in $L_1$ does not act on any targets, then for all qubits $\Hi$ acted on by $G$, measure the $\Hi$ qubit of $\ket\aov$ in the $\ket{\cv_\Hi^2}$ basis, and again the rest of the argument is similar to the above. Finally, if a gate $G$ in $L_2$ does not act on any targets, then $G$ acts on at least one ancilla, and also we may remove $G$ from $L_2$ without changing the probability that the targets of $L_2 L_1 \ket\iv$ measure to $\ket\dov$, so this reduces to the previously considered case in which an ancilla is not acted on by $L_2$.
\end{proof}

\subsection{Proof of \texorpdfstring{\cref{d2-lb-cat}}{Theorem 1.7(iii)}} \label{lbcd2-sec}

The $\delta=1$ case of the following is Markov's inequality:
\begin{lem}\label{gen-markov}
	Let $0 < \delta \le 1$, let $a > 0$, and let $X$ be a nonnegative random variable. Then there exists $t \in [a, ae^{\delta^{-1}-1}]$ such that $P(X \ge t) \le \delta \Ex[X]/t$.
\end{lem}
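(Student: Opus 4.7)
The plan is to argue by contradiction using the layer-cake identity $\Ex[X] = \int_0^\infty P(X \ge t)\, dt$, exploiting the fact that the measure $dt/t$ on $[a, ae^{1/\delta - 1}]$ has total mass exactly $1/\delta - 1$. The edge cases $\Ex[X] = 0$ (take $t = a$, both sides are zero) and $\Ex[X] = \infty$ (take $t = a$, the right-hand side is $\infty$) are trivial and can be dispatched at the outset, so I would assume $0 < \Ex[X] < \infty$.

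Set $b = ae^{\delta^{-1} - 1}$ and suppose for contradiction that $P(X \ge t) > \delta \Ex[X]/t$ for every $t \in [a, b]$. First I would integrate this hypothesis directly over $[a, b]$:
\begin{equation*}
\int_a^b P(X \ge t)\, dt > \delta \Ex[X] \int_a^b \frac{dt}{t} = \delta \Ex[X] \ln(b/a) = (1 - \delta)\Ex[X],
\end{equation*}
using $\ln(b/a) = \delta^{-1} - 1$. Then I would use monotonicity of the tail to handle the interval $[0, a]$: since $P(X \ge t) \ge P(X \ge a) > \delta \Ex[X]/a$ for all $t \in [0, a]$, we have
\begin{equation*}
\int_0^a P(X \ge t)\, dt \ge a \cdot P(X \ge a) > \delta \Ex[X].
\end{equation*}

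Adding the two bounds gives $\int_0^b P(X \ge t)\, dt > (1 - \delta)\Ex[X] + \delta \Ex[X] = \Ex[X]$, contradicting $\int_0^\infty P(X \ge t)\, dt = \Ex[X]$. There is no real obstacle: the only subtle point is to remember to include the $[0, a]$ contribution via monotonicity, since the bare integration of the hypothesis over $[a, b]$ alone only rules out $b \ge ae^{1/\delta}$, losing the necessary factor of $e$; exploiting that $P(X \ge a)$ itself is forced to be large is exactly what recovers the sharper endpoint $b = ae^{\delta^{-1} - 1}$.
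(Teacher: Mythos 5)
Your proof is correct and is essentially identical to the paper's: both argue by contradiction, split $\Ex[X]=\int_0^\infty P(X\ge t)\,dt$ at $a$, integrate the contradiction hypothesis over $[a,b]$ to obtain $(1-\delta)\Ex[X]$, and use monotonicity of the tail on $[0,a]$ to obtain the extra $\delta\Ex[X]$. Your explicit treatment of the degenerate cases $\Ex[X]\in\{0,\infty\}$ is a harmless (and slightly tidier) addition; in the paper's formulation these cases are absorbed into the contradiction automatically.
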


\begin{rmk*}
	The intuition behind our use of \cref{gen-markov} is as follows. \cref{small} implies that depth-2 QAC circuits require size at least $\Omega(n)$ to approximately construct $\ket{\Cat_n,\dov}$, and \cref{d2-small-wlog} implies that depth-2 QAC circuits that approximately construct $\ket{\Cat_n,\dov}$ have size at most $2n$ without loss of generality, so these bounds are ``just a constant factor" away from implying that depth-2 QAC circuits of arbitrary size cannot approximately construct $\ket{\Cat_n,\dov}$. This is analogous to how Markov's inequality is ``just a factor of $\delta$" away from the conclusion of \cref{gen-markov}.
\end{rmk*}

\begin{proof}
	Assume the contrary, and let $b = ae^{\delta^{-1}- 1}$. Then,
	\begin{equation*}
	\Ex[X]
	= \int_0^\infty P(X \ge t) dt
	\ge \int_0^a P(X \ge t) dt + \int_a^b P(X \ge t) dt,
	\end{equation*}
	and
	\begin{equation*}
	\int_0^a P(X \ge t) dt \ge \int_0^a P(X \ge a) dt = aP(X \ge a) > \delta \Ex[X],
	\end{equation*}
	and
	\begin{equation*}
	\int_a^b P(X \ge t) dt > \int_a^b \delta \Ex[X]/t \cdot dt = \delta \Ex[X] \ln(b/a) = \delta \Ex[X](\delta^{-1} - 1),
	\end{equation*}
	so $\Ex[X] > \Ex[X]$, which is a contradiction.
\end{proof}

\begin{thm}[Tur\'an's theorem\footnote{Usually Tur\'an's theorem is phrased as saying that dense graphs have large cliques, whereas \cref{turan} says that sparse graphs have large independent sets. These statements are equivalent, because taking the complement of a graph turns cliques into independent sets and vice versa.}]\label{turan}
	Let $\mathcal G$ be a simple undirected graph on $n$ vertices, and let $d$ be the average degree of the vertices in $\mathcal G$. Then $\mathcal G$ contains an independent set of size at least $n/(d+1)$.
\end{thm}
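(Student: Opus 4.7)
The plan is to prove Turán's theorem via the probabilistic method, following the standard Caro--Wei argument. I would sample a uniformly random permutation $\pi$ of the $n$ vertices, and define
\begin{equation*}
    I \coloneqq \{v \in V(\mathcal G) : \pi(v) < \pi(u) \text{ for all neighbors } u \text{ of } v\},
\end{equation*}
namely the set of vertices that precede all of their neighbors under $\pi$. First I would observe that $I$ is automatically independent: if $u$ and $v$ were adjacent and both lay in $I$, then each would have to strictly precede the other under $\pi$, which is impossible.

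Next, for each vertex $v$ of degree $d(v)$, the event $v \in I$ depends only on the relative order of the $d(v)+1$ vertices in $\{v\} \cup N(v)$, and by symmetry each of these is equally likely to come first. Hence $\Pr[v \in I] = 1/(d(v)+1)$, so by linearity of expectation
\begin{equation*}
    \Ex[|I|] \;=\; \sum_{v \in V(\mathcal G)} \frac{1}{d(v)+1}.
\end{equation*}

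To finish I would apply Jensen's inequality to the convex function $f(x) = 1/(x+1)$ on $[0,\infty)$, which gives
\begin{equation*}
    \frac{1}{n}\sum_{v} \frac{1}{d(v)+1} \;\ge\; \frac{1}{\frac{1}{n}\sum_v d(v) + 1} \;=\; \frac{1}{d+1},
\end{equation*}
so $\Ex[|I|] \ge n/(d+1)$. Since the expectation is at least $n/(d+1)$, there exists some realization of $\pi$ producing an independent set $I$ of size at least $n/(d+1)$, as desired.

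The argument is sufficiently standard that I do not anticipate any real obstacle. The only step worth double-checking is the direction of Jensen's inequality, which goes the correct way because $f''(x) = 2/(x+1)^3 > 0$ on $[0,\infty)$; after that, the conclusion is immediate from an averaging argument over the random permutation.
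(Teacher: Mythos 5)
Your proof is correct and follows essentially the same approach as the paper's: both use the random-permutation version of the Caro--Wei argument, compute $\Pr[v \in I] = 1/(d(v)+1)$ by symmetry, apply linearity of expectation, and finish with Jensen's inequality applied to the convex function $x \mapsto 1/(x+1)$.
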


\begin{rmk*}
	For the intuition behind our use of \cref{turan}, recall the discussion of disjoint light cones from \cref{d2lb-intro}.
\end{rmk*}

\begin{proof}[Proof exposited by Alon and Spencer~\cite{AS16}]
	Identify the vertex set of $\mathcal G$ with $[n]$. Let $\sigma$ be a uniform random permutation of $[n]$, and let $\mathcal I$ be the set of vertices $u$ such that $\sigma(u) < \sigma(v)$ for all edges $\{u,v\}$. Then $\mathcal I$ is an independent set, because for every edge $\{u,v\}$, either $\sigma(u) < \sigma(v)$ or $\sigma(v) < \sigma(u)$. A vertex $u$ with degree $d_u$ is in $\mathcal I$ with probability $1/(d_u + 1)$, because any vertex out of $u$ and its neighbors is equally likely to be assigned the lowest value by $\sigma$ out of these vertices. By linearity of expectation it follows that $\Ex |\mathcal I| = \sum_{u \in [n]} 1/(d_u + 1)$, and by Jensen's inequality this is at least $n/(d+1)$.
\end{proof}

Recall that $\ket\iv, C, \ket\dov, (Q_j)_j$ are variables from the statement of \cref{small}. In upcoming applications of \cref{small} we will refer to $\ket\iv$ as the ``input state", $C$ as the ``circuit", $\ket\dov$ as the ``desired output state", and $(Q_j)_j$ as ``projections".

\begin{rmk*}
	We will not actually use the full strength of \cref{small}, in the sense that we will always upper-bound the number of multi-qubit gates acting on the targets by upper-bounding the \emph{total} number of gates. One could instead use the full strength of \cref{small} in this regard, and forgo the use of \cref{d2-small-wlog} entirely by measuring selected ancillae all at once later in the proof, but we consider the current presentation to be simpler.
\end{rmk*}

\begin{rst*}[\cref{d2-lb-cat}, paraphrased]
	If $C$ is a depth-2 QAC circuit, then any $n$ designated ``target" qubits of $C\ket{\zs}$ measure to $\ket{\Cat_n}$ with probability at most $1/2 + \exp(-\Omega(n))$.
\end{rst*}

\begin{proof}
	Let $L_2,L_1$ be layers of \rtt gates and let $\ket\iv$ be a mono-product state, with $n$ qubits designated as targets and all other qubits designated as ancillae. Assume that for all $k \in \{1,2\}$, every ancilla is acted on by a gate in $L_k$, and every gate in $L_k$ acts on at least one target. By \cref{d2-small-wlog} it suffices to prove that the targets of $L_2 L_1 \ket\iv$ measure to $\ket{\Cat_n}$ with probability at most $1/2 + \exp(-\Omega(n))$.
	
	Let $c$ be the constant from \cref{small}, and let $\gamma = (c/2)(c/3)/(1+c/2)$ and $\delta = (c/2)\gamma^2$. Since \cref{small} remains true if $c$ is replaced by any constant between 0 and $c$, we may take $c$ to be small enough so that $\gamma, \delta \le 1$.

	For a circuit $C$ let $|C|$ denote the number of gates in $C$, and write ``$G \in C$" to denote that $G$ is a gate in $C$. First consider the case where $|L_2| \le \gamma n$. It suffices to prove that $L_2 L_1 \ket\iv$ and $\ket{\Cat_n, \dov}$ have fidelity at most $1/2 + \exp(-\Omega(n))$ for all states $\ket\dov$. If $|L_1| \le n (c/3) / (1+c/2)$ then $|L_1| + |L_2| \le (c/3)n$, and the result follows from applying \cref{small} with input state $\ket\iv$, circuit $L_2 L_1$, desired output state $\ket{\Cat_n,\dov}$, and $n$ one-qubit projections $\kb0$ acting on the targets. Alternatively, if $|L_1| \ge n (c/3) / (1+c/2)$ then $|L_2| \le (c/2)|L_1|$, and the result follows from applying \cref{small} with input state $L_1 \ket\iv$, circuit $L_2$, desired output state $\ket{\Cat_n,\dov}$, and for every gate $G \in L_1$ the projection $\kb0 \otimes I$ on the support of $G$, where $\kb0$ acts on one of the targets acted on by $G$. (Here we used the fact that $1/2 + \exp(-\Omega(|L_1|)) \le 1/2 + \exp(-\Omega(n))$ by our assumption about $|L_1|$.)
	
	Now consider the case where $|L_2| \ge \gamma n$. This time we will measure some carefully chosen ancillae before applying \cref{small}. Let $X$ be the number of targets acted on by a uniform random gate in $L_1$. By \cref{gen-markov} there exists a real number $t \in [1,\exp(1/\delta)]$ such that $P(X \ge t) \le \delta \Ex[X]/t$. Fix such a $t$. Write $L_1 = L_1^B \otimes L_1^S$, for ``big" and ``small" respectively, where $L_1^B$ (resp.\ $L_1^S$) consists of the gates in $L_1$ acting on at least (resp.\ fewer than) $t$ targets. Then, $|L_1^B| = |L_1| P(X \ge t) \le \delta |L_1| \Ex[X]/t \le \delta n/t = (c/2)\gamma^2 n/t$.
	
	Let $\mathcal G$ be the undirected graph whose vertices are the gates in $L_2$, and whose edges are the pairs $e$ of distinct vertices such that for some gate $G \in L_1^S$, for both vertices $V$ in $e$, there exists a target that both $G$ and $V$ act on. Since $t \ge 1$, the degree of a vertex is at most $t-1$ times the number of targets acted on by that vertex. Therefore the average degree of the vertices in $\mathcal G$ is at most $(t-1)n/|L_2|$, so by \cref{turan} there exists an independent set $\mathcal I$ in $\mathcal G$ of size
	\begin{equation*}
	|\mathcal I| \ge \frac{|L_2|}{(t-1)n/|L_2| + 1}
	\ge \frac{\gamma n}{(t-1)n/(\gamma n) + 1}
	= \frac{\gamma^2 n}{t-1+\gamma}
	\ge \gamma^2 n / t.
	\end{equation*}
	Fix such a set $\mathcal I$. It follows that $|L_1^B| \le (c/2)|\mathcal I|$, and also that $|\mathcal I| \ge \gamma^2 n/ \exp(1/\delta) \ge \Omega(n)$.
	
	For $V \in \mathcal I$ let $\Hi_V$ be the Hilbert space consisting of the following two types of qubits: targets acted on by a gate in $L_1^S$ that acts on one of the same targets as $V$, and qubits acted on by $V$ that are not acted on by $L_1^S$. The $\Hi_V$ are Hilbert spaces on pairwise disjoint sets of qubits, because $\mathcal I$ is an independent set in $\mathcal G$ and because a qubit cannot be acted on by multiple gates in any given layer.
	
	For $G \in L_2$ write $G = \rt{\bigotimes_\Hi \ket{\cv_\Hi}}$, where $\Hi$ ranges over all one-qubit Hilbert spaces acted on by $G$, and $\ket{\cv_\Hi}$ is a state in $\Hi$. This defines $\ket{\cv_\Hi}$ for every ancilla $\Hi$, because $L_2$ acts on every ancilla. For all ancillae $\Hi$ acted on by $L_1^S$, measure the $\Hi$ qubit of $L_2 L_1 \ket\iv$ in the $\ket{\cv_\Hi}$ basis. By \cref{meas-z}, the resulting state $\ket\aov$ on the qubits that were not measured satisfies $\ket\aov = L_2^\prime L_1^B \ket{\iv^\prime}$, where $L_2^\prime$ and $L_1^B$ are implicitly tensored with the identity, and
	\begin{itemize}[leftmargin=*]
		\renewcommand\labelitemi{--}
		\item $\ket{\iv^\prime}$ is the tensor product of (i) a mono-product state on the qubits that were not acted on by $L_1^S$, and (ii) the tensor product over $G \in L_1^S$ of a state on the targets that were acted on by $G$. In particular, $\ket{\iv^\prime}$ factors as $\ket{\iv^\prime} = \bigotimes_{V \in \mathcal I} \ket{\iv^\prime_V} \otimes \ket{\iv^\prime_A}$, for some states $\ket{\iv_V^\prime} \in \Hi_V$ (none of the qubits in $\Hi_V$ were measured) and a state $\ket{\iv^\prime_A}$ on all other qubits in $\ket\aov$.
		\item $L_2^\prime = \bigotimes_{G \in L_2} U_G$, where $U_G$ is a Hermitian unitary transformation (specifically, the identity or an \rtt gate) on the qubits in $\ket\aov$ that were acted on by $G$.
	\end{itemize}

	It suffices to prove that the targets of $\ket\aov$ measure to $\ket{\Cat_n}$ with probability at most $1/2 + \exp(-\Omega(n))$, or equivalently that $|\bra{\Cat_n, \dov} L_2^\prime L_1^B \ket{\iv^\prime}|^2 \le 1/2 + \exp(-\Omega(n))$ for all states $\ket\dov$. For $V \in \mathcal I$, the transformation $U_V$ acts on a subset of the qubits in $\Hi_V$, including at least one target because $V$ acted on at least one target and none of the targets were measured. Therefore we may define an orthogonal projection on $\Hi_V$ by $Q_V = U_V(\kb0 \otimes I)U_V \otimes I$, where $\kb0$ acts on a target. Observe that $I - Q_V = U_V(\kb1 \otimes I)U_V \otimes I$, and that $L_2^\prime \ket{\Cat_n,\dov}$ measures to $\bigotimes_{V \in \mathcal I} Q_V \otimes I$ and to $\bigotimes_{V \in \mathcal I} (I - Q_V) \otimes I$ each with probability 1/2. Therefore the result follows from applying \cref{small} with input state $\ket{\iv^\prime}$, circuit $L_1^B$, desired output state $L_2^\prime \ket{\Cat_n,\dov}$, and projections $(Q_V)_V$, recalling that $|L_1^B| \le (c/2)|\mathcal I|$ and that $|\mathcal I| \ge \Omega(n)$.
\end{proof}

\section*{Acknowledgments}
\addcontentsline{toc}{section}{Acknowledgments}

Thanks to Benjamin Rossman and Henry Yuen for introducing me to this problem, and for having several helpful discussions throughout the research and writing processes. Thanks to Srinivasan Arunachalam, Daniel Grier, Ian Mertz, Eric Rosenthal, and Rahul Santhanam for helpful discussions as well. Part of this work was done while the author was visiting the Simons Institute for the Theory of Computing. Circuit diagrams were made using the Quantikz package~\cite{Kay20}.

\printbibliography[heading=bibintoc]

\appendix
\section{Proof of \texorpdfstring{\cref{par-ub} ($d < 11$)}{Corollary 1.2 (d < 11)}}\label{d7}

\begin{SCfigure}[][b]
	\begin{quantikz}
		& \qw & \qw & \control{} & \qw & \qw & \qw \\
		\lstick{\ket0} & \gate \vee & \gate X & \ctrl{-1} & \gate X & \gate \vee & \qw \\
		& \ctrl{-1}& \qw & \qw & \qw & \ctrl{-1} & \qw \\
		& \ctrl{-1}& \qw & \qw & \qw & \ctrl{-1} & \qw \\
		& \ctrl{-1} & \qw & \qw & \qw & \ctrl{-1} & \qw
	\end{quantikz} =
	\begin{quantikz}
		& \control{} & \qw \\
		\lstick{\ket0} & \ghost X \qw & \ghost \vee \qw \\
		& \octrl{-2} & \qw \\
		& \octrl{-1} & \qw \\
		& \octrl{-1} & \qw
	\end{quantikz}
	\caption{}
	\label{simp}
\end{SCfigure}

Here we give only the aspects of the proof that differ from the depth-11 case. The rest of the argument, and a reminder of \cref{par-ub} itself, may be found in \cref{par-ub-sec}.
	
Let $C$ be the depth-2 QAC circuit from \cref{mc-ub-sec}, and let $\ket\nu$ be an $n$-nekomata such that $C\ket\zs$ and $\ket\nu$ have high fidelity. Let $C^\prime = (X^{\otimes n} \otimes I) C$ and $\ket{\nu^\prime} = (X^{\otimes n} \otimes I) \ket\nu$, where $X^{\otimes n}$ acts on the targets of $\ket\nu$. Note that $\ket{\nu^\prime}$ is an $n$-nekomata with the same targets as $\ket\nu$, and that the fidelity of $C^\prime \ket\zs$ and $\ket{\nu^\prime}$ equals that of $C\ket\zs$ and $\ket\nu$.
	
To approximate parity in depth 7, plug $C^\prime$ into the circuit for parity from \cref{pc}, and apply the simplification pictured in \cref{simp}. The gate on the right denotes $R_{\ket{1000}}$ (more generally, consider $R_{\ket{1,\zs}}$), and the equivalence of the circuits in \cref{simp} may be verified by considering their actions when the unset qubits range over all standard basis states.

\section{Proof of \texorpdfstring{\cref{mc-nek-lb-mean}}{Theorem 4.2(i)}}\label{mc-app}

\begin{rst*}[\cref{mc-nek-lb-mean}, paraphrased]
	Let $CL$ be a mostly classical circuit of size $s$ and depth $d$, where $C$ is purely classical and $L$ is a layer of \rtt gates. Then the fidelity of $CL\ket\zs$ and any $n$-nekomata is at most
	\begin{equation*}
		\frac{1}2 + 
		\exp\left(-\Omega\left(\frac{n/(4^d \log n)}{\max\left(\log s, \sqrt{n/(4^d \log n)}\right)}\right)\right).
	\end{equation*}
\end{rst*}

Designate $n$ qubits of $CL$ as targets. Our proof of \cref{mc-nek-lb-mean} is similar to that of \cref{mc-nek-lb-nice}, except that here our procedure for simulating a standard-basis measurement of the targets of $CL\ket{\zs}$ is much more complicated. For brevity's sake we will omit some proof steps with clear analogues in the proof of \cref{mc-nek-lb-nice}, i.e.\  in \cref{mc-lb-sec}.

Consider a gate $G$ in $L$. Write $G = \rt{\bigotimes_j \ket{\cv_j}}$ for one-qubit states $(\ket{\cv_j})_j$, and let $p_j = p_j^{(G)} = |\ip1{\cv_j}|^2$. We may assume that $p_j \neq 0$ for all $j$, because $G\ket{\zs} = (\rt{\bigotimes_{j: p_j \neq 0} \ket{\cv_j}} \otimes I) \ket{\zs}$. Then, as an immediate corollary of \cref{d1-distr}, a standard-basis measurement of $G\ket{\zs}$ is distributed identically to $(B \wedge X_j)_j$, where the $X_j$ are independent $\bern(p_j)$ random variables conditioned on $(X_j)_j$ not being the all-zeros string, and $B \sim \bern\left(4\prod_j(1-p_j) - 4\prod_j(1-p_j)^2\right)$ is independent of $(X_j)_j$.

Let $R = (R_j)_j$ where each $R_j$ is independently 1 with probability $1-p_j$ and uniform random on $[0,1)$ with probability $p_j$. Then $(X_j)_j$ is distributed identically to $(\Ind{R_j < 1})_j$ conditioned on $\min R < 1$. Let $\argmin R$ be a value of $j$ such that $R_j = \min R$, and note that if we condition on $\min R < 1$ then $\argmin R$ is unique with probability 1. To sample $R$ conditioned on $\min R < 1$, one may first sample $J = \argmin R$ conditioned on $\min R < 1$, next sample $\mu = \min R$ conditioned on $R_J = \min R < 1$, and finally, for all $j \neq J$, independently sample $R_j$ conditioned on $R_j > \mu$.

Rather than sampling $(\argmin R \mid \min R < 1)$ (i.e.\ $\argmin R$ conditioned on $\min R < 1$) directly, we may do so as follows. Identifying $C$ with the function from boolean strings to boolean strings that it computes, say that the $j$'th input bit ``influences" the $k$'th output bit if there exist strings $x,y$ differing only in position $j$ such that $C\ket x$ and $C\ket y$ differ in position $k$. Recall that no input bit influences more than $2^d$ output bits. Let $\tau^{(G)}$ be the (non-random) tree constructed in the following two steps:
\begin{itemize}[leftmargin=*]
	\renewcommand\labelitemi{--}
	\item Start with a rooted binary tree with $\binom n {2^d}$ leaves and depth $\ceil{\log \binom n {2^d}}$, and identify each leaf with a distinct set of $2^d$ targets of $CL$.
	\item Then, for each qubit $v$ acted on by $G$, for some set $u$ of $2^d$ targets including all of the targets influenced by $v$, add the node $v$ and edge $(u,v)$ to the tree.
\end{itemize}
For each non-leaf node $u$ in $\tau^{(G)}$ such that $(\argmin R \mid \min R < 1)$ is descended from $u$ with nonzero probability, independently ``highlight" a random edge from $u$ to one of its children, where the probability of highlighting an edge $(u,v)$ equals the probability that $(\argmin R \mid \min R < 1)$ is descended from $v$ divided by the probability that $(\argmin R \mid \min R < 1)$ is descended from $u$. Then there is a unique root-to-leaf path consisting only of highlighted edges, and the leaf at the end of this path is distributed identically to $(\argmin R \mid \min R < 1)$.

Altogether this implies the following procedure for simulating a standard-basis measurement of $G\ket{\zs}$. If $G$ acts on a single qubit then simply output $Y^{(G)} \sim \bern(|\bra1G\ket0|^2)$. Otherwise, first sample the following random variables, all independently:
\begin{itemize}[leftmargin=*]
	\renewcommand\labelitemi{--}
	\item Sample $B^{(G)} \sim \bern\left(4\prod_j(1-p_j^{(G)}) - 4\prod_j(1-p_j^{(G)})^2\right)$.
	\item Highlight random edges in $\tau^{(G)}$, in the manner described above.
	\item For all $j$, sample $M_j^{(G)}$ from the distribution of $\min R$ conditioned on $R_j = \min R < 1$;
	\item For all $j$, sample $S_j^{(G)}$ from the uniform distribution on $[0,1]$.
\end{itemize}
Then let $J^{(G)}$ be the leaf in the root-to-leaf path consisting of highlighted edges in $\tau^{(G)}$, and output
\begin{equation*}
\left(\left(B^{(G)} = 1\right) \wedge \left(\left(J^{(G)} = j\right) \vee \left(S_j^{(G)} \le P\left(R_j < 1 \mid R_j > M^{(G)}_{J^{(G)}}\right)\right)\right)\right)_j.
\end{equation*}

For $0 \le k \le \ceil{\log \binom n {2^d}}$ let $E_k^{(G)}$ be the set of highlighted edges between nodes at depths $k$ and $k+1$ in $\tau^{(G)}$, where we define the root to have depth 0, children of the root to have depth 1, and so on. Note that $(E_k^{(G)})_k$ is a partition of the set of highlighted edges in $\tau^{(G)}$. Let $Y = (Y^{(G)})_G, B = (B^{(G)})_G, E_k = (E_k^{(G)})_G, M = (M_j^{(G)})_{j,G}, S = (S_j^{(G)})_{j,G}$.

Recall that $s$ is defined as the size of $CL$, and assume (without loss of generality, given the theorem we are proving) that $s \ge \exp\left(\sqrt{n/(4^d \log n)}\right)$. Since $\Ex[B^{(G)}] \le 4\exp\left(-\sum_j p^{(G)}_j\right)$ for all $G$, we may assume that $\max_G \sum_j p^{(G)}_j \le 2 \ln s$, by the same reasoning as in \cref{mc-lb-sec}. Call $S$ ``good" if $|\{j: S^{(G)}_j \le p^{(G)}_j\}| \le c \log s$ for all $G$, where $c$ is an appropriately large universal constant. As in \cref{mc-lb-sec}, by a Chernoff bound, the probability that $S$ fails to be good is at most $s^{-\Omega(1)}$. For all $G$,
\begin{equation*}
P\left(R_j < 1 \mid R_j > M^{(G)}_{J^{(G)}}\right) \le
P(R_j < 1) = p_j^{(G)}
\end{equation*}
(where the definition of $R_j$ here implicitly depends on $G$), so if $S$ is fixed and good then there are at most $O(\log s)$ indices $j$ such that the boolean value $S_j^{(G)} \le P\left(R_j < 1 \mid R_j > M^{(G)}_{J^{(G)}}\right)$ is not identically false.

\begin{rst*}[{\cref{glss}}]
	Let $(Y_1, \dotsc, Y_n)$ be a read-$r$ family, and let $\mu = \Ex\left[\sum_{j=1}^n Y_j\right]$. Then for all $\eps\ge0$,
	\begin{align*}
		P(Y_1 + \dotsb + Y_n \ge \mu + \eps n) &\le \exp(-2\eps^2 n/r), \\
		P(Y_1 + \dotsb + Y_n \le \mu - \eps n) &\le \exp(-2\eps^2 n/r).
	\end{align*}
\end{rst*}

Let $V = (Y, B, (E_k)_k, M, S)$, and note that the targets of a standard-basis measurement of $CL\ket{\zs}$ are a read-$O(2^d \log s)$ family if $V\backslash Y$ is fixed, or if $V\backslash B$ is fixed and $S$ is good, or if $V\backslash M$ is fixed and $S$ is good, or if $V\backslash S$ is fixed. (We will consider the case where $V\backslash E_k$ is fixed and $S$ is good shortly.) Let $W = W(V)$ be the Hamming weight of a standard-basis measurement of the targets of $CL\ket\zs$. Then, by \cref{nek-lb}, \cref{glss}, and an argument involving the triangle inequality\footnote{In slightly greater detail: sample an independent copy $V^\prime$ of $V$, use a hybrid argument to show that $|W(V) - W(V^\prime)|$ is small with high probability, and then fix a value of $V^\prime$ such that $W(V)$ is concentrated around $W(V^\prime)$.} similar to that in \cref{mc-lb-sec}, it suffices to prove the following:
\begin{clm}\label{clm}
	Fix $V \backslash (E_k)_k$ such that $S$ is good, and let $\mu = \Ex[W \mid V \backslash (E_k)_k]$. Then for all $\eps > 0$,
	\begin{align*}
	P(W \ge \mu + \eps n) \le \exp(-\Omega(\eps^2 n/(4^d \log s \log n))), \\
	P(W \le \mu - \eps n) \le \exp(-\Omega(\eps^2 n/(4^d \log s \log n))),
	\end{align*}
	where the probabilities are over $(E_k)_k$.
\end{clm}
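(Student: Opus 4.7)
The plan is to apply \cref{glss} directly, viewing the target bits as a read-$r$ family in the independent random variables $X_{k,G} \coloneqq E_k^{(G)}$, indexed by tree depth $k$ and multi-qubit gate $G$ in $L$. These variables are mutually independent by construction of the simulation procedure, and, conditional on $V \setminus (E_k)_k$, the Hamming weight $W$ is a deterministic function of $(X_{k,G})_{k,G}$. The heart of the argument is to bound the read parameter $r$ by $O(2^d \log s)$: I would show that each $X_{k,G_0}$ influences at most that many target bits.

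For this I would first observe that varying $X_{k,G_0}$ alone toggles $J^{(G_0)}$ between only two possible values $j'$ and $j''$: the path in $\tau^{(G_0)}$ from the root to depth $k$ is determined by $(X_{k',G_0})_{k'<k}$ and so reaches a fixed node $u^*$; then $X_{k,G_0}$ selects one of the two children of $u^*$; and the rest of the path to a leaf is forced by $(X_{k',G_0})_{k'>k}$. Next I would analyze the bit at position $p$ of $G_0$'s output, which equals $B^{(G_0)} \wedge ((J^{(G_0)} = p) \vee (S_p^{(G_0)} \le P(R_p < 1 \mid R_p > M_{J^{(G_0)}}^{(G_0)})))$. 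A direct calculation gives $P(R_p < 1 \mid R_p > m) \le p_p^{(G_0)}$ for every $m \in [0,1)$, so any position $p \notin \{j',j''\}$ with $S_p^{(G_0)} > p_p^{(G_0)}$ yields the same bit value under both choices of $J^{(G_0)}$. By the goodness of $S$, the positions with $S_p^{(G_0)} \le p_p^{(G_0)}$ number at most $c\log s$, so in total at most $2 + c\log s = O(\log s)$ positions of $G_0$'s output are sensitive to $X_{k,G_0}$; since no input bit of $C$ influences more than $2^d$ output bits of $CL$, at most $O(2^d \log s)$ target bits depend on $X_{k,G_0}$.

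Applying \cref{glss} with this value of $r$ then yields
\[ P(|W - \mu| \ge \eps n) \le 2\exp(-\Omega(\eps^2 n / (2^d \log s))), \]
which is in fact strictly stronger than what the claim requires (and implies it, since $2^d \log s \le 4^d \log s \log n$ for $d \ge 1, n \ge 2$). The main subtlety is the $M_{J^{(G_0)}}^{(G_0)}$ dependence of the threshold in the second disjunct: it is only the uniform bound $P(R_p < 1 \mid R_p > m) \le p_p^{(G_0)}$ that lets one safely discard the positions with $S_p^{(G_0)} > p_p^{(G_0)}$, and this is precisely where the definition of ``good'' $S$ pays off.
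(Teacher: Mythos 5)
Your overall plan---apply \cref{glss} directly to $W$ viewed as a function of the independent variables $(E_k^{(G)})_{k,G}$---does not work, and the precise point of failure is your analysis of the read parameter. The definition of a read-$r$ family requires \emph{fixed} (deterministic) index sets $S_1,\dotsc,S_n$ with $Y_j = f_j((X_i)_{i\in S_j})$, so the set of target bits that ``read'' a given $E_{k}^{(G_0)}$ must contain every target bit that has any functional dependence on $E_{k}^{(G_0)}$, over \emph{all} settings of the other variables. Your argument that $J^{(G_0)}$ toggles between just two leaves $j',j''$ hinges on fixing $(E_{k'}^{(G_0)})_{k'\neq k}$ first, so that the depth-$k$ node $u^*$ and the forced continuation below depth $k+1$ are both pinned down; but in the unconditional read-$r$ accounting those values are themselves random, so $u^*$ can be \emph{any} node at depth $k$ and the continuation can lead to \emph{any} leaf below it. Concretely, for almost every output position $p$ of $G_0$ there is some setting of $(E_{k'}^{(G_0)})_{k'\neq k}$ under which the indicator $(J^{(G_0)}=p)$ flips as $E_k^{(G_0)}$ varies, so the minimal dependency set for the $p$-th output of $G_0$ already contains $(k,G_0)$ for essentially all $k$. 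Pushing this through $C$, the number of targets reading $E_k^{(G_0)}$ is bounded only by the forward light cone of $G_0$, which can be $\Theta(n)$, not $O(2^d\log s)$. So $W$ is not a read-$O(2^d\log s)$ family in $(E_k^{(G)})_{k,G}$, and the bound you state---which is strictly stronger than what \cref{clm} asserts and would eliminate the $2^d\log n$ factor---should itself have been a warning flag.

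The paper's proof is designed precisely to dodge this issue: it conditions on $V\setminus E_k$ for one depth $k$ at a time, so that $u^*$ and the downward continuations really are deterministic, making the two-values-of-$J^{(G_0)}$ argument valid (with the slight extra observation that at the final layer $u^*$ may have many qubit-children, but they all influence a common set of at most $2^d$ targets); this gives a read-$O(2^d\log s)$ family in $(E_k^{(G)})_G$ alone, hence $W - \Ex[W\mid V\setminus E_k] \in \gauss(O(n2^d\log s))$ by \cref{glss,gauss-i}. It then composes these conditional sub-Gaussian bounds across the $\Theta(\log\binom{n}{2^d}) \le 2^d\log n$ depths via the martingale lemma \cref{gauss-mgf}, and this composition is exactly where the extra $2^d\log n$ in the variance parameter---and hence the $4^d\log s\log n$ in the exponent---comes from.
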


Observe that for all $k$, if $V\backslash E_k$ is fixed and $S$ is good then the targets of a standard-basis measurement of $CL\ket{\zs}$ are a read-$O(2^d\log s)$ family. Before proving \cref{clm}, we remark that a similar statement\footnote{$P(|W - W^\prime| \ge \eps n) \le O(2^d \log n) \cdot \exp(-\Omega(\eps^2n/(8^d\log s\log^2 n)))$, where $W^\prime$ is an independent copy of $W$. This may be proved by writing $W-W^\prime$ as a sum of $\Theta(\log \binom n {2^d})$ terms that are each required to be of magnitude $O(\eps n/\log \binom n {2^d})$.} with weaker parameters can be proved using another similar argument involving the triangle inequality.

Let $\gauss(v)$ be the set of real-valued random variables $X$ such that $\Ex [e^{\lambda X}] \le \exp(\lambda^2 v/2)$ for all $\lambda \in \R$. (This definition is motivated by the fact that if $X$ is Gaussian with mean 0 and variance $v$ then $\Ex [e^{\lambda X}] = \exp(\lambda^2 v/2)$ for all $\lambda \in \R$.) Boucheron, Lugosi and Massart~\cite{BLM13} noted that random variables obeying ``sub-Gaussian" tail bounds also have sub-Gaussian moment generating functions, and vice versa:

\begin{lem}[{\cite[Chapter 2.3]{BLM13}}]
	Let $X$ be a real-valued random variable such that $\Ex[X] = 0$.
	\begin{enumerate}[label=(\roman*), ref={\thelem(\roman*)}]
		\item If $\max(P(X>t), P(X<-t)) \le \exp(-t^2/(2v))$ for all $t>0$, then $X \in \gauss(16v)$. \label[lem]{gauss-i}
		\item If $X \in \gauss(v)$, then $\max(P(X>t), P(X<-t)) \le \exp(-t^2/(2v))$ for all $t>0$. \label[lem]{gauss-ii}
	\end{enumerate}
\end{lem}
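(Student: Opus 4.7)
Part (ii) is the standard Chernoff bound. The plan is: for any $\lambda > 0$, apply Markov's inequality to the nonnegative random variable $e^{\lambda X}$ to obtain
\begin{equation*}
P(X > t) = P\bigl(e^{\lambda X} \ge e^{\lambda t}\bigr) \le e^{-\lambda t} \Ex[e^{\lambda X}] \le \exp\bigl(\lambda^2 v/2 - \lambda t\bigr),
\end{equation*}
and then minimize the right-hand side in $\lambda$ by choosing $\lambda = t/v$, giving $P(X > t) \le \exp(-t^2/(2v))$. The left tail is handled by observing that $-X$ is also mean zero and in $\gauss(v)$, since $\Ex[e^{\lambda(-X)}] = \Ex[e^{(-\lambda)X}] \le \exp(\lambda^2 v/2)$, and applying the same argument to $-X$.

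Part (i) is the harder direction: we must convert two-sided sub-Gaussian tail bounds into a bound on the MGF. The plan is to combine the two tail bounds into $P(|X| > t) \le 2\exp(-t^2/(2v))$ and to expand the MGF as a power series, using $\Ex[X] = 0$ to kill the linear term:
\begin{equation*}
\Ex[e^{\lambda X}] = 1 + \sum_{k \ge 2} \frac{\lambda^k \Ex[X^k]}{k!} \le 1 + \sum_{k \ge 2} \frac{|\lambda|^k \Ex[|X|^k]}{k!}.
\end{equation*}
The tail bound converts into a moment bound via the identity $\Ex[|X|^k] = \int_0^\infty k t^{k-1} P(|X| > t)\, dt$, and the substitution $u = t^2/(2v)$ yields $\Ex[|X|^k] \le k (2v)^{k/2} \Gamma(k/2)$. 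Substituting this back in, splitting the sum into even $k = 2m$ (where $\Gamma(m) = (m-1)!$) and odd $k$ (where Stirling-type estimates $\Gamma(m+1/2) \le m!$ up to constants suffice), each subseries will be compared termwise to the power series of $\exp(c v \lambda^2)$, and the constant $16$ in the conclusion absorbs the slack from these estimates.

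The main technical obstacle is the bookkeeping of the constant: optimizing the argument more carefully gives constants strictly smaller than $16$, but we only need $16$, so the plan is to take whatever bounds come out most painlessly and not chase the optimal constant. An alternative plan that sidesteps some of the moment accounting is a symmetrization argument: let $X'$ be an independent copy of $X$, so by Jensen's inequality $\Ex[e^{-\lambda X'}] \ge e^{-\lambda \Ex[X']} = 1$, and hence $\Ex[e^{\lambda X}] \le \Ex[e^{\lambda(X-X')}]$ with $X - X'$ symmetric; one may then write the MGF of $X - X'$ as $\Ex[\cosh(\lambda(X-X'))]$, expand $\cosh$ as a power series containing only even terms, and bound the even moments of $X - X'$ using $P(|X - X'| > t) \le 2 P(|X| > t/2) \le 4\exp(-t^2/(8v))$. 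This avoids the odd-moment bookkeeping at the price of a worse but still absolute constant, which is tolerable since the statement targets $16v$ rather than the sharp value.
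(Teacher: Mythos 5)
The paper does not prove this lemma; it is taken as a black box from Boucheron, Lugosi and Massart (Chapter 2.3), so there is no paper argument to compare against. Assessing the proposal on its own terms:

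Part (ii) is a correct and complete sketch of the standard Chernoff argument.

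For part (i), your main plan (the moment expansion) is the right one and can be pushed through. The moment bound $\Ex[|X|^k] \le k(2v)^{k/2}\Gamma(k/2)$ is correct, and after splitting into parities one gets $\Ex[e^{\lambda X}] \le 2e^{v\lambda^2} - 1 + \sqrt{2\pi v}\,|\lambda|\,(e^{v\lambda^2/2}-1)$, which is indeed $\le e^{8v\lambda^2}$ for all $\lambda$ (one checks $g(x)=e^{8x}-2e^x+1-\sqrt{2\pi x}(e^{x/2}-1)$ satisfies $g(0)=0$ and $g'\ge 0$ on $[0,\infty)$). But that final inequality is exactly where the constant $16$ lives, and you wave it off with ``the constant $16$ absorbs the slack.'' That phrase is the entire content of part (i): the difficulty is not the algebra but verifying that the explicit constant survives the crude $\Gamma$ and factorial bounds. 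Without that check the proof is incomplete.

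Your symmetrization alternative, as sketched, does \emph{not} prove the lemma as stated. With $Y = X - X'$ you get $P(|Y|>t) \le 4e^{-t^2/(8v)}$ and from it the crude moment bound $\Ex[Y^{2m}] \le 4(8v)^m m!$, hence $\Ex[\cosh(\lambda Y)] \le 4e^{4v\lambda^2}-3$. Near $\lambda = 0$ this is $1 + 16v\lambda^2 + O(\lambda^4)$, while $e^{\lambda^2(16v)/2} = e^{8v\lambda^2}$ is $1 + 8v\lambda^2 + O(\lambda^4)$; so $4e^{4v\lambda^2}-3 > e^{8v\lambda^2}$ for small nonzero $\lambda$ and the best you can conclude this way is $X \in \gauss(32v)$. (The culprit is the $m=1$ term: the crude formula bounds $\Ex[Y^2]$ by $32v$, whereas it is actually $\le 8v$.) Since the lemma has the explicit constant $16v$, the remark that a worse absolute constant is ``tolerable'' is not justified; a proof of the lemma must either carry out the direct moment estimate carefully or replace the crude $m=1$ bound in the symmetrization by the sharp one.
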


The following lemma is basically implicit in the martingale proof of McDiarmid's inequality~\cite{BLM13}, and is proved below for completeness:

\begin{lem}\label{gauss-mgf}
	Let $X_1, \dotsc, X_m$ be independent random variables, let $v_1, \dotsc, v_m > 0$, and let $f$ be a function such that
	\begin{equation*}
	f(x_1, \dotsc, x_{i-1}, X_i, x_{i+1}, \dotsc, x_m) - \Ex f(x_1, \dotsc, x_{i-1}, X_i, x_{i+1}, \dotsc, x_m)  \in \gauss(v_i)
	\end{equation*}
	for all $i\in[m]$ and fixed $x_1, \dotsc, x_{i-1}, x_{i+1}, \dotsc, x_m$. Then,
	\begin{equation*}
	f(X_1, \dotsc, X_m) - \Ex f(X_1, \dotsc, X_m)  \in \gauss\left(\sum_{i=1}^m v_i\right).
	\end{equation*}
\end{lem}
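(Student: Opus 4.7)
The natural approach is the standard Doob-martingale proof of McDiarmid's inequality, adapted to the sub-Gaussian moment generating function (MGF) framework. Define the Doob martingale differences
\begin{equation*}
D_i = \Ex[f(X_1, \dotsc, X_m) \mid X_1, \dotsc, X_i] - \Ex[f(X_1, \dotsc, X_m) \mid X_1, \dotsc, X_{i-1}],
\end{equation*}
so that $f(X_1,\dotsc,X_m) - \Ex f(X_1,\dotsc,X_m) = \sum_{i=1}^m D_i$ by telescoping. The goal is to control the MGF of the sum, and this reduces by the tower property to controlling the conditional MGF of each $D_i$ given the history $X_1, \dotsc, X_{i-1}$.

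The key technical step is to show that, for every fixed $x_1, \dotsc, x_{i-1}$, the conditional random variable $D_i$ lies in $\gauss(v_i)$. Given such a fixed history, write $D_i = g(X_i) - \Ex g(X_i)$, where $g(y) \coloneqq \Ex_{X_{i+1},\dotsc,X_m} f(x_1,\dotsc,x_{i-1},y,X_{i+1},\dotsc,X_m)$ (this uses independence of the $X_j$). The hypothesis furnishes, for each fixed tuple $(x_{i+1},\dotsc,x_m)$, the MGF bound
\begin{equation*}
\Ex_{X_i} \exp\bigl(\lambda(f(x_1,\dotsc,X_i,\dotsc,x_m) - \Ex_{X_i} f(x_1,\dotsc,X_i,\dotsc,x_m))\bigr) \le \exp(\lambda^2 v_i/2).
\end{equation*}
To pass from fixed futures to averaged futures, apply Jensen's inequality to the convex function $\exp(\lambda\,\cdot)$: this yields $\exp(\lambda(g(X_i) - \Ex g(X_i))) \le \Ex_{X_{i+1},\dotsc,X_m} \exp(\lambda(f - \Ex_{X_i} f))$ pointwise in $X_i$, and then Fubini lets us swap the order of expectations so that the hypothesis can be applied inside, giving $\Ex_{X_i} \exp(\lambda(g(X_i)-\Ex g(X_i))) \le \exp(\lambda^2 v_i/2)$. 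Hence $D_i \in \gauss(v_i)$ conditionally on $X_1,\dotsc,X_{i-1}$, for every realization of that history.

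With this in hand, the lemma follows by iterated conditioning: writing $\Ex \exp(\lambda(f - \Ex f)) = \Ex \prod_{i=1}^m \exp(\lambda D_i)$ and conditioning first on $X_1,\dotsc,X_{m-1}$, the factor $\Ex[\exp(\lambda D_m) \mid X_1, \dotsc, X_{m-1}]$ is at most $\exp(\lambda^2 v_m/2)$ by the previous step, uniformly in the history; pulling it out and iterating gives the bound $\exp\bigl(\lambda^2 \sum_i v_i/2\bigr)$, which is exactly $f - \Ex f \in \gauss(\sum_i v_i)$.

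The main obstacle is the Jensen/Fubini step in the middle paragraph: one must be careful that averaging over the ``future'' $X_{i+1},\dotsc,X_m$ does not destroy the sub-Gaussian MGF bound. The convexity of $\exp$ and independence of the $X_j$ handle this cleanly, and everything else (the martingale decomposition and the tower-property iteration) is routine.
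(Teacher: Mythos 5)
Your proposal is correct and is essentially the same argument the paper gives: both use the Doob martingale decomposition, then apply Jensen's inequality with respect to the ``future'' variables $X_{i+1},\dotsc,X_m$ to reduce the conditional MGF of the $i$-th increment to the hypothesis, and then iterate via the tower property. The paper phrases the Jensen/Fubini step in the language of conditional expectations (writing $E_i - E_{i-1} = \Ex[Y - E_{\backslash i} \mid X_{[i]}]$ and bounding via $\sup_{X_{\backslash i}}$), whereas you write out the same computation with explicit integrals over the independent variables; the two are interchangeable.
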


\begin{rmk*}
	\cref{gauss-mgf} is tight when $(X_i)_i$ are independent Gaussians and $f$ is the summation function.
\end{rmk*}

\begin{proof}[Proof of \cref{clm} assuming \cref{gauss-mgf}]
	It follows from \cref{glss,gauss-i} that $W - \Ex[W \mid V \backslash E_k] \in \gauss(O(n2^d\log s))$ for all $k$. Since $\log \binom n {2^d} \le 2^d \log n$, it then follows from \cref{gauss-mgf} that $W - \Ex[W \mid V \backslash (E_k)_k] \in \gauss(O(n4^d \log n \log s))$. Finally, apply \cref{gauss-ii}.
\end{proof}

\begin{proof}[Proof of \cref{gauss-mgf}]
	Let $Y = f(X_1, \dotsc, X_m)$, and for $i \in [m]$ let $X_{[i]} = (X_j)_{j \le i}$ and $E_i = \Ex[Y \mid X_{[i]}]$. Fix $\lambda \in \R$, and let $\varphi(x) = e^{\lambda x}$. We will prove that $\Ex \varphi(E_i - E_0) \le \exp(\lambda^2 v_i/2) \Ex \varphi(E_{i-1} - E_0)$ for all $i \in [m]$, from which it follows by induction that
	\begin{equation*}
	\Ex \varphi(Y - \Ex Y) = \Ex \varphi (E_m - E_0) \le \exp\left(\lambda^2 \sum_i v_i/2\right),
	\end{equation*}
	as desired. Since
	\begin{equation*}
	\Ex \varphi(E_i - E_0) =
	\Ex \Ex[\varphi(E_i - E_0) \mid X_{[i-1]}] =
	\Ex[\varphi(E_{i-1} - E_0) \Ex[\varphi(E_i - E_{i-1}) \mid X_{[i-1]}]],
	\end{equation*}
	it suffices to prove that $\Ex[\varphi(E_i - E_{i-1}) \mid X_{[i-1]}] \le \exp(\lambda^2 v_i/2)$. Let $X_{\backslash i} = (X_j)_{j \neq i}$ and $E_{\backslash i} = \Ex[Y \mid X_{\backslash i}]$. By Jensen's inequality,
	\begin{equation*}
	\varphi(E_i - E_{i-1}) =
	\varphi(\Ex[Y - E_{\backslash i} \mid X_{[i]}]) \le
	\Ex[\varphi(Y - E_{\backslash i}) \mid X_{[i]}],
	\end{equation*}
	so
	\begin{equation*}
	\Ex[\varphi(E_i - E_{i-1}) \mid X_{[i-1]}] \le
	\Ex[\varphi(Y - E_{\backslash i}) \mid X_{[i-1]}] \le
	\sup_{X_{\backslash i}} \Ex[\varphi(Y - E_{\backslash i}) \mid X_{\backslash i}] \le
	\exp(\lambda^2 v_i/2). \qedhere
	\end{equation*}
\end{proof}

\end{document}